\documentclass[a4paper,singlecolor,11pt]{article}

\usepackage[margin=1in]{geometry}

\usepackage{lmodern}
\usepackage{amssymb}
\usepackage{amsmath}
\usepackage{graphicx}
\usepackage{subcaption}
\usepackage{makeidx}
\usepackage{multicol}

\usepackage{changepage}

\frenchspacing
\tolerance=5000

\makeindex

\usepackage{version}

\newcommand{\framework}{\textsc{ParButterfly}\xspace}

\usepackage{xspace}
\usepackage{algorithmicx}
\usepackage{graphicx,url}
\usepackage{multirow,multicol,mathtools}
\usepackage[noend]{algpseudocode}
\usepackage{balance}
\usepackage{paralist}
\usepackage{booktabs}
\usepackage{color}
\usepackage{rotating}
\usepackage{enumitem}
\usepackage[most]{tcolorbox}
\usepackage{amsthm,amssymb,stmaryrd}

\newtheorem{theorem}{Theorem}[section]

\newtheorem{lemma}[theorem]{Lemma}

\newcommand{\myparagraph}[1]{\medskip\noindent {\bf #1.}}

\newcommand{\whp}[1]{\emph{whp}}
\newcommand{\defn}[1]{{\bf{\emph{#1}}}}

\newcommand{\rank}[1]{{\small\textsf{rank}(#1)}}

\usepackage[ruled]{algorithm}

\newcommand{\BigO}[1]{\ensuremath{\operatorname{O}\bigl(#1\bigr)}}
\newcommand{\BigOmega}[1]{\ensuremath{\operatorname{\Omega}\bigl(#1\bigr)}}

\newcommand{\slfrac}[2]{\left.#1\middle/#2\right.}

\newcommand{\algname}[1]{\textnormal{\textsc{#1}}}

\algblock{ParFor}{EndParFor}
\algnewcommand\algorithmicparfor{\textbf{parfor}}
\algnewcommand\algorithmicpardo{\textbf{do}}
\algnewcommand\algorithmicendparfor{}
\algrenewtext{ParFor}[1]{\algorithmicparfor\ #1\ \algorithmicpardo}
\algrenewtext{EndParFor}{\algorithmicendparfor}
\algtext*{EndParFor}

\algdef{SE}[DOWHILE]{Do}{doWhile}{\algorithmicdo}[1]{\algorithmicwhile\ #1}

\begin{document}

\title{\Large Parallel Algorithms for Butterfly Computations}
\author{Jessica Shi \\ MIT CSAIL \\ jeshi@mit.edu
\and Julian Shun \\ MIT CSAIL \\ jshun@mit.edu}

\date{}

\maketitle

\begin{abstract} 
  
Butterflies are the smallest non-trivial subgraph in bipartite graphs,
and therefore having efficient computations for analyzing them is
crucial to improving the quality of certain applications on bipartite
graphs. In this paper, we design a framework called \framework that
contains new parallel algorithms for the following problems on
processing butterflies: global counting, per-vertex counting, per-edge
counting, tip decomposition (vertex peeling), and wing decomposition
(edge peeling). The main component of these algorithms is aggregating
wedges incident on subsets of vertices, and our framework supports
different methods for wedge aggregation, including sorting, hashing,
histogramming, and batching. 
In addition, \framework supports different ways of ranking the
vertices to speed up counting, including side ordering, approximate
and exact degree ordering, and approximate and exact complement
coreness ordering.  For counting, \framework also supports both exact
computation as well as approximate computation via graph
sparsification.  We prove strong theoretical guarantees on the work
and span of the algorithms in \framework.

We perform a comprehensive evaluation of all of the algorithms
in \framework on a collection of real-world bipartite
graphs using a 48-core machine. Our counting algorithms obtain
significant parallel speedup, outperforming the fastest sequential
algorithms by up to 13.6x with a self-relative speedup of up to 38.5x.
Compared to general subgraph counting solutions, we are orders of
magnitude faster. Our peeling algorithms achieve self-relative
speedups of up to 10.7x and outperform the fastest sequential baseline
by up to several orders of magnitude.

This is an extended version of the paper of the same name that appeared
in the \emph{SIAM Symposium on Algorithmic Principles of Computer
  Systems, 2020}~\cite{Shi2020}.

\end{abstract}

\section{Introduction}\label{sec:intro}

A fundamental problem in large-scale network analysis is finding and
enumerating basic graph motifs. Graph motifs that represent the
building blocks of certain networks can reveal the underlying
structures of these networks. Importantly, triangles are core
substructures in unipartite graphs, and indeed, triangle counting is a
key metric that is widely applicable in areas including social network
analysis~\cite{Newman03}, spam and fraud detection~\cite{BeBoCaGi08},
and link classification and recommendation~\cite{TsDrMiKoFa11}. 
 However, many real-world graphs are bipartite and model the
 affiliations between two groups.  For example,
 bipartite graphs are used to represent peer-to-peer exchange networks
 (linking peers to the data they request), group membership
 networks (e.g., linking actors to movies they acted in),
 recommendation systems (linking users to items they rated),
 factor graphs for error-correcting codes, and
 hypergraphs~\cite{BoEv97, LaMaVe08}.  Bipartite graphs
 contain no triangles; the smallest non-trivial subgraph is a
 \defn{butterfly} (also known as rectangles), which is a $(2,
 2)$-biclique (containing two vertices on each side and all four
 possible edges among them), and thus having efficient algorithms
 for counting butterflies is crucial for applications on bipartite
 graphs~\cite{WaFuCh14,AkKoPi17,SaSaTi18}. Notably, butterfly counting 
 has applications in link spam detection \cite{GiKuTo05} 
 and document clustering \cite{Dhillon01}. Moreover, butterfly
 counting naturally lends itself to finding dense subgraph structures
 in bipartite networks. Zou \cite{Zou16} and Sariy\"{u}ce and
 Pinar~\cite{SaPi18} developed peeling algorithms to hierarchically
 discover dense subgraphs, similar to the $k$-core decomposition for
 unipartite graphs~\cite{Seidman83, MaBe83}.  An example bipartite
 graph and its butterflies is shown in
 Figure~\ref{img_butterfly}.

There has been recent work on designing efficient sequential
algorithms for butterfly counting and peeling~\cite{ChNi85,
  WaFuCh14,Zhu2018,Zou16, SaSaTi18, SaPi18, wang2020efficient}. However, given the high
computational requirements of butterfly computations, it is natural to
study whether we can obtain performance improvements using parallel
machines. This paper presents a framework for butterfly computations,
called \framework, that enables us to obtain new parallel algorithms
for butterfly counting and peeling. \framework is a modular framework
that enables us to easily experiment with many variations of our
algorithms.  We not only show that our algorithms are efficient in
practice, but also prove strong theoretical bounds on their work and
span. 
Given that all real-world bipartite graphs fit on a multicore machine,
we design parallel algorithms for this setting.

\begin{figure}[!t]
\centering
\includegraphics[width=0.3\textwidth, page=3]{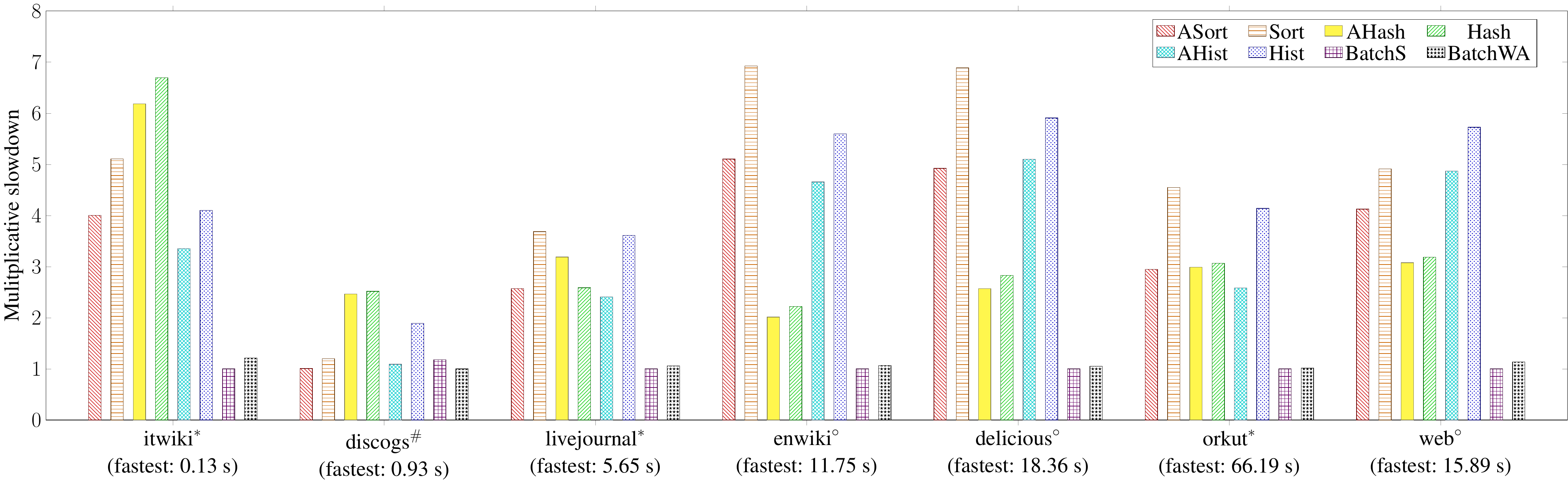}
\caption[caption]{The butterflies in this graph are $\{ u_1, v_1, u_2, v_2\}$, $\{ u_1, v_1, u_2, v_3\}$,  and $\{ u_1, v_2, u_2, v_3\}$. The red, blue, and green edges each produce a wedge ($\{u_1, v_1, u_2\}$, $\{ u_1,v_2,u_2\}$, and $ \{ u_1,v_3,u_2\}$). Note that these wedges all share the same endpoints, namely $u_1$ and $u_2$. Thus, any combination of two of these wedges forms a butterfly. For example, $\{u_1, v_1, u_2\}$ and $\{ u_1,v_2,u_2\}$ combine to form a butterfly $\{ u_1, v_1, u_2, v_2\}$. However, the dashed edges produce another wedge, $\{ u_2, v_3, u_3 \}$, which has different endpoints, namely $u_2$ and $u_3$. This wedge cannot be combined with any of the previous wedges to form a butterfly. }
\label{img_butterfly}
\end{figure}

For butterfly counting, the main procedure involves finding wedges
($2$-paths) and combining them to count butterflies. See
Figure~\ref{img_butterfly} for an example of wedges. In particular, we
want to find all wedges originating from each vertex, and then
aggregate the counts of wedges incident to every distinct pair of
vertices forming the endpoints of the wedge. With these counts, we
can obtain global, per-vertex, and per-edge 
butterfly counts. The \framework framework provides
different ways to aggregate wedges in parallel, including 
sorting, hashing, histogramming, and batching.  Also, we can speed
up butterfly counting by ranking vertices and only
considering wedges formed by a particular ordering of the vertices.
\framework supports different parallel ranking methods,
including side-ordering, approximate and exact degree-ordering, and
approximate and exact complement-coreness ordering. These orderings
can be used with any of the aggregation methods.  To
further speed up computations on large graphs, \framework also
supports parallel approximate butterfly counting via graph 
sparsification based on ideas by Sanei-Mehri \emph{et al.}~\cite{SaSaTi18}
for the sequential setting. Furthermore, we integrate into \framework a recently proposed cache optimization for butterfly counting by Wang \emph{et al.}~\cite{Wang2019}.

In addition, \framework provides parallel algorithms for peeling
bipartite networks based on sequential dense subgraph discovery
algorithms developed by Zou~\cite{Zou16} and Sariy\"{u}ce and
Pinar~\cite{SaPi18}. Our peeling algorithms iteratively remove the
vertices (tip decomposition) or edges (wing decomposition) with the
lowest butterfly count until the graph is empty. Each iteration
removes vertices (edges) from the graph in parallel and updates the
butterfly counts of neighboring vertices (edges) using the parallel
wedge aggregation techniques that we developed for counting.  We use a
parallel bucketing data structure by Dhulipala \emph{et
  al.}~\cite{DhBlSh17} and a new parallel Fibonacci heap to
efficiently maintain the butterfly counts.

We prove theoretical bounds showing that some variants of our counting
and peeling algorithms are highly parallel and match the work of the
best sequential algorithm. For a graph $G(V,E)$ with $m$ edges and
arboricity $\alpha$,\footnote{The arboricity of a graph is defined to
  be the minimum number of disjoint forests that a graph can be
  partitioned into.}  \framework gives a counting algorithm that takes
$\BigO{\alpha m}$ expected work, $\BigO{\log m}$ span with high
probability (w.h.p.),\footnote{By ``with high probability'' (w.h.p.),
  we mean that the probability is at least $1 - \slfrac{1}{n^c}$ for
  any constant $c > 0$ for an input of size $n$.} and $\BigO{\min(n^2, \alpha m)}$ additional space.  Moreover, we design a parallel
  Fibonacci heap that improves upon the work bounds for vertex-peeling from Sariy\"{u}ce and Pinar's sequential
algorithms, which take work proportional to the maximum number of
per-vertex butterflies. \framework gives a vertex-peeling
algorithm that takes $\BigO{\min({\normalfont \text{max-b}}_v,\allowbreak\rho_v
  \log n) +\allowbreak\sum_{v\in V}\text{deg}(v)^2}$ expected work, 
$\BigO{\rho_v\log^2 n}$ span w.h.p., and $\BigO{n^2 + \allowbreak {\normalfont \text{max-b}}_v}$ additional space, and an edge-peeling algorithm
that takes $\BigO{\min({\normalfont \text{max-b}}_e,\allowbreak\rho_e \log n)
  +\sum_{(u,v) \in E} \sum_{u' \in N(v)} \min(\text{deg}(u),
  \allowbreak \text{deg}(u'))}$ expected work, $\BigO{\rho_e\log^2
  m}$ span w.h.p., and $\BigO{m + \allowbreak {\normalfont \text{max-b}}_e}$ additional space, where $\text{max-b}_v$ and $\text{max-b}_e$ are the
maximum number of per-vertex and per-edge butterflies and $\rho_v$ and
$\rho_e$ are the number of vertex and edge peeling iterations required
to remove the entire graph. Additionally, given a slightly relaxed work bound, we can improve the space 
bounds in both algorithms; specifically, we have a vertex-peeling algorithm that takes 
$\BigO{\rho_v \log n +\allowbreak\sum_{v\in V}\text{deg}(v)^2}$ expected work, 
$\BigO{\rho_v\log^2 n}$ span w.h.p., and $\BigO{n^2}$ additional space, and we have an edge-peeling 
algorithm that takes $\BigO{\rho_e \log n
  +\sum_{(u,v) \in E} \sum_{u' \in N(v)} \min(\text{deg}(u),
  \allowbreak \text{deg}(u'))}$ expected work, $\BigO{\rho_e\log^2
  m}$ span w.h.p., and $\BigO{m}$ additional space.

Moreover, we demonstrate further work-space tradeoffs with a vertex-peeling algorithm 
that takes $\BigO{\rho_v\log n \allowbreak + b}$ expected work, $\BigO{\rho_v\log^2 n}$ span w.h.p., and 
$\BigO{\alpha m}$ additional space, where $b$ is the total number of butterflies. Similarly, we give an edge-peeling algorithm 
that takes $\BigO{\rho_e\log n \allowbreak + b}$ expected work, $\BigO{\rho_e \log^2 n}$ span w.h.p., and 
$\BigO{\alpha m}$ additional space. We can improve the work complexities to $\BigO{b}$ expected work by allowing $\BigO{\alpha m + \allowbreak {\normalfont \text{max-b}}_v}$ and $\BigO{\alpha m + \allowbreak {\normalfont \text{max-b}}_e}$ additional space for vertex-peeling and edge-peeling respectively.

We present a comprehensive experimental evaluation of all of the different
variants of counting and peeling algorithms in the \framework
framework.  On a 48-core machine, our counting algorithms achieve
self-relative speedups of up to 38.5x and outperform the fastest
sequential baseline by up to 13.6x.  Our peeling algorithms achieve
self-relative speedups of up to 10.7x and due to their improved work
complexities, outperform the fastest sequential baseline by up to
several orders of magnitude.  Compared to PGD~\cite{AhmedNRDW17}, a
state-of-the-art parallel subgraph counting solution that can be used
for butterfly counting as a special case, we are 349.6--5169x faster.
We find that although the sorting, hashing, and histogramming
aggregation approaches achieve better theoretical complexity, batching
usually performs the best in practice due to lower overheads.

In summary, the contributions of this paper are as follows.

\begin{enumerate}[label=(\textbf{\arabic*})]
\item New parallel algorithms for butterfly counting and peeling.
\item A framework \framework with different ranking and wedge aggregation schemes that can be used for parallel butterfly counting and peeling.
\item Strong theoretical bounds on algorithms obtained using  \framework.
\item A comprehensive experimental evaluation on a 48-core machine demonstrating high parallel scalability and fast running times compared to the best sequential baselines, as well as significant speedups over the state-of-the-art parallel subgraph counting solution. 
\end{enumerate}

The \framework code can be found at \url{https://github.com/jeshi96/parbutterfly}.

\section{Preliminaries}\label{sec-par-prim}
\myparagraph{Graph Notation} We take every bipartite graph $G =
(U,V,E)$ to be simple and undirected. For any vertex $v \in U \cup V$,
let $N(v)$ denote the neighborhood of $v$, let $N_2(v)$ denote the
2-hop neighborhood of $v$ (the set of all vertices reachable from $v$
by a path of length $2$), and let $\text{deg}(v)$ denote the degree of
$v$. For added clarity when working with multiple graphs, we let
$N^G(v)$ denote the neighborhood of $v$ in graph $G$ and let
$N^G_2(v)$ denote the 2-hop neighborhood of $v$ in graph $G$. Also, we
use $n = |U|+|V|$ to denote the number of vertices in $G$, and $m =
|E|$ to denote the number of edges in $G$.

A \defn{butterfly} is a set of four vertices $u_1, u_2 \in U$ and
$v_1, v_2 \in V$ with edges $(u_1, v_1)$, $(u_1, v_2)$, $(u_2, v_1)$,
$(u_2, v_2) \in E$. A \defn{wedge} is a set of three vertices $u_1,
u_2 \in U$ and $v \in V$, with edges $(u_1, v), (u_2, v) \in E$. We
call the vertices $u_1, u_2$ \defn{endpoints} and the vertex $v$ the
\defn{center}. Symmetrically, a wedge can also consist of vertices
$v_1, v_2 \in V$ and $u \in U$, with edges $(v_1, u), (v_2, u) \in
E$. We call the vertices $v_1, v_2$ endpoints and the vertex $u$ the
center. We can decompose a butterfly into two wedges that
share the same endpoints but have distinct centers.

The \defn{arboricity} $\alpha$ of a graph is the minimum number of
spanning forests needed to cover the graph. In general, $\alpha$ is
upper bounded by $\BigO{\sqrt{m}}$ and lower bounded by $\BigOmega{1}$
\cite{ChNi85}. Importantly, $\sum_{(u,v) \in E} \text{min}
(\text{deg}(u), \text{deg}(v)) = \BigO{\alpha m}$.

We store our graphs in compressed sparse row (CSR) format, which
requires $\BigO{m+n}$ space. We initially maintain separate offset and
edge arrays for each vertex partition $U$ and $V$, and we assume that
all arrays are stored consecutively in memory.

\myparagraph{Model of Computation} In this paper, we use the work-span
model of parallel computation, with arbitrary forking, to analyze our algorithms.  The
\defn{work} of an algorithm is defined to be the total number of
instructions, and the \defn{span} is defined to be the longest
dependency path~\cite{JaJa92,CLRS}.  We aim for algorithms
to be \defn{work-efficient}, that is, a work complexity that matches
the best-known sequential time complexity. We assume concurrent reads
and writes and atomic adds are supported in $\BigO{1}$ work and span.

\myparagraph{Parallel primitives} We use the following parallel primitives in this paper.
\defn{Prefix sum} takes as
input a sequence $A$ of length $n$, an identity element $\varepsilon$,
and an associative binary operator $\oplus$, and returns the sequence
$B$ of length $n$ where $B[i] = \bigoplus_{j < i} A[j] \oplus
\varepsilon$. 
\defn{Filter} takes as input a sequence $A$ of length
$n$ and a predicate function $f$, and returns the sequence $B$
containing elements $a \in A$ such that $f(a)$ is true, in the same
order that these elements appeared in $A$. Note that filter can be
implemented using prefix sum. Both of these algorithms take $\BigO{n}$
work and $\BigO{\log n}$ span~\cite{JaJa92}.

We also use several parallel primitives in our algorithms for
aggregating equal keys together. \defn{Semisort} groups together equal
keys but makes no guarantee on total order. For a sequence of length
$n$, parallel semisort takes $\BigO{n}$ expected work and $\BigO{\log
  n}$ span with high probability~\cite{GuShSuBl15}.
Additionally, we use parallel hash tables and histograms for
aggregation, which have the same bounds as semisort~\cite{Gil91a,DhBlSh17,
  DhBlSh18,ShBl14}.

\section{\framework Framework}\label{sec-framework}
In this section, we describe the \framework framework and its
components. Section~\ref{sec-count-framework} describes the procedures
for counting butterflies and Section~\ref{sec-framework-peel}
describes the butterfly peeling procedures.  Section~\ref{sec-alg}
goes into more detail on the parallel algorithms that can be plugged
into the framework, as well as their analysis.

\subsection{Counting Framework}\label{sec-count-framework}

\begin{figure}[!t]
\begin{tcolorbox}
  \textbf{\underline{\framework Framework for Counting}}
  \bigskip
  \begin{enumerate}[label=(\textbf{\arabic*}),topsep=1pt,itemsep=0pt,parsep=0pt,leftmargin=10pt]
  \item  \emph{Rank vertices}: Assign a global ordering, $\textsf{rank}$, to the vertices.
\item \emph{Retrieve wedges}: Retrieve a list $W$ of wedges $(x,y,z)$ where $\rank{y} > \rank{x}$ and $\rank{z} > \rank{x}$.
\item \emph{Count wedges}: For every pair of vertices $(x_1,x_2)$, how many distinct wedges share $x_1$ and $x_2$ as endpoints.
\item \emph{Count butterflies}: Use the wedge counts to obtain the global butterfly count,  per-vertex butterfly counts, or per-edge butterfly counts.
\end{enumerate}
\end{tcolorbox}
\caption{\framework Framework for Counting}\label{fig:framework}
%\vspace{-10pt}
\end{figure}

\begin{figure*}[!t]
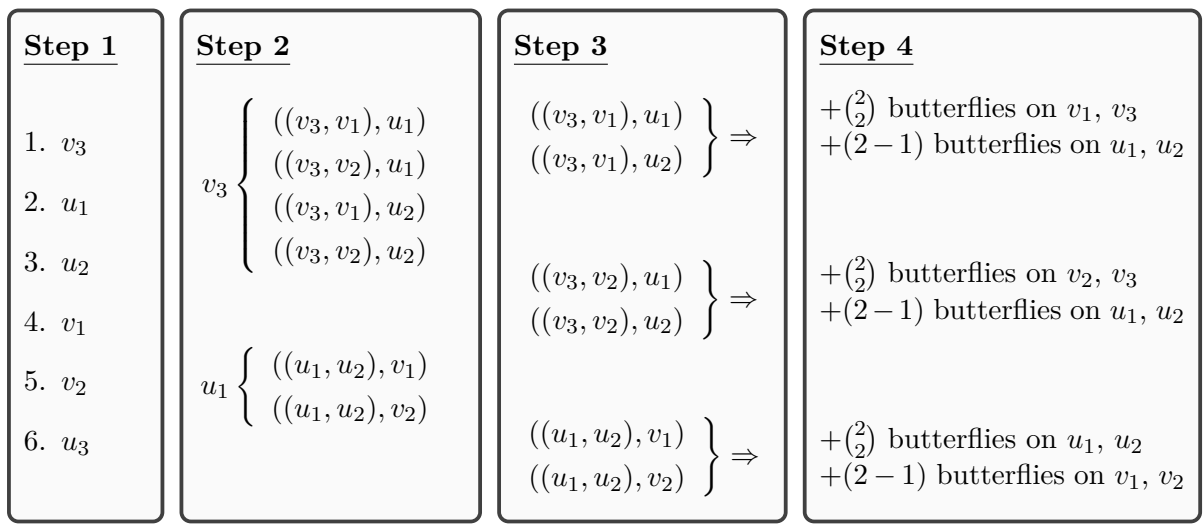

\centering
\begin{tcbraster}[raster columns=4, raster valign=top,
raster force size=false,
raster equal height,
]
%\hspace{-.3cm}
\begin{tcolorbox}[width=.13\linewidth, colback=black!2!white,left=2pt,right=2pt]
\textbf{\underline{Step 1}}
\vspace{\baselineskip}
\begin{enumerate}[leftmargin=*]
\item $v_3$
\item $u_1$
\item $u_2$
\item $v_1$
\item $v_2$
\item $u_3$
\end{enumerate}
\end{tcolorbox}
\begin{tcolorbox}[width=.25\linewidth, colback=black!2!white,left=2pt,right=2pt]
\textbf{\underline{Step 2}}
\begin{equation*}
v_3 \begin{cases}\hspace{.2cm}  ((v_3, v_1), u_1) \\ \hspace{.2cm} ((v_3, v_2), u_1) \\ \hspace{.2cm}  ((v_3, v_1), u_2) \\ \hspace{.2cm} ((v_3,v_2),u_2) \end{cases}
\end{equation*}\vspace{\baselineskip}
\begin{equation*}
u_1 \begin{cases}\hspace{.2cm} ((u_1, u_2), v_1) \\ \hspace{.2cm} ((u_1, u_2), v_2) \end{cases}
\end{equation*}
\end{tcolorbox}
\begin{tcolorbox}[width=.24\linewidth, colback=black!2!white,left=2pt,right=2pt]
\textbf{\underline{Step 3}}
\begin{equation*}
\begin{rcases*}
 ((v_3, v_1), u_1) \hspace{.2cm} \\ ((v_3, v_1), u_2) \hspace{.2cm} \end{rcases*}\Rightarrow
\end{equation*}\vspace{1.3\baselineskip}
\begin{equation*}
\begin{rcases*}
 ((v_3, v_2), u_1) \hspace{.2cm} \\ ((v_3,v_2),u_2)\hspace{.2cm}  \end{rcases*}\Rightarrow
\end{equation*}\vspace{1.5\baselineskip}
\begin{equation*}
 \begin{rcases*} ((u_1, u_2), v_1) \hspace{.2cm} \\ ((u_1, u_2), v_2) \hspace{.2cm} \end{rcases*}\Rightarrow
\end{equation*}
\end{tcolorbox}
\begin{tcolorbox}[width=.33\linewidth, colback=black!2!white,left=2pt,right=2pt]
\textbf{\underline{Step 4}}
\vspace{0.6\baselineskip}
\\ $+ {2 \choose 2}$ butterflies on $v_1$, $v_3$ \\
$+ (2-1)$ butterflies on $u_1$, $u_2$
\vspace{1.6\baselineskip}
\\ $+ {2 \choose 2}$ butterflies on $v_2$, $v_3$ \\
$+ (2-1)$ butterflies on $u_1$, $u_2$
\vspace{1.6\baselineskip}
\\ $+ {2 \choose 2}$ butterflies on $u_1$, $u_2$ \\
$+ (2-1)$ butterflies on $v_1$, $v_2$
\end{tcolorbox}
\end{tcbraster}
\caption{We execute butterfly counting per vertex on the graph in Figure~\ref{img_butterfly}. In Step 1, we rank vertices in decreasing order of degree. In Step 2, for each vertex $v$ in order, we retrieve all wedges where $v$ is an endpoint and where the other two vertices have higher rank (the wedges are represented as $((x,z),y)$ where $x$ and $z$ are endpoints and $y$ is the center). In Step 3, we aggregate wedges by their endpoints, and this produces the butterfly counts for Step 4. Note that if we have $w$ wedges that share the same endpoint, this produces ${w \choose 2}$ butterflies for each of the two endpoints and $w-1$ butterflies for each of the centers of the $w$ wedges.
}\label{fig-count-ex}
\end{figure*}

Figure~\ref{fig:framework} shows the high-level structure of the
\framework framework. Step 1 assigns a global ordering to the
vertices, which helps reduce the overall work of the algorithm. Step 2
retrieves all the wedges in the graph, but only where the second and
third vertices of the wedge have higher rank than the first. 
Step 3 counts for every pair of vertices the number of
wedges that share those vertices as endpoints. Step 4 uses the
wedge counts to obtain global, per-vertex, or per-edge butterfly counts.  For each step, there are
several options with respect to implementation, each of which can be
independently chosen and used together. 
Figure~\ref{fig-count-ex} shows an example of executing each of the steps. 
The options within each step of \framework are described in the rest of this section.

\subsubsection{Ranking}\label{sec-framework-retrieval}

The ordering of vertices when we retrieve wedges is significant since
it affects the number of wedges that we process. As we discuss in
Section~\ref{sec-prelim-rank}, Sanei-Mehri \textit{et
  al.}~\cite{SaSaTi18} order all vertices from one bipartition of the
graph first, depending on which bipartition produces the least number
of wedges, giving them practical speedups in their serial
implementation. We refer to this ordering as \defn{side order}.  Chiba
and Nishizeki~\cite{ChNi85} achieve a lower work complexity for
counting by ordering vertices in decreasing order of degree, which we
refer to as \defn{degree order}.

For practical speedups, we also introduce \defn{approximate degree
  order}, which orders vertices in decreasing order of the logarithm
of their degree (\defn{log-degree}). Since the ordering of vertices in many
real-world graphs have good locality, approximate degree order
preserves the locality among vertices with equal log-degree.  We show
in Section \ref{sec-alg-approx} that the work of butterfly counting
using approximate degree order is the same as that of using degree
order.

\defn{Degeneracy order}, also known as the ordering
given by vertex coreness, is a well-studied ordering of vertices given
by repeatedly finding and removing vertices of smallest
degree~\cite{Seidman83, MaBe83}. This ordering can be obtained
serially in linear time using a $k$-core decomposition
algorithm~\cite{MaBe83}, and in parallel in linear
work by repeatedly removing (peeling) all vertices with the smallest
degree from the graph in parallel~\cite{DhBlSh17}. The span of peeling
is proportional to the number of peeling rounds needed to reduce the
graph to an empty graph.
We define \defn{complement degeneracy order} to be the ordering
given by repeatedly removing vertices of
largest degree. This mirrors the idea of decreasing order of
degree, but encapsulates more structural information about the graph.

However, using complement degeneracy order is not 
efficient. The span of finding complement
degeneracy order is limited by the number of rounds needed to reduce a
graph to an empty graph, where each round deletes all maximum degree
vertices of the graph. As such, we define \defn{approximate complement
  degeneracy order}, which repeatedly removes vertices of
largest log-degree. This reduces the number of rounds
needed and closely approximates the number of wedges that
must be processed using complement degeneracy order.
We implement both of these using the parallel bucketing structure of Dhulipala et al.~\cite{DhBlSh17}.

We show
in Section~\ref{sec-alg-approx2} that using complement degeneracy order and approximate
complement degeneracy order give the same work-efficient bounds as
using degree order. 
We show in Section~\ref{sec-imp} that empirically, the
same number or fewer wedges must be processed (compared to both side
and degree order) if we consider vertices in complement degeneracy
order or approximate complement degeneracy order.

In total, the options for ranking are side order, degree order,
approximate degree order, complement degeneracy order, and approximate
complement degeneracy order.

\subsubsection{Wedge aggregation}\label{sec-framework-wcount}

We obtain wedge counts by aggregating wedges by endpoints. \framework
implements fully-parallel methods for aggregation including sorting,
hashing, and histogramming, as well as a partially-parallel batching
method.

We can aggregate the wedges by semisorting key-value pairs where the
key is the two endpoints and the value is the center.  Then, all
elements with the same key are grouped together, and the size of each
group is the number of wedges shared by the two endpoints.  We
implemented this approach using parallel sample sort from the Problem
Based Benchmark Suite (PBBS)~\cite{BlGiSi10, ShBlFiGiKySiTa12} due to
its better cache-efficiency over parallel semisort.

We can also use a parallel hash table to store key-value
pairs where the key is two endpoints and the value is a count. We 
insert the endpoints of all wedges into the table with value
$1$, and sum the values on duplicate keys. The value associated with each key then represents the number of
wedges that the two endpoints share.  We use a parallel hash table
based on linear probing with an atomic addition combining function~\cite{ShBl14}.

Another option is to insert the key-value pairs into a parallel
histogramming structure which counts the number
of occurrences of each distinct key. The parallel histogramming 
structure that we use is implemented using a combination of
semisorting and hashing~\cite{DhBlSh17}.

Finally, in our partially-parallel batching method we process a batch
of vertices in parallel and find the wedges incident on these
vertices. Each vertex aggregates its wedges serially, using an array
large enough to contain all possible second endpoints. The
\defn{simple} setting in our framework fixes the number of vertices in
a batch as a constant based on the space available, while the
\defn{wedge-aware} setting determines the number of vertices
dynamically based on the number of wedges that each vertex processes.

In total, the options for combining wedges are sorting, hashing,
histogramming, simple batching, and wedge-aware batching.

\subsubsection{Butterfly aggregation}
There are two main methods to translate wedge counts into butterfly
counts, per-vertex or per-edge.\footnote{For total counts, butterfly counts can simply be computed and summed in parallel directly.}
%% see Step 4 from Figure \ref{fig-count-ex} for an example of computing butterfly counts per vertex, which must then be aggregated and appropriately stored.
One method is to make use of atomic adds, and add the obtained butterfly count for the given
vertex/edge directly into an array, allowing us to obtain butterfly counts without explicit re-aggregation.

The second method is to reuse the aggregation method chosen for the
wedge counting step and use sorting, hashing, or histogramming to
combine the butterfly counts per-vertex or per-edge.\footnote{Note
  that this is not feasible for partially-parallel batching, so in
  that case, the only option is to use atomic adds.}

\subsubsection{Other options}
There are a few other options for butterfly counting in
\framework. First, butterfly counts can be computed per vertex, per
edge, or in total.  For wedge aggregation methods apart from batching,
since the number of wedges can be quadratic in the size of the
original graph, it may not be possible to fit all wedges in memory at
once; a parameter in our framework takes into account the number of
wedges that can be handled in memory and processes subsets of wedges
using the chosen aggregation method until they are all
processed. Similarly, for wedge aggregation by batching, a parameter
takes into account the available space and appropriately determines
the number of vertices per batch.

\framework also implements both edge and colorful sparsification as
described by Sanei-Mehri \textit{et al.}~\cite{SaSaTi18} to obtain
approximate total butterfly counts. For approximate counting, the
sub-sampled graph is simply passed to the framework shown in
Figure~\ref{fig:framework} using any of the aggregation and ranking
choices, and the final result is scaled appropriately.  Note that this
can only be used for total counts.

Finally, Wang \emph{et al.}~\cite{Wang2019} independently describe an
algorithm for butterfly counting using degree ordering, as done in
Chiba and Nishizeki~\cite{ChNi85}, and also propose a cache
optimization for wedge retrieval. Their cache optimization involves
retrieving precisely the wedges given by Chiba and Nishizeki's
algorithm, but instead of retrieving wedges by iterating through the
lower ranked endpoint (for every $v$, retrieve wedges $(v, w,u)$ where
$w, u$ have higher rank than $v$), they retrieve wedges by iterating
through the higher ranked endpoint (for every $u$, retrieve wedges
$(v, w,u)$ where $w, u$ have higher rank than $v$).  Inspired by their
work, we have augmented \framework to include this cache optimization
for all of our orderings.

\subsection{Peeling Framework}\label{sec-framework-peel}
Butterfly peeling classifies induced subgraphs by the number
of butterflies that they contain. Formally, a vertex induced
subgraph is a \defn{k-tip} if it is a maximal induced subgraph such
that for a bipartition, every vertex in that bipartition is
contained in at least $k$ butterflies and every pair of vertices
in that bipartition is connected by a sequence of
butterflies. Similarly, an edge induced subgraph is a \defn{k-wing} if
it is a maximal induced subgraph such that every edge is contained
within at least $k$ butterflies and every pair of edges is connected
by a sequence of butterflies.

The \defn{tip number} of a vertex $v$ is the maximum $k$ such that
there exists a $k$-tip containing $v$, and the \defn{wing number} of
an edge $(u,v)$ is the maximum $k$ such that there exists a $k$-wing
containing $(u,v)$. \defn{Vertex peeling}, or \defn{tip
  decomposition}, involves finding all tip numbers of vertices in
one of the bipartitions, and \defn{edge peeling}, or \defn{wing
  decomposition}, involves finding all wing numbers of edges.

The sequential algorithms for vertex peeling and edge peeling involve
finding butterfly counts and in every round, removing the vertex or
edge contained within the fewest number of butterflies,
respectively. In parallel, instead of removing a single vertex or edge
per round, we remove all vertices or edges that have the minimum
number of butterflies.

%% For example, using the graph in Figure~\ref{img_butterfly} and the counting algorithm in Figure~\ref{fig-count-ex}, we see that $u_3$ is contained within 0 butterflies, and $u_1$ and $u_2$ are each contained within 3 butterflies. If we peel on the bipartition $U$, we would peel $u_3$ in our first round, and no butterfly counts would be updated. Then, we peel $u_1$ and $u_2$ simultaneously in the second round, since both have butterfly count 3. Note that the tip number of $u_3$ is $0$, while the tip numbers of $u_1$ and $u_2$ are $3$.

The peeling framework is shown in Figure~\ref{fig:peeling-framework},
and supports vertex peeling (tip decomposition) and edge peeling (wing
decomposition). Because it also involves iterating over wedges and
aggregating wedges by endpoint, it contains similar parameters to
those in the counting framework. However, there are a few key
differences between counting and peeling.

\begin{figure}[!t]
\begin{tcolorbox}
  \textbf{\underline{\framework Framework for Peeling}}
  \bigskip
  \begin{enumerate}[label=(\textbf{\arabic*}),topsep=1pt,itemsep=0pt,parsep=0pt,leftmargin=10pt]
  \item  \emph{Obtain butterfly counts}: Obtain per-vertex or per-edge butterfly counts from the counting framework.
  \item \emph{Peel}: Iteratively remove vertices or edges with the lowest butterfly count from the graph until an empty graph is reached.
\end{enumerate}
\end{tcolorbox}
\caption{\framework Framework for Peeling}\label{fig:peeling-framework}
%\vspace{-8pt}
\end{figure}

First, ranking is irrelevant, because all wedges containing a peeled
vertex must be accounted for regardless of order. Also, using atomic add
operations to update butterfly counts is not work-efficient with respect to 
our peeling data structure (see Section \ref{sec-peel-alg}), so we do not have this as an option in our implementation.
Finally, vertex or edge peeling can only be performed if
the counting framework produces per-vertex or per-edge butterfly
counts, respectively.

%% First, ranking is irrelevant, because all wedges associated with a
%% peeled vertex must be accounted for regardless of vertex order. Also,
%% using atomic add operations to update butterfly counts after
%% obtaining wedge counts is no longer feasible, because the updated
%% vertices must still be aggregated and stored in some way in order to
%% update the number of wedges on vertices or edges in the 2-hop
%% neighborhoods of the peeled vertices or edges.  In this sense,
%% obtaining butterfly counts from wedge counts can only be achieved
%% using a re-aggregation algorithm. Finally, vertex or edge peeling can
%% only be performed if the counting framework produces per-vertex or
%% per-edge butterfly counts, respectively.

Thus, the main parameter for the peeling framework is the choice of
method for wedge aggregation: sorting, hashing, histogramming, simple batching, or wedge-aware batching.
%% \begin{itemize}[topsep=1pt,itemsep=0pt,parsep=0pt,leftmargin=15pt]
%% \item \textbf{Wedge aggregation}: sorting, hashing, histogramming, simple batching, wedge-aware batching 
%% \end{itemize}
These are precisely the same options described in
Section~\ref{sec-framework-wcount}.
%% Our peeling framework is
%% based on the parallel implementation of $k$-core using bucketing from
%% Julienne~\cite{DhBlSh17}.

\section{\framework Algorithms}\label{sec-alg}
We describe here our parallel algorithms for butterfly counting and
peeling in more detail. Our theoretically-efficient parallel
algorithms are based on the work-efficient sequential butterfly
listing algorithm, introduced by Chiba and
Nishizeki~\cite{ChNi85}.
\par Note that Wang \textit{et al.}~\cite{WaFuCh14} proposed the first 
algorithm for butterfly counting, but their algorithm is not work-efficient. They also 
give a simple parallelization of their counting algorithm that is similarly not work-efficient. 
Moreover, Sanei-Mehri \textit{et al.}~\cite{SaSaTi18} and Sariy\"{u}ce and Pinar~\cite{SaPi18} 
give sequential butterfly counting and peeling algorithms respectively, but neither are work-efficient.

\subsection{Preprocessing} \label{sec-prelim-rank}
The main subroutine in butterfly counting involves processing a subset
of wedges of the graph; previous work differ in the way in which they
choose wedges to process.
As mentioned in Section~\ref{sec-framework-retrieval}, Chiba and
Nishizeki~\cite{ChNi85} choose wedges by first ordering vertices by
decreasing order of degree and then for each vertex in order,
extracting all wedges with said vertex as an endpoint and deleting the
processed vertex from the graph. Note that the ordering of vertices
does not affect the correctness of the algorithm -- in fact,
Sanei-Mehri \textit{et al.}~\cite{SaSaTi18} use this precise algorithm
but with all vertices from one bipartition of the graph ordered before
all vertices from the other bipartition. Importantly, Chiba and
Nishizeki's~\cite{ChNi85} original decreasing degree ordering gives
the work-efficient bounds $\BigO{\alpha m}$ on butterfly counting.

Throughout this section, we use decreasing degree ordering to obtain
the same work-efficient bounds in our parallel algorithms. However,
note that using approximate degree ordering, complement degeneracy ordering, and approximate complement degeneracy ordering also gives us these
work-efficient bounds; we prove the work-efficiency of these orderings in Sections \ref{sec-alg-approx} and \ref{sec-alg-approx2}.
Furthermore, our exact and approximate counting algorithms work for any ordering; only the theoretical analysis depends on the ordering.

We use \defn{rank} to denote the index of a vertex in some ordering,
in cases where the ordering that we are using is clear or need not be
specified. We define a modified degree, $\text{deg}_y(x)$, to be the
number of neighbors $z \in N(x)$ such that $\rank{z} > \rank{y}$. We
also define a modified neighborhood, $N_y(x)$, to be the set of
neighbors $z \in N(x)$ such that $\rank{z} > \rank{y}$.

\begin{algorithm}[!t]
  \footnotesize
\caption{Preprocessing}
 \begin{algorithmic}[1]
  \Procedure {PREPROCESS}{$G = (U,V,E), f$}
  \State $X \leftarrow$ \algname{sort}($U\cup V, f$) \Comment{Sort vertices in increasing order of rank according to function $f$}
  \State Let $x$'s rank $R[x]$ be its index in $X$
  \State $E' \leftarrow \{ (R[u], R[v]) \mid (u,v) \in E\}$ \Comment{Rename all vertices to their rank}
  \State $G' = (X, E')$
  \ParFor {$x \in X$}
  \State $N^{G'}(x) \leftarrow$ \algname{sort}($\{ y \mid (x,y) \in E' \}$)
   \Comment{Sort neighbors by decreasing order of rank}
  \State Store $\text{deg}_x(x)$ and $\text{deg}_y(x)$ for all $(x,y) \in E'$
  \EndParFor
  \State \Return $G'$
  \EndProcedure
 \end{algorithmic}
 \label{alg-pre}
 \end{algorithm}

We give a preprocessing algorithm, \algname{preprocess}
(Algorithm~\ref{alg-pre}), which takes as input a bipartite graph and
a ranking function $f$, and renames vertices by their rank in the
ordering.
The output is a general graph (we discard bipartite information in
our renaming). \algname{preprocess} also sorts neighbors by decreasing
order of rank.

\algname{preprocess} begins by sorting vertices in increasing order of
rank.  Assuming that $f$ returns an integer in the range $[0,n-1]$,
which is true in all of the orderings provided in \framework, this can
be done in $\BigO{n}$ expected work and $\BigO{\log n}$ span
w.h.p.\ with parallel integer sort~\cite{RaRe89}. Renaming our graph
based on vertex rank takes $\BigO{m}$ work and $\BigO{1}$ span (to
retrieve the relevant ranks). Finally, sorting the neighbors of our
renamed graph and the modified degrees takes $\BigO{m}$ expected work
and $\BigO{\log m}$ span w.h.p.\ (since these ranks are in the range
$[0,n-1]$).  All steps can be done in linear space. The following
lemma summarizes the complexity of preprocessing.

\begin{lemma}
Preprocessing can be implemented in $\BigO{m}$ expected work,
$\BigO{\log m}$ span w.h.p., and $\BigO{m}$ space.
\end{lemma}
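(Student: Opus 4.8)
The plan is to walk through \algname{preprocess} (Algorithm~\ref{alg-pre}) line by line, bound the work and span of each step, and then combine them by summing the work and taking the maximum of the spans; the space bound follows since no step allocates more than linear auxiliary storage. Throughout, the single fact that makes the work linear rather than $\BigO{m\log m}$ is that every rank returned by $f$ is an integer in $[0,n-1]$, so every sort in the algorithm can be realized by parallel integer sort~\cite{RaRe89} rather than comparison sort.

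First I would handle the vertex ordering (lines~2--3). Since $f$ produces integer keys in $[0,n-1]$ --- which holds for all five orderings supplied by \framework --- sorting $U\cup V$ into increasing rank order is an integer sort on $n$ keys, costing $\BigO{n}$ expected work and $\BigO{\log n}$ span w.h.p. Assigning each vertex its index $R[x]$ in the sorted array is a single parallel scan in $\BigO{n}$ work and $\BigO{1}$ span. Renaming the edge set (line~4) is an independent parallel map that replaces each endpoint by its stored rank, taking $\BigO{m}$ work and $\BigO{1}$ span.

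The step that needs the most care is the per-vertex loop (lines~6--8), since sorting each neighborhood $N(x)$ independently with a comparison sort would cost $\sum_x \BigO{\text{deg}(x)\log\text{deg}(x)} = \BigO{m\log m}$ work. To avoid this I would again appeal to the bounded key range: each neighbor list consists of renamed vertices with keys in $[0,n-1]$, so the neighborhoods can be sorted by decreasing rank using integer sort, and in fact all $m$ adjacencies can be sorted together in one global integer sort keyed by (endpoint, neighbor-rank), giving $\BigO{m}$ expected work and $\BigO{\log m}$ span w.h.p.\ for the entire loop rather than paying a logarithmic factor per vertex. Once each $N(x)$ is in decreasing-rank order, the modified degrees come for free: $\text{deg}_y(x)$ is exactly the number of neighbors of $x$ with rank greater than $\rank{y}$, which equals the position of $y$ in the sorted list, and $\text{deg}_x(x)$ is obtained by the same linear scan that locates the boundary between ranks above and below $\rank{x}$. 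Computing and storing all of these is thus a linear-work, logarithmic-span pass over the sorted adjacency arrays.

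Finally I would combine the pieces. The total work is $\BigO{n} + \BigO{m} = \BigO{m}$ in expectation, using $n = \BigO{m}$ (isolated vertices participate in no wedge and may be discarded, so $m$ dominates); the span is the maximum of $\BigO{\log n}$ and $\BigO{\log m}$, which is $\BigO{\log m}$ w.h.p.; and since each sort and map uses only linear auxiliary space while the graph itself is stored in $\BigO{m}$-space CSR, the total space is $\BigO{m}$. The main obstacle is purely the neighborhood-sorting step: everything else is a routine sum of linear-work, logarithmic-span primitives, and the only place a naive argument would lose efficiency is in treating the per-vertex sorts as independent comparison sorts instead of one batched integer sort.
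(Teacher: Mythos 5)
Your proposal is correct and takes essentially the same approach as the paper: parallel integer sort (justified by the ranks lying in $[0,n-1]$) for the vertex ordering in $\BigO{n}$ expected work and $\BigO{\log n}$ span w.h.p., an $\BigO{m}$-work constant-span renaming pass, and a linear-work integer sort of the adjacencies to order neighbors by decreasing rank and read off the modified degrees, all in linear space. The only difference is one of explicitness --- you spell out that the per-vertex neighborhood sorts must be batched into a single global integer sort to avoid an $\BigO{m\log m}$ comparison-sort cost, a detail the paper leaves implicit in its one-line claim about sorting the neighbors of the renamed graph.
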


\subsection{Counting algorithms}\label{sec:count}
In this section, we describe and analyze our parallel algorithms for butterfly
counting.

The following equations describe the number of
butterflies per vertex and per edge. Sanei-Mehri \textit{et
  al.}~\cite{SaSaTi18} derived and proved the per-vertex equation, as
based on Wang \textit{et al.}'s~\cite{WaFuCh14} equation for the total
number of butterflies.  We give a short proof of the per-edge
equation.

\begin{lemma}\label{lem-count}
For a bipartite graph $G = (U,V,E)$, the number of butterflies
containing a vertex $u$ is given by
\begin{equation}\label{eq-vert}
\sum_{u' \in N_2(u)} {|N(u) \cap N(u')| \choose 2}.
\end{equation}
The number of butterflies containing an edge $(u,v) \in E$ is given by
\begin{equation}\label{eq-edge}
\sum_{u' \in N(v) \setminus \{u\}} (|N(u) \cap N(u')|-1).
\end{equation}
\end{lemma}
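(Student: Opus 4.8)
The plan is to prove both formulas by direct bijective counting: to each butterfly containing the fixed vertex (resp.\ edge) I would associate the unique ``opposite'' vertices that complete it, and then count these. The per-vertex formula (\ref{eq-vert}) is due to Sanei-Mehri et al.~\cite{SaSaTi18}, and I would recall its idea only briefly: a butterfly through $u \in U$ is determined by a second $U$-vertex $u'$ together with an unordered pair of shared centers drawn from $N(u) \cap N(u')$, so it contributes $\binom{|N(u) \cap N(u')|}{2}$; summing over $u' \in N_2(u)$ suffices because $\binom{k}{2} = 0$ whenever $u'$ shares fewer than two neighbors with $u$. The per-edge formula (\ref{eq-edge}) differs in that one center is already pinned by the fixed edge, which is exactly what turns the binomial into a linear count; this is the statement I would prove in full.

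For (\ref{eq-edge}), I would fix an edge $(u,v) \in E$ with $u \in U$ and $v \in V$, and recall that any butterfly is a set $\{u_1, u_2, v_1, v_2\}$ with $u_1, u_2 \in U$, $v_1, v_2 \in V$, and all four cross edges present. Since $(u,v)$ joins a $U$-vertex to a $V$-vertex, a butterfly containing $(u,v)$ must use $u$ as one of its two $U$-vertices and $v$ as one of its two $V$-vertices; writing $u = u_1$ and $v = v_1$, such a butterfly is completely determined by its two remaining vertices $u' := u_2$ and $v_2$. This sets up a bijection between butterflies through $(u,v)$ and pairs $(u', v_2)$ that satisfy the four edge-existence conditions.

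Next I would rewrite those conditions as set-membership constraints. The edge $(u_1,v_1) = (u,v)$ is given; $(u_2,v_1) = (u',v)$ forces $u' \in N(v)$ with $u' \neq u$; $(u_1,v_2) = (u,v_2)$ forces $v_2 \in N(u)$; and $(u_2,v_2) = (u',v_2)$ forces $v_2 \in N(u')$. Hence $u'$ ranges over $N(v) \setminus \{u\}$, and for each such $u'$ the admissible $v_2$ are exactly the elements of $(N(u) \cap N(u')) \setminus \{v\}$. The crucial observation is that $v$ itself always lies in $N(u) \cap N(u')$ --- indeed $v \in N(u)$ since $(u,v) \in E$, and $v \in N(u')$ since $u' \in N(v)$ --- so the number of admissible $v_2$ for a fixed $u'$ is precisely $|N(u) \cap N(u')| - 1$. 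Summing over $u' \in N(v) \setminus \{u\}$ then yields (\ref{eq-edge}).

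Since the argument is essentially careful bookkeeping, the only real points to verify are that each butterfly is counted once and that the $-1$ is uniform across terms. No double-counting arises because, with $(u,v)$ pinned, a butterfly's decomposition into $(u', v_2)$ is canonical: $u'$ is its unique $U$-vertex other than $u$ and $v_2$ its unique $V$-vertex other than $v$. I expect the uniform handling of the excluded center $v$ to be the main subtlety --- the subtraction of $1$ is valid in every term precisely because $v \in N(u) \cap N(u')$ for all $u' \in N(v)$, so no term needs a separate case. Everything else reduces to the restatement of edge existence as set membership.
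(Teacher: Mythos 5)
Your proof is correct and follows essentially the same route as the paper's: fix the edge $(u,v)$, iterate over $u' \in N(v) \setminus \{u\}$, and count the admissible fourth vertices $v_2 \in (N(u) \cap N(u')) \setminus \{v\}$. Your explicit verification that $v \in N(u) \cap N(u')$ for every such $u'$ (justifying the uniform $-1$) is a detail the paper leaves implicit, but it is the same argument.
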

\begin{proof}
The proof for the number of butterflies per vertex is given by
Sanei-Mehri \textit{et al.} \cite{SaSaTi18}. For the number of
butterflies per edge, we note that given an edge $(u,v) \in E$, each
butterfly that $(u,v)$ is contained within has additional vertices $u'
\in U, v' \in V$ and additional edges $(u', v), (u, v'), (u',v') \in
E$. Thus, iterating over all $u' \in N(v)$ (where $u' \neq u$), it
suffices to count the number of vertices $v' \neq v$ such that $v'$ is
adjacent to $u$ and to $u'$. In other words, it suffices to count $v'
\in N(u) \cap N(u') \setminus \{v\}$. This gives us precisely
$\sum_{u' \in N(v) \setminus \{u\}} (|N(u) \cap N(u')|-1)$ as the
number of butterflies containing $(u,v)$.
\end{proof}

Note that in both equations given by Lemma \ref{lem-count}, we iterate
over wedges with endpoints $u$ and $u'$ to obtain our desired counts
(Step 4 of Figure~\ref{fig:framework}).  We now describe how to
retrieve the wedges (Step 2 of Figure~\ref{fig:framework}).

\subsubsection{Wedge retrieval}\label{sec:wedge-retrieval}
\begin{algorithm}[!t]
  \footnotesize
\caption{Parallel wedge retrieval}
 \begin{algorithmic}[1]
  \Procedure {GET-WEDGES}{$G = (V,E), f$}
\ParFor{$x_1 \in V$}
  \ParFor{$i \leftarrow 0$ to $\text{deg}_{x_1}(x_1)$}
   \State $y \leftarrow N(x_1)[i]$ \Comment{$y = i^\text{th}$ neighbor of $x_1$}
     \ParFor{$j \leftarrow 0$ to $\text{deg}_{x_1}(y)$}
      \State $x_2 \leftarrow N(y)[j]$ \Comment{$x_2 = j^\text{th}$ neighbor of $y$}
      \State $f ((x_1, x_2), y)$
    \Comment{$(x_1, x_2)$ are the endpoints, $y$ is the center of the wedge}
     \EndParFor
  \EndParFor
\EndParFor
\State \Return $W$
  \EndProcedure
 \end{algorithmic}
 %\caption{}
 \label{alg-wedge}
\end{algorithm}

There is a subtle point to make in retrieving all wedges. Once we have
retrieved all wedges with endpoint $x$, Equation~(\ref{eq-vert})
dictates the number of butterflies that $x$ contributes to the second
endpoints of these wedges, and Equation~(\ref{eq-edge}) dictates the
number of butterflies that $x$ contributes to the centers of these
wedges. As such, given the wedges with endpoint $x$, we can
count not only the number of butterflies on $x$, but also the number
of butterflies that $x$ contributes to other vertices of our
graph. Thus, after processing these wedges, there is no need to reconsider wedges containing
$x$ (importantly, there is no need to consider wedges with center
$x$).

From Chiba and Nishizeki's~\cite{ChNi85} work, to minimize the total
number of wedges that we process, we must retrieve all wedges
considering endpoints $x$ in decreasing order of degree, and then
delete said vertex from the graph (i.e., do not consider any other
wedge containing $x$).

We introduce here a parallel wedge retrieval algorithm,
\algname{get-wedges} (Algorithm~\ref{alg-wedge}), that takes
$\BigO{\alpha m}$ work and $\BigO{\log m}$ span. We assume that
\algname{get-wedges} takes as input a preprocessed (ranked) graph and a 
function to apply on each wedge (either for storage or for processing).
The algorithm iterates through all vertices $x_1$ and
retrieves all wedges with endpoint $x_1$ such that the center and second
endpoint both have rank greater than $x_1$ (Lines 3--7).  This is
equivalent to Chiba and Nishizeki's algorithm which deletes vertices
from the graph, but the advantage is that since we do not modify the
graph, all wedges can be processed in parallel.  We process exactly
the set of wedges that Chiba and Nishizeki process, and they prove
that there are $\BigO{\alpha m}$ such wedges.

Since the adjacency lists
are sorted in decreasing order of rank, we can obtain the end index of
the loops on Line 5 using an exponential search in
$\BigO{\text{deg}_{x_1}(y)}$ work and $\BigO{\log
  (\text{deg}_{x_1}(y))}$ span.  Then, iterating over all wedges takes
$\BigO{\alpha m}$ work and $\BigO{1}$ span. In total, we have
$\BigO{\alpha m}$ work and $\BigO{\log m}$ span.

After retrieving wedges, we have to group together the wedges sharing the same endpoints,
and compute the size of each group. We define a subroutine \algname{get-freq} that takes as input 
a function that retrieves wedges, and produces a hash table containing endpoint pairs with their corresponding wedge frequencies. 
This can be implemented using an additive parallel hash table, that increments the count per endpoint pair. Note that parallel 
semisorting or histogramming could also be used, as discussed in
Section~\ref{sec-count-framework}. For an input of length $n$,
\algname{get-freq} takes $\BigO{n}$ expected work and $\BigO{\log n}$
span w.h.p.\ using any of the three aggregation methods as proven in
prior work~\cite{GuShSuBl15,Gil91a,DhBlSh17}. However, semisorting and 
histogramming require $\BigO{n}$ space, while hashing requires space proportional to the
number of unique endpoint pairs. In the case of aggregating all wedges, this is $\BigO{\min(n^2, \alpha m)}$ space.

The following lemma summarizes the complexity of wedge retrieval and aggregation.
\begin{lemma}
Iterating over all wedges can be implemented in $\BigO{\alpha m}$
expected work, $\BigO{\log m}$ span w.h.p., and $\BigO{\min(n^2, \alpha m)}$ space.
\end{lemma}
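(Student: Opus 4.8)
The plan is to assemble this bound by composing the two subroutines already analyzed: the retrieval algorithm \algname{get-wedges} and the aggregation subroutine \algname{get-freq}. First I would invoke the preceding analysis of \algname{get-wedges}, which iterates over exactly the set of wedges chosen by Chiba and Nishizeki and therefore touches $\BigO{\alpha m}$ wedges in $\BigO{\alpha m}$ work and $\BigO{\log m}$ span. Passing these wedges into \algname{get-freq} as key--value pairs keyed on the endpoint pair $(x_1,x_2)$, the cost of aggregation is precisely that of \algname{get-freq} on an input of length $N = \BigO{\alpha m}$, namely $\BigO{N} = \BigO{\alpha m}$ expected work and $\BigO{\log N}$ span w.h.p.

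Second, I would combine the bounds. The total work is $\BigO{\alpha m} + \BigO{\alpha m} = \BigO{\alpha m}$ in expectation, dominated by the aggregation step (and matching the sequential bound, so the result is work-efficient). For the span, I need to argue $\BigO{\log N} = \BigO{\log m}$; this follows because $\alpha = \BigO{\sqrt{m}}$ from the preliminaries, so $N = \BigO{\alpha m} = \BigO{m^{3/2}}$ and hence $\log N = \BigO{\log m}$. Adding the $\BigO{\log m}$ span of retrieval gives $\BigO{\log m}$ span w.h.p.\ overall. One point to be careful about is the notational clash: the $n$ in the stated complexity of \algname{get-freq} is its input length (here the number of wedges), not the vertex count $n = |U| + |V|$ appearing in the lemma, so I would rename it to $N$ to avoid confusion.

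Third, for space I would bound the number of distinct keys. Each wedge contributes a single endpoint pair $(x_1,x_2)$, so the number of distinct keys is at most the number of wedges, which is $\BigO{\alpha m}$, and is also at most the number of unordered vertex pairs, which is $\BigO{n^2}$. Using the hash-table implementation of \algname{get-freq}, whose space is proportional to the number of distinct keys, the aggregation step uses $\BigO{\min(n^2, \alpha m)}$ space, while the retrieval itself requires only linear space. Combining yields the claimed $\BigO{\min(n^2, \alpha m)}$ bound.

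The main obstacle is the space accounting rather than the time bounds: making the $\min(n^2, \alpha m)$ bound rigorous requires the double upper bound on the number of distinct endpoint pairs, and it is important to note that it is specifically the hashing variant of \algname{get-freq} (rather than the semisort or histogram variants, which would cost $\BigO{\alpha m}$ unconditionally) that attains this. Everything else is a routine composition of the two established subroutine bounds together with the $\alpha = \BigO{\sqrt{m}}$ inequality.
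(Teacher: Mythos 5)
Your proposal is correct and takes essentially the same route as the paper: it composes the preceding analysis of \algname{get-wedges} (exactly the Chiba--Nishizeki wedge set, $\BigO{\alpha m}$ work and $\BigO{\log m}$ span) with the \algname{get-freq} aggregation bound, and obtains the $\BigO{\min(n^2, \alpha m)}$ space by using the hashing variant, whose footprint is proportional to the number of distinct endpoint pairs (bounded both by the wedge count and by $n^2$), noting as the paper does that semisort/histogram would instead need space linear in the number of wedges. Your only addition is the explicit justification that $\log(\alpha m) = \BigO{\log m}$ via $\alpha = \BigO{\sqrt{m}}$, a detail the paper leaves implicit.
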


Note that this is a better worst-case work bound than the work bound of $\BigO{\sum_{v \in V} \text{deg}(v)^2}$ using side order. In the worst-case $\BigO{\alpha m} = \BigO{m^{1.5}}$ while $\BigO{\sum_{v \in V} \text{deg}(v)^2} = \BigO{mn}$.
We have that $mn = \BigOmega{m^{1.5}}$, since $n = \BigOmega{m^{0.5}}$.

\subsubsection{Per vertex}\label{sec-count-per-vert}
\begin{algorithm}[!t]
  \footnotesize
\caption{Parallel work-efficient butterfly counting per vertex}
 \begin{algorithmic}[1]
 \Procedure{COUNT-V-WEDGES}{\algname{get-wedges-func}}
 \State Initialize $B$ to be an additive parallel hash table that stores butterfly counts per vertex
 \State $R\leftarrow$ \algname{get-freq}(\algname{get-wedges-func}) \Comment{Aggregate wedges by wedge endpoints}
 \ParFor{$((x_1, x_2), d)$ in $R$}
 \State Insert $(x_1, {d \choose 2})$ and $(x_2, {d \choose 2})$ in $B$ \Comment{Store butterfly counts per endpoint}
 \EndParFor
 \State $f: ((x_1, x_2), y) \rightarrow $ Insert $(y, R(x_1, x_2) - 1)$ in $B$ 
 \State \Comment{Function to store butterfly counts per center}
 \State \algname{get-wedges-func}($f$) \Comment{Iterate over wedges to store butterfly counts per center}
\State \Return $B$
\EndProcedure
\medskip
  \Procedure {COUNT-V}{$G = (U,V,E)$}
\State $G' = (X,E') \leftarrow$ \algname{preprocess}($G$)
\State \Return \algname{count-v-wedges}(\algname{get-wedges}($G'$))
  \EndProcedure
 \end{algorithmic}
 \label{alg-count-vert}
\end{algorithm}
We now describe the full butterfly counting per vertex algorithm,
which is given as \algname{count-v} in
Algorithm~\ref{alg-count-vert}. As described previously, we implement preprocessing in Line 10, and
we implement wedge retrieval and aggregation by endpoint pairs in Line 3.

We note that following Line 3, by counting the frequency of
wedges by endpoints, for each fixed vertex $x_1$ we have obtained in
$R$ a list of all possible endpoints $(x_1, x_2) \in X \times X$ 
with the size of their intersection $|N(x_1) \cap N(x_2)|$. Thus, by
Lemma~\ref{lem-count}, for each endpoint $x_2$, $x_1$ contributes
${|N(x_1) \cap N(x_2)| \choose 2}$ butterflies, and for each center $y$
(as given in $W$), $x_1$ contributes $|N(x_1) \cap N(x_2)| - 1$
butterflies.
As such, we compute the per-vertex counts by iterating through $R$ to
add the requisite count to each endpoint (Line 5) and iterating through all wedges using \algname{get-wedges} to
add the requisite count to each center (Lines 6--8).

Extracting the butterfly counts from our wedges takes $\BigO{\alpha
  m}$ work (since we are iterating through all wedges) and $\BigO{1}$ span,
and as discussed earlier, \algname{get-freq}
takes $\BigO{\alpha m}$ expected work, $\BigO{\log m}$ span
w.h.p., and $\BigO{\min(n^2, \alpha m)}$ space. The total complexity of butterfly counting per vertex is given
as follows.
\begin{theorem}
Butterfly counting per vertex can be performed in $\BigO{\alpha m}$
expected work, $\BigO{\log m}$ span w.h.p., and $\BigO{\min(n^2, \alpha m)}$ space.
\end{theorem}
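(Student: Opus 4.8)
The statement is purely a resource bound (work, span, space) for \algname{count-v}; correctness of the counts it produces is already guaranteed by Lemma~\ref{lem-count} together with the discussion preceding the theorem. So the plan is to charge each line of \algname{count-v} and \algname{count-v-wedges} (Algorithm~\ref{alg-count-vert}) to one of the building blocks whose complexity has already been established, and then sum, taking a union bound over the high-probability span events. There are exactly four such charges: preprocessing, the \algname{get-freq} call, the endpoint pass over $R$, and the center pass over the wedges.

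First I would dispatch the two immediate parts. Preprocessing in Line~10 costs $\BigO{m}$ expected work, $\BigO{\log m}$ span w.h.p., and $\BigO{m}$ space by the preprocessing lemma. The call to \algname{get-freq} in Line~3, run on the wedges emitted by \algname{get-wedges}, aggregates the $\BigO{\alpha m}$ wedges (the Chiba--Nishizeki bound, re-established in our wedge-retrieval lemma) and so costs $\BigO{\alpha m}$ expected work and $\BigO{\log m}$ span w.h.p.; here I would note that $\log(\alpha m) = \BigO{\log m}$ because $\alpha = \BigO{\sqrt m}$.

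Next I would handle the two passes that extract counts from $R$. The endpoint pass (Lines~4--5) iterates over the entries of $R$; there are at most $\min(n^2,\alpha m)$ distinct endpoint pairs, and each iteration performs $\BigO{1}$ work (one binomial evaluation and two atomic-add inserts into $B$), giving $\BigO{\min(n^2,\alpha m)} = \BigO{\alpha m}$ work and $\BigO{1}$ span under arbitrary forking. The center pass (Lines~6--8) re-streams every wedge through \algname{get-wedges} with the function $f$, doing one $\BigO{1}$-expected lookup of $R(x_1,x_2)$ and one atomic add into $B$ per wedge, and so inherits the $\BigO{\alpha m}$ work and $\BigO{\log m}$ span of wedge iteration. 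Summing the four contributions and union-bounding the high-probability span statements yields $\BigO{\alpha m}$ expected work and $\BigO{\log m}$ span w.h.p.

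The part that actually needs care is the space bound, and this is where I expect the only real obstacle. Naively the algorithm touches $\BigO{\alpha m}$ wedges, so both materializing the full wedge list $W$ and aggregating it with a sort- or histogram-based \algname{get-freq} would use $\BigO{\alpha m}$ space, exceeding the claimed $\BigO{\min(n^2,\alpha m)}$ precisely when $n^2 < \alpha m$. Two structural features of Algorithm~\ref{alg-count-vert} avoid this, and I would make both explicit. First, \algname{get-wedges} is invoked in its functional form (applying $f$ to each wedge on the fly) in both Line~3 and Line~8, so $W$ is never stored. Second, \algname{get-freq} must use the additive hash-table aggregation, whose footprint is proportional to the number of \emph{distinct} endpoint pairs, namely $\BigO{\min(n^2,\alpha m)}$, rather than the semisort or histogram variants that would require $\BigO{\alpha m}$. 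The table $B$ is indexed by vertex and uses only linear space, as does the CSR graph; since $m \le \min(n^2,\alpha m)$, the total is dominated by $R$, giving $\BigO{\min(n^2,\alpha m)}$. Assembling the work, span, and space bounds completes the proof.
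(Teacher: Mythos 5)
Your proposal is correct and takes essentially the same route as the paper's own argument: you charge the cost of \algname{count-v} to preprocessing, the \algname{get-freq} aggregation, the endpoint pass over $R$, and the center pass that re-streams wedges, exactly as the paper does, and you obtain the $\BigO{\min(n^2,\alpha m)}$ space bound for the same reason the paper gives --- the wedges are processed functionally rather than materialized, and the additive parallel hash table's footprint is proportional to the number of distinct endpoint pairs rather than the $\BigO{\alpha m}$ wedge count. The only additions you make (the union bound over high-probability events and the observation that $\log(\alpha m)=\BigO{\log m}$) are routine details the paper leaves implicit.
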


\subsubsection{Per edge}\label{sec-count-per-edge}
\begin{algorithm}[!t]
  \footnotesize
\caption{Parallel work-efficient butterfly counting per edge}
 \begin{algorithmic}[1]
  \Procedure{COUNT-E-WEDGES}{\algname{get-wedges-func}}
 \State Initialize $B$ to be an additive parallel hash table that stores butterfly counts per edge  
 \State $R\leftarrow$ \algname{get-freq}(\algname{get-wedges-func}) \Comment{Aggregate wedges by wedge endpoints}
 \State $f: ((x_1, x_2), y) \rightarrow $ Insert $((x_1, y), R(x_1, x_2) - 1)$ and $((x_2, y), R(x_1, x_2) - 1)$ in $B$
 \State \Comment{Function to store butterfly counts per edge}
  \State \algname{get-wedges-func}($f$) \Comment{Iterate over wedges to store butterfly counts per edge}
\State \Return $B$
\EndProcedure
\medskip
  \Procedure {COUNT-E}{$G=(U,V,E)$}
\State $G' = (X,E') \leftarrow$ \algname{preprocess}($G$) 
\State \Return \algname{count-e-wedges}(\algname{get-wedges}($G'$))
  \EndProcedure
 \end{algorithmic}
 \label{alg-count-edge}
\end{algorithm}
We now describe the full butterfly counting per edge algorithm, which
is given as \algname{count-e} in Algorithm~\ref{alg-count-edge}.
We implement preprocessing and wedge
retrieval as described previously, in Line 9 and Line 3,
respectively.

As we discussed in Section~\ref{sec-count-per-vert}, following Step 3
for each fixed vertex $x_1$ we have in $R$ a list of all possible
endpoints $(x_1, x_2) \in X \times X$ with the size of their
intersection $|N(x_1) \cap N(x_2)|$. Thus, by Lemma \ref{lem-count},
we compute per-edge counts by iterating through all of our wedge
counts and adding $|N(x_1) \cap N(x_2)|-1$ to our butterfly counts for
the edges contained in the wedges with endpoints $x_1$ and $x_2$. 
As such, we use \algname{get-wedges} to iterate through all wedges, look up in 
$R$ the corresponding count $|N(x_1) \cap N(x_2)|-1$ on the endpoints, and add 
this count to the corresponding edges.

Extracting the butterfly counts from our wedges takes $\BigO{\alpha
  m}$ work (since we are essentially iterating through $W$) and
$\BigO{1}$ span. 
\algname{get-freq} takes $\BigO{\alpha m}$ expected work, 
$\BigO{\log m}$ span w.h.p., and $\BigO{\min(n^2, \alpha m)}$ space. The total complexity of butterfly counting
per edge is given as follows.
\begin{theorem}
Butterfly counting per edge can be performed in $\BigO{\alpha m}$
expected work, $\BigO{\log m}$ span w.h.p., and $\BigO{\min(n^2, \alpha m)}$ space.
\end{theorem}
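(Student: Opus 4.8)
The plan is to prove the statement in two parts: first that \algname{count-e} (Algorithm~\ref{alg-count-edge}) correctly computes Equation~(\ref{eq-edge}) of Lemma~\ref{lem-count} for every edge, and then that it runs within the claimed work, span, and space bounds. Since the complexity is obtained by composing the bounds of subroutines already analyzed (preprocessing, \algname{get-wedges}, and \algname{get-freq}), I expect the correctness argument to be the main obstacle, and so I would settle it first.

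For correctness, I would fix an edge $(u,v)$ with $u \in U$, $v \in V$ and argue that the algorithm accumulates exactly $\sum_{u' \in N(v) \setminus \{u\}}(|N(u) \cap N(u')| - 1)$ into the entry of $B$ for $(u,v)$. The delicate point is that $(u,v)$ arises as an edge of a retrieved wedge $((x_1,x_2),y)$ in two different ways: when $y = v$ and $u \in \{x_1,x_2\}$ (so the shared endpoints lie on the $U$ side), and when $y = u$ and $v \in \{x_1,x_2\}$ (shared endpoints on the $V$ side), so a naive count risks charging the same butterfly twice. I would therefore do a per-butterfly accounting: for each butterfly $B = \{u,u',v,v'\}$ through $(u,v)$, I show it is charged to $(u,v)$ exactly once. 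The key observation is that the wedge $u\text{-}v\text{-}u'$ is retrieved iff $\rank{v} > \min(\rank{u},\rank{u'})$, while the increment $R(x_1,x_2)-1$ counts only co-centers whose rank exceeds $\min(\rank{x_1},\rank{x_2})$; combining the two conditions, $B$ is counted through the $U$-side wedges iff $\min(\rank{u},\rank{u'}) < \min(\rank{v},\rank{v'})$ and through the $V$-side wedges iff the reverse (strict) inequality holds. Because all ranks are distinct and the global minimum-rank vertex of $B$ lies on exactly one side, exactly one case occurs, yielding a net contribution of one. Summing over all butterflies through $(u,v)$ recovers Equation~(\ref{eq-edge}).

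For the complexity, I would compose the established bounds phase by phase. Preprocessing contributes $\BigO{m}$ expected work, $\BigO{\log m}$ span w.h.p., and $\BigO{m}$ space. Building $R$ by running \algname{get-wedges} followed by \algname{get-freq} retrieves and aggregates the $\BigO{\alpha m}$ Chiba--Nishizeki wedges at a cost of $\BigO{\alpha m}$ expected work, $\BigO{\log m}$ span w.h.p., and $\BigO{\min(n^2,\alpha m)}$ space (the number of distinct endpoint pairs). The final pass re-invokes \algname{get-wedges} with the function $f$: each of the $\BigO{\alpha m}$ wedges performs an $\BigO{1}$ expected hash lookup of $R(x_1,x_2)$ and two atomic additions into $B$, so this pass takes $\BigO{\alpha m}$ expected work and $\BigO{\log m}$ span w.h.p. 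Since $B$ holds at most one entry per edge it uses $\BigO{m}$ space, which is dominated by the space for $R$. Summing the three phases gives $\BigO{\alpha m}$ expected work, $\BigO{\log m}$ span w.h.p., and $\BigO{\min(n^2,\alpha m)}$ space, as claimed.
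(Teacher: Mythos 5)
Your proposal is correct, and its complexity half coincides with the paper's proof: the paper likewise composes preprocessing, \algname{get-wedges} plus \algname{get-freq} ($\BigO{\alpha m}$ expected work, $\BigO{\log m}$ span w.h.p., $\BigO{\min(n^2,\alpha m)}$ space), and a final pass applying $f$ over the $\BigO{\alpha m}$ wedges with $\BigO{1}$-work lookups in $R$ and insertions into $B$. Where you genuinely depart from the paper is on correctness. The paper settles it in one sentence: it asserts that $R$ holds $|N(x_1)\cap N(x_2)|$ for each endpoint pair and that adding $R(x_1,x_2)-1$ to the two edges of each retrieved wedge realizes Equation~(\ref{eq-edge}), implicitly deferring to Chiba and Nishizeki's analysis. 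That assertion is actually loose: under ranked retrieval, $R(x_1,x_2)$ counts only the common neighbors of rank above $\min(\rank{x_1},\rank{x_2})$, not the full intersection, and an edge $(u,v)$ receives contributions both from wedges with center $v$ and from wedges with center $u$, so over- and under-counting are genuine concerns. Your per-butterfly charging argument --- a butterfly $\{u,u',v,v'\}$ is charged to $(u,v)$ through the $U$-side wedges iff $\min(\rank{u},\rank{u'}) < \min(\rank{v},\rank{v'})$, through the $V$-side wedges iff the reverse strict inequality holds, and exactly one case occurs because ranks are distinct --- is precisely what is needed to see that these two effects resolve to a net count of one per butterfly, and the paper never spells it out. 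So the two proofs share the same skeleton, but your version supplies the rank-based double-counting analysis that the paper leaves implicit; what the paper's version buys in exchange is only brevity.
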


\subsection{Peeling algorithms}\label{sec-peel-alg}

We now describe and analyze our parallel algorithms for butterfly
peeling. In the analysis, we assume that the relevant per-vertex and per-edge butterfly counts are already given 
by the counting algorithms.
The sequential algorithm for butterfly
peeling~\cite{SaPi18} is precisely the sequential algorithm for
$k$-core~\cite{Seidman83, MaBe83}, except that instead of computing
and updating the number of neighbors removed from each vertex per
round, we compute and update the number of butterflies removed from
each vertex or edge per round. Thus, we base our parallel butterfly
peeling algorithm on the parallel bucketing-based algorithm for
$k$-core in Julienne~\cite{DhBlSh17}. In parallel, our butterfly
peeling algorithm removes (peels) all vertices or edges with the
minimum butterfly count in each round, and repeats until the entire
graph has been peeled.

Zou \cite{Zou16} give a sequential butterfly peeling per edge
algorithm that they claim takes $\BigO{m^2}$ work. However, their algorithm repeatedly scans 
the edge list up to the maximum number of butterflies per edge iterations, so their algorithm 
actually takes $\BigO{m^2 + m \cdot \text{max-b}_e}$ work, where $\text{max-b}_e$ is the 
maximum number of butterflies per edge. This is improved by
Sariy\"{u}ce and Pinar's~\cite{SaPi18} work; Sariy\"{u}ce and Pinar
state that their sequential butterfly peeling algorithms per vertex
and per edge take $\BigO{\sum_{u \in U} \text{deg}(u)^2}$ work and
$\BigO{\sum_{u \in U} \sum_{v_1, v_2 \in N(u)}
  \text{max}(\text{deg}(v_1), \text{deg}(v_2))}$ work, respectively.
They account for the time to update butterfly counts, but do not
discuss how to extract the vertex or edge with the minimum butterfly
count per round. In their implementation, their bucketing structure is
an array of size equal to the number of butterflies, and they
sequentially scan this array to find vertices to peel.  They scan
through empty buckets, and so the time complexity and additional space 
for their butterfly
peeling implementations is on the order of the maximum number of
butterflies per vertex or per edge.

We design a more efficient bucketing structure, which stores non-empty
buckets in a Fibonacci heap~\cite{FrTa84}, keyed by the number of
butterflies. We have an added $\BigO{\log n}$ factor to extract the
bucket containing vertices with the minimum butterfly count. Note that
insertion and updating keys in Fibonacci heaps take $\BigO{1}$
amortized time per key, which does not contribute more to our work. To
use this in our parallel peeling algorithms, we need to ensure that
batch insertions, decrease-keys, and deletions in the Fibonacci are
work-efficient and have low span. We present a parallel Fibonacci heap
and prove its bounds in Section~\ref{sec-fib}.  We show that a batch of $k$
insertions takes $\BigO{k}$ amortized expected work and $\BigO{\log n}$ span
w.h.p., a batch of $k$ decrease-key operations takes $\BigO{k}$
amortized expected work and $\BigO{\log^2 n}$ span w.h.p., and a
parallel delete-min operation takes $\BigO{\log n}$ amortized expected work and
$\BigO{\log n}$ span w.h.p. 

A standard sequential Fibonacci heap gives work-efficient bounds for
sequential butterfly peeling, and our parallel Fibonacci heap gives
work-efficient bounds for parallel butterfly peeling. The work of our
parallel vertex-peeling algorithm 
improves over the sequential algorithm of
Sariy\"{u}ce and Pinar~\cite{SaPi18}.

Our actual implementation uses the bucketing structure from
Julienne~\cite{DhBlSh17}, which is not work-efficient in the context
of butterfly peeling,\footnote{Julienne is work-efficient in the
  context of $k$-core.} but is fast in practice. Julienne materializes
only 128 buckets at a time, and when all of the materialized buckets
become empty, Julienne will materialize the next 128 buckets.  To
avoid processing many empty buckets, we use an optimization to skip
ahead to the next range of 128 non-empty buckets during
materialization.

Finally, we show alternate vertex and edge peeling algorithms that store all of the wedges obtained while counting, which demonstrate 
different work-space tradeoffs.

\begin{algorithm}[t]
  \footnotesize
\caption{Parallel vertex peeling (tip decomposition)}
 \begin{algorithmic}[1]
 \Procedure{GET-V-WEDGES}{$G$, $A$, $f$} \Comment{$f$ is a function used to apply over all wedges without storing said wedges; \algname{count-v-wedges} passes $f$ in when it aggregates wedges for butterfly computations}
 \ParFor{$u_1 \in A$}
  \ParFor{$v \in N(u_1)$ where $v$ has not been previously peeled}
     \ParFor{$u_2 \in N(v)$ where $u_2 \neq u_1$ and $u_2$ has not been previously peeled}
      \State $f((u_1, u_2), v)$
     \EndParFor
  \EndParFor
\EndParFor
\EndProcedure
 \medskip
\Procedure{UPDATE-V}{$G=(U,V,E), B,A$}
\State $B' \leftarrow$ \algname{count-v-wedges}(\algname{get-v-wedges}($G$, $A$))
\State Subtract corresponding counts $B'$ from $B$
\State \Return $B$
\EndProcedure
\medskip
  \Procedure {PEEL-V}{$G=(U,V,E), B$}
  \Comment $B$ is an array of butterfly counts per vertex
\State Let $K$ be a bucketing structure mapping $U$ to buckets based on \# of butterflies
\State $f \leftarrow 0$
\While{$f < |U|$}
\State $A \leftarrow$ all vertices in next bucket (to be peeled)
\State $f \leftarrow f + |A|$
\State $B\leftarrow $\algname{update-v}$(G,B,A)$ \Comment{Update \# butterflies}
\State Update the buckets of changed vertices in $B$
\EndWhile
\State \Return $K$
  \EndProcedure
 \end{algorithmic}
 %\caption{}
 \label{alg-peel}
\end{algorithm}

\subsubsection{Per vertex}\label{sec-peel-per-vert}
The parallel vertex peeling (tip decomposition) algorithm is given in
\algname{peel-v} (Algorithm~\ref{alg-peel}). Note that 
we peel vertices considering only the bipartition of the graph that 
produces the fewest number of
wedges (considering the vertices in that bipartition as endpoints),
which mirrors Sariy\"{u}ce and Pinar's~\cite{SaPi18} sequential
algorithm and gives us work-efficient bounds for peeling; more
concretely, we consider the bipartition $X$ such that $\sum_{v \in X}
{\text{deg}(v) \choose 2}$ is minimized. Without loss of generality,
let $U$ be this bipartition.

Vertex peeling takes as input the per-vertex butterfly counts from the
\framework counting framework.  We create a bucketing structure
mapping vertices in $U$ to buckets based on their butterfly count
(Line 11). While not all vertices have been peeled, we retrieve the
bucket containing vertices with the lowest butterfly count (Line 16),
peel them from the graph, and compute the wedges removed due
to peeling (Line 16). Finally, we update the buckets of the remaining
vertices whose butterfly count was affected due to the peeling (Line
17).

The main subroutine in \algname{peel-v} is \algname{update-v} (Lines
6--9), which returns a set of vertices whose butterfly counts have
changed after peeling a set of vertices. To compute updated butterfly
counts, we use the equations in Lemma~\ref{lem-count} and precisely
the same overall steps as in our counting algorithms: wedge retrieval,
wedge counting, and butterfly counting. Importantly, in wedge
retrieval, for every peeled vertex $u_1$, we must gather all wedges
with an endpoint $u_1$, to account for all butterflies containing
$u_1$ (from Equation~(\ref{eq-vert})). We process all peeled vertices
$u_1$ in parallel (Line 2), and for each one we find all vertices
$u_2$ in its 2-hop neighborhood, each of which contributes a wedge
(Lines 3--5).  Note that there is a subtle point to make here, where we may 
double-count butterflies if we include wedges containing previously peeled vertices 
or other vertices in the process of being peeled. We can use a separate array to 
mark such vertices, and break ties among vertices in the process of being peeled by 
rank. Then, we ignore these vertices when we iterate over the corresponding wedges in Lines 3 and 4, so these 
vertices are not included when we compute the butterfly contributions of each vertex.

Finally, we aggregate the number of deleted butterflies
per vertex (Line 7), and update the butterfly counts (Line
8).  The wedge aggregation and butterfly counting steps are precisely
as given in our vertex counting algorithm 
(Algorithm~\ref{alg-count-vert}).

The work of \algname{peel-v} is dominated by the total work spent in
the \algname{update-v} subroutine. Since \algname{update-v} will
eventually process in the subsets $A$ all vertices in $U$, the total
work in wedge retrieval is precisely the number of wedges with
endpoints in $U$, or $\BigO{\sum_{u \in U} \text{deg}(u)^2}$. The work
analysis for \algname{count-v-wedges} then follows from a similar
analysis as in Section~\ref{sec-count-per-vert}.  Using our parallel
Fibonacci heap, extracting the next bucket on Line 14 takes
$\BigO{\log n}$ amortized work and updating the buckets on Line 17 is
upper bounded by the number of wedges.

Additionally, the space complexity is bounded above by the space complexity of \algname{count-v-wedges}, which uses a hash table keyed by endpoint pairs. Thus, the space complexity is given by $\BigO{n^2}$.

To analyze the span of \algname{peel-v}, we define $\rho_v$ to be the
\defn{vertex peeling complexity} of the graph, or the number of rounds
needed to completely peel the graph where in each round, all vertices
with the minimum butterfly count are peeled. Then, since the span of
each call of \algname{update-v} is bounded by $\BigO{\log m}$
w.h.p.\ as discussed in Section~\ref{sec-count-per-vert}, and since the
span of updating buckets is bounded by $\BigO{\log^2 m}$ w.h.p., the
overall span of \algname{peel-v} is $\BigO{\rho_v \log^2 m}$ w.h.p.

If the maximum number of per-vertex butterflies  is $\BigOmega{\rho_v
  \log n}$, which is likely true in
practice, then the work of the algorithm described above is faster than Sariy\"{u}ce
and Pinar's~\cite{SaPi18} sequential algorithm, which takes
$\BigO{\text{max-b}_v+\sum_{u \in U}\text{deg}(u)^2}$ work, where
$\text{max-b}_v$ is the maximum number of butterflies per-vertex.

We must now handle the case where $\text{max-b}_v$ is
$\BigO{\rho_v \log n}$. Note that in order to achieve work-efficiency in this case, we must relax the space complexity to $\BigO{n^2 + \text{max-b}_v}$, since we use $\BigO{\text{max-b}_v}$ space to maintain a different bucketing structure. More specifically, 
while we do not know $\rho_v$ at the beginning
of the algorithm, we can start running the algorithm as stated (with the Fibonacci heap), until the number of 
peeling rounds $q$ is equal to $\slfrac{\text{max-b}_v}{\log n}$. If this occurs, then since $q \leq \rho_v$, we have that 
$\text{max-b}_v$ is at most $\rho_v \log n$ (if this does not occur, we know that $\text{max-b}_v$ is greater than $\rho_v
  \log n$, and we finish the algorithm as described above). 
Then, we terminate 
and restart the algorithm using the original bucketing structure of Dhulipala \emph{et
  al.}~\cite{DhBlSh18}, which will give an algorithm with $\BigO{\text{max-b}_v+\sum_{u \in U}\text{deg}(u)^2}$
expected work
and $\BigO{\rho_v \log^2 n}$ span w.h.p. The work bound matches the work bound of
Sariy\"{u}ce and Pinar and therefore, our algorithm is work-efficient.

The overall complexity of butterfly vertex peeling is as follows.
\begin{theorem}
Butterfly vertex peeling can be performed in $\BigO{\min({\normalfont \text{max-b}}_v,\rho_v \log n) + \sum_{u \in U}
  \text{deg}(u)^2}$ expected work, $\BigO{\rho_v \log^2 n}$ span
w.h.p., and $\BigO{n^2 + {\normalfont \text{max-b}}_v}$ space, where ${\normalfont \text{max-b}}_v$ is the maximum number of per-vertex butterflies $\rho_v$ is the vertex peeling complexity. Alternatively, butterfly vertex peeling can be performed in $\BigO{\rho_v \log n + \sum_{u \in U}
  \text{deg}(u)^2}$ expected work, $\BigO{\rho_v \log^2 n}$ span
w.h.p., and $\BigO{n^2}$ space.
\end{theorem}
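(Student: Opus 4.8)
The plan is to bound the total work of \algname{peel-v} by separating it into the work of wedge retrieval and counting carried out inside \algname{update-v}, amortized over all peeling rounds, and the work spent on the bucketing data-structure operations; the $\min(\text{max-b}_v, \rho_v \log n)$ term will then come from an adaptive choice of bucketing structure. Before the complexity analysis, I would dispatch correctness: by Lemma~\ref{lem-count}, each peeled vertex $u_1$ contributes the right number of butterflies to the endpoints and centers of its incident wedges, and I would argue that the marking-and-tie-breaking scheme described before the theorem ensures every butterfly spanning two simultaneously-peeled (or previously-peeled) vertices is deducted exactly once.

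First I would bound the aggregate wedge work. The key observation is that over the whole execution every vertex of $U$ lands in exactly one peeled batch $A$, so it is processed as an endpoint $u_1$ in \algname{get-v-wedges} exactly once; summing the number of retrieved wedges over all rounds therefore telescopes to the total number of wedges with an endpoint in $U$, which is $\BigO{\sum_{u \in U} \text{deg}(u)^2}$. By the per-vertex counting analysis of Section~\ref{sec-count-per-vert}, the wedge-aggregation and butterfly-extraction steps inside \algname{count-v-wedges} cost work linear in the number of retrieved wedges, so they contribute the same $\BigO{\sum_{u \in U} \text{deg}(u)^2}$ in total. Next I would charge the data-structure operations: using the parallel Fibonacci heap of Section~\ref{sec-fib}, each of the $\rho_v$ rounds performs one delete-min in $\BigO{\log n}$ amortized work, for $\BigO{\rho_v \log n}$ total, while the decrease-key (bucket-update) operations are charged to changed vertices and absorbed into the wedge term. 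This already yields the relaxed bound $\BigO{\rho_v \log n + \sum_{u \in U} \text{deg}(u)^2}$ work in $\BigO{n^2}$ space, the space being dominated by the endpoint-pair hash table of \algname{count-v-wedges}.

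The main obstacle will be replacing $\rho_v \log n$ by $\min(\text{max-b}_v, \rho_v \log n)$ without knowing $\rho_v$ ahead of time. The idea I would use is an adaptive restart: run the Fibonacci-heap version while counting rounds, stopping if the round count $q$ reaches $\text{max-b}_v / \log n$. If the graph is fully peeled before this threshold, then $\rho_v < \text{max-b}_v/\log n$ and the Fibonacci-heap cost $\BigO{\rho_v \log n}$ is already $\BigO{\text{max-b}_v}$, so we are done; otherwise $q \le \rho_v$ forces $\text{max-b}_v \le \rho_v \log n$, and I would discard the partial run and restart using the array-indexed bucketing structure of Dhulipala \emph{et al.}~\cite{DhBlSh18}, whose total scan cost is $\BigO{\text{max-b}_v}$, giving $\BigO{\text{max-b}_v + \sum_{u \in U} \text{deg}(u)^2}$ work in this regime. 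Taking the cheaper of the two regimes yields the claimed $\BigO{\min(\text{max-b}_v, \rho_v \log n) + \sum_{u \in U} \text{deg}(u)^2}$ work, and since the array-based structure needs $\BigO{\text{max-b}_v}$ extra space, the space bound for this variant rises to $\BigO{n^2 + \text{max-b}_v}$.

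Finally I would bound the span. There are $\rho_v$ rounds, and in each round the span is dominated by \algname{update-v}, which by the counting analysis runs in $\BigO{\log m}$ span w.h.p., together with the batched decrease-key operations on the parallel Fibonacci heap, which take $\BigO{\log^2 n}$ span w.h.p.\ (noting $\log m = \BigO{\log n}$); multiplying by the round count gives $\BigO{\rho_v \log^2 n}$ span w.h.p. I expect the adaptive-restart step to be the delicate part, both in verifying that the restart at most doubles the work and in confirming that the threshold $\text{max-b}_v/\log n$ correctly distinguishes the two regimes.
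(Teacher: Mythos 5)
Your proposal is correct and follows essentially the same route as the paper's proof: the same telescoping bound on wedge work, the same Fibonacci-heap charging scheme for the $\BigO{\rho_v \log n}$ term with bucket updates absorbed into the wedge work, and the identical adaptive-restart trick at threshold $\text{max-b}_v/\log n$ that switches to the array-based bucketing structure of Dhulipala \emph{et al.} to obtain the $\min(\text{max-b}_v, \rho_v \log n)$ bound at the cost of $\BigO{\text{max-b}_v}$ extra space. The span and space analyses also match the paper's.
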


\begin{algorithm}[t]
  \footnotesize
\caption{Parallel edge peeling (wing decomposition)}
 \begin{algorithmic}[1]
\Procedure{UPDATE-E}{$G=(U,V,E), B,A$}
\State Initialize an additive parallel hash table $B'$ to store updated butterfly counts
\ParFor{$(u_1,v_1) \in A$}
  \ParFor{$u_2 \in N(v_1)$ where $u_2 \neq u_1$ and $(v_1, u_2)$ has not been previously peeled}
  \State $N \leftarrow $ \algname{intersect}$(N(u_1), N(u_2))$, excepting previously peeled edges
  \State Insert $((u_2,v_1), |N| - 1)$ in $B'$ 
    \ParFor{$v_2 \in N$ where $v_2 \neq v_1$}
     \State Insert $((u_1,v_2), 1)$ in $B'$ 
     \State Insert $((u_2,v_2),1)$ in $B'$ 
      \EndParFor
  \EndParFor
\EndParFor
\State Subtract corresponding counts in $B'$ from $B$
\State \Return $B$
\EndProcedure
\medskip
  \Procedure {PEEL-E}{$G=(U,V,E), B$}
    \Comment $B$ is an array of butterfly counts per edge
\State Let $K$ be a bucketing structure mapping $E$ to buckets based on \# of butterflies 
\State $f \leftarrow 0$
\While{$f < m$}
\State $A \leftarrow$ all edges in next bucket (to be peeled)
\State $f \leftarrow f + |A|$
\State $B\leftarrow $\algname{update-e}$(G,B,A)$ \Comment{Update \# butterflies}
\State Update the buckets of changed edges in $B$
\EndWhile
\State \Return $K$
  \EndProcedure
 \end{algorithmic}
 \label{alg-peel-edge}
\end{algorithm}

\subsubsection{Per edge}
While the bucketing structure for butterfly peeling by edge follows
that for butterfly peeling by vertex, the algorithm to update
butterfly counts within each round is different. Based on
Lemma~\ref{lem-count}, in order to obtain all butterflies containing
some edge $(u_1, v_1)$, we must consider all neighbors $u_2 \in N(v_1)
\setminus \{u_1\}$ and then find the intersection $N(u_1) \cap
N(u_2)$. Each vertex $v_2$ in this intersection where $v_2 \neq v_1$
produces a butterfly $(u_1, v_1, u_2, v_2)$. There is no simple
aggregation method using wedges in this scenario; we must find each
butterfly individually in order to count contributions from each
edge. This is precisely the serial update algorithm
that Sariy\"{u}ce and Pinar~\cite{SaPi18} use for edge peeling.

The algorithm for parallel edge peeling is given in \algname{peel-e}
(Algorithm~\ref{alg-peel-edge}).  Edge peeling takes as input the
per-edge butterfly counts from the \framework counting framework.
Line 13 initializes a bucketing structure mapping each edge to a
bucket based on its butterfly count. While not all edges have been
peeled, we retrieve the bucket containing vertices with the lowest
butterfly count (Line 16), peel them from the graph and compute the
wedges that were removed due to peeling (Line 18). Finally, we update
the buckets of the remaining vertices whose butterfly count was
affected due to the peeling (Line 19).

The main subroutine is \algname{update-e} (Lines 1--11), which returns
a set of edges whose butterfly counts have changed after peeling a set
of edges. For each peeled edge $(u_1,v_1)$ in parallel (Line 3), we
find all neighbors $u_2$ of $v_1$ where $u_2\neq u_1$ and compute the
intersection of the neighborhoods of $u_1$ and $u_2$ (Lines 4--5). All
vertices $v_2 \neq v_1$ in their intersection contribute a deleted
wedge, and we indicate the number of deleted wedges on the remaining
edges of the butterfly $(u_2,v_1)$, $(u_1,v_2)$, and $(u_2,v_2)$ in an
array $B'$ (Lines 6--9). As in per-vertex peeling, we can avoid double-counting butterflies 
corresponding to previously peeled edges and edges in the process of being peeled using 
an additional array to mark these edge; we ignore these edges when we iterate over neighbors in Line 4
and perform the intersections in Line 5, so they are not included when we compute the butterfly contributions of each edge.
Finally, we update the butterfly counts (Line
10).

The work of \algname{peel-e} is again dominated by the total work
spent in the \algname{update-e} subroutine.  We can optimize the
intersection on Line 5 by using hash tables to store the adjacency
lists of the vertices, so we only perform
$\BigO{\min(\text{deg}(u),\allowbreak \text{deg}(u'))}$ work when intersecting
$N(u)$ and $N(u')$ (by scanning through the smaller list in parallel and
performing lookups in the larger list).  This gives us
$\BigO{\sum_{(u,v) \in E} \allowbreak \sum_{u' \in N(v)} \allowbreak
  \min(\text{deg}(u),\allowbreak \text{deg}(u'))}$
expected work. 

Additionally, the space complexity is bounded by storing the updated butterfly counts; note that we do not store wedges or aggregate counts per endpoints. Thus, the space complexity is given by $\BigO{m}$. 

As in vertex peeling, to analyze the span of
\algname{peel-e}, we define $\rho_e$ to be the \defn{edge peeling
  complexity} of the graph, or the number of rounds needed to
completely peel the graph where in each round, all edges with the
minimum butterfly count are peeled. The span of
\algname{update-e} is bounded by the span of
updating buckets, giving us $\BigO{\log^2 m}$ span w.h.p. Thus, the
overall span of \algname{peel-e} is $\BigO{\rho_e \log^2 m}$ w.h.p.

Similar to vertex peeling, if the maximum number of per-edge butterflies
 is $\BigOmega{\rho_e \log m}$, which is likely true in practice, then the work
of our algorithm is faster than the sequential algorithm by
Sariy\"{u}ce and Pinar~\cite{SaPi18}. The work of their algorithm is
$\BigO{\text{max-b}_e+ \sum_{(u,v) \in E} \sum_{u' \in N(v)}
  \allowbreak \min(\text{deg}(u), \allowbreak \text{deg}(u'))}$, where
$\text{max-b}_e$ is the maximum number of butterflies per-edge 
(assuming that their intersection is optimized).

To deal with the case where the maximum number of butterflies per-edge
is small, in order to achieve work-efficiency in this case, we must relax the space complexity to $\BigO{m + \text{max-b}_e}$, since we use $\BigO{\text{max-b}_e}$ space to maintain a different bucketing structure. More specifically, 
we can start running the algorithm as stated (with the Fibonacci heap), until the number of 
peeling rounds $q$ is equal to $\slfrac{\text{max-b}_e}{\log m}$. If this occurs, then since $q \leq \rho_e$, we have that 
$\text{max-b}_e$ is at most $\rho_e \log m$ (if this does not occur, we know that $\text{max-b}_e$ is greater than $\rho_e
  \log m$, and we finish the algorithm as described above). 
Then, we terminate 
and restart the algorithm using the original bucketing structure of Dhulipala \emph{et
  al.}~\cite{DhBlSh18}, which will give an algorithm with $\BigO{\text{max-b}_e+\sum_{(u,v) \in E} \sum_{u' \in N(v)}
  \allowbreak \min(\text{deg}(u), \allowbreak \text{deg}(u'))}$
expected work
and $\BigO{\rho_e \log^2 m}$ span w.h.p. Our work bound matches the work bound of
Sariy\"{u}ce and Pinar and therefore, our algorithm is work-efficient.

The overall complexity of butterfly edge peeling is as follows.
\begin{theorem}
Butterfly edge peeling can be performed in $\BigO{\min({\normalfont \mbox{max-b}}_e,\rho_e \log m) +
  \sum_{(u,v) \in E} \sum_{u' \in N(v)} \min(\text{deg}(u),
  \allowbreak \text{deg}(u'))}$ expected work, $\BigO{\rho_e \log^2
  m}$ span w.h.p., and $\BigO{m + {\normalfont \text{max-b}}_e}$ space, where ${\normalfont \text{max-b}}_e$ is the maximum number of per-edge butterflies and $\rho_e$ is the edge peeling complexity. Alternatively, butterfly edge peeling can be performed in $\BigO{\rho_e \log m +
  \sum_{(u,v) \in E} \sum_{u' \in N(v)} \min(\text{deg}(u),
  \allowbreak \text{deg}(u'))}$ expected work, $\BigO{\rho_e \log^2
  m}$ span w.h.p., and $\BigO{m}$ space.
\end{theorem}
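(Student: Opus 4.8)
The plan is to establish correctness first, and then to bound work, span, and space separately, finishing with a restart argument that produces the $\min(\text{max-b}_e,\rho_e\log m)$ term of the first variant. For correctness, I would argue that \algname{peel-e} faithfully implements wing decomposition: it maintains a bucketing structure keyed by current per-edge butterfly counts, repeatedly peels the bucket of minimum count, and after each batch recomputes the affected counts via \algname{update-e}. The central correctness claim is that \algname{update-e} computes exactly the number of butterflies destroyed by peeling the batch $A$. By Lemma~\ref{lem-count}, every butterfly containing a peeled edge $(u_1,v_1)$ is witnessed by some $u_2 \in N(v_1)\setminus\{u_1\}$ together with a $v_2 \in (N(u_1)\cap N(u_2))\setminus\{v_1\}$; enumerating these and decrementing the three companion edges $(u_2,v_1)$, $(u_1,v_2)$, and $(u_2,v_2)$ accounts for each destroyed butterfly. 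The one subtlety is avoiding double-counting when two edges of the same butterfly are peeled in the same round, or when one was peeled earlier: I would verify that the marking arrays described in the text, which skip already-peeled edges and break ties among edges being peeled in the current batch by a fixed rank, ensure that each destroyed butterfly is charged exactly once.

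For the work bound I would amortize over the whole execution. Since each edge is peeled exactly once, summing the cost of the inner loops of \algname{update-e} over all peeled edges gives a total proportional to $\sum_{(u,v)\in E}\sum_{u'\in N(v)}(\text{cost of intersecting }N(u),N(u'))$. Using hash-table representations of the adjacency lists and scanning the smaller list in parallel while probing the larger one, each intersection costs $\BigO{\min(\text{deg}(u),\text{deg}(u'))}$ work, yielding the stated term $\BigO{\sum_{(u,v)\in E}\sum_{u'\in N(v)}\min(\text{deg}(u),\text{deg}(u'))}$. On top of this, the bucketing overhead contributes: using the parallel Fibonacci heap of Section~\ref{sec-fib}, each of the $\rho_e$ rounds performs one delete-min in $\BigO{\log m}$ amortized expected work, while the batched insertions and decrease-keys triggered by changed counts are $\BigO{1}$ amortized per operation and are therefore subsumed by the intersection work. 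This gives the additive $\BigO{\rho_e\log m}$ term.

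For the span I would observe that within a single round the intersections run in parallel with low span, so the span of \algname{update-e} is dominated by the batched bucket updates, which are $\BigO{\log^2 m}$ w.h.p.\ (the decrease-key span from the parallel Fibonacci heap); over $\rho_e$ rounds this accumulates to $\BigO{\rho_e\log^2 m}$ w.h.p. For space, the Fibonacci-heap bucketing stores only non-empty buckets, and \algname{update-e} materializes only the updated per-edge counts in its hash table $B'$ (it never stores wedges or aggregates by endpoint), so the base algorithm runs in $\BigO{m}$ space, establishing the second (alternative) variant.

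Finally, to obtain the $\min(\text{max-b}_e,\rho_e\log m)$ work bound of the first variant without knowing $\rho_e$ in advance, I would use the standard two-phase trick: start with the Fibonacci-heap version and count rounds, and if the round count $q$ reaches $\text{max-b}_e/\log m$, then since $q\le\rho_e$ we have $\text{max-b}_e=\BigO{\rho_e\log m}$, so we restart using the array-based bucketing of Dhulipala \emph{et al.}, which scans $\BigO{\text{max-b}_e}$ buckets for a total of $\BigO{\text{max-b}_e + \sum_{(u,v)\in E}\sum_{u'\in N(v)}\min(\text{deg}(u),\text{deg}(u'))}$ work and $\BigO{\text{max-b}_e}$ additional space; otherwise the Fibonacci-heap run completes with the $\BigO{\rho_e\log m}$ term. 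Taking the better of the two regimes yields the claimed work bound with $\BigO{m+\text{max-b}_e}$ space. The main obstacle I anticipate is the correctness bookkeeping for double-counting under batched peeling, together with correctly invoking the parallel Fibonacci heap's amortized bounds in the batched setting; once those pieces are in place, the work, span, and space sums themselves are routine.
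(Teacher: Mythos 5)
Your proposal is correct and takes essentially the same approach as the paper's own proof: the same per-edge intersection argument with hash-table adjacency lists giving the $\BigO{\sum_{(u,v) \in E} \sum_{u' \in N(v)} \min(\text{deg}(u), \text{deg}(u'))}$ term, the same parallel Fibonacci-heap bucketing with span dominated by batch decrease-key, and the same round-counting restart (switching to the bucketing structure of Dhulipala \emph{et al.} once $q$ reaches $\text{max-b}_e/\log m$) to obtain the $\min(\text{max-b}_e, \rho_e \log m)$ work term at the cost of $\BigO{m + \text{max-b}_e}$ space. The only difference is presentational: you spell out the correctness and double-counting bookkeeping more explicitly than the paper does, which simply defers to the marking-array remark in the algorithm description.
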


\begin{algorithm}[t]
  \footnotesize
\caption{Parallel vertex peeling (tip decomposition)}
 \begin{algorithmic}[1]
\Procedure{WUPDATE-V}{$B,A,W_e, W_c$}
\State Initialize an additive parallel hash table $B'$ to store updated butterfly counts
\ParFor{$x$ in $A$}
  \ParFor{$(y_1, y_2)$ in $W_c(x)$} \Comment{$y_1$ and $y_2$ are endpoints}
       \ParFor{$z$ in $(W_e(y_1))(y_2)$ where $z\neq x$ and $z$ has not been previously peeled} \Comment{$z$ is the center}
       \State Insert $(z, 1)$ in $B'$ \Comment{Update butterfly counts per center}
       \EndParFor
  \EndParFor
  \ParFor{$(y, C)$ in $W_e(x)$ where $y$ has not been previously peeled} 
  \State \Comment{$y$ is the second endpoint, and $C$ is the list of centers}
    \State Insert $(y, {|C| \choose 2})$ in $B'$ \Comment{Update butterfly counts per endpoint}
  \EndParFor
\EndParFor
\State Subtract corresponding counts $B'$ from $B$
\State \Return $B$
\EndProcedure
\medskip
  \Procedure {WPEEL-V}{$G=(U,V,E), B$}
  \Comment $B$ is an array of butterfly counts per vertex
  \State $G' = (X,E') \leftarrow$ \algname{preprocess}($G$)
\State Let $W_e$ be a nested parallel hash table with endpoints as keys and parallel hash tables as values, which are each keyed by the second endpoint and contains lists of centers as values
\State Let $W_c$ be a parallel hash table with centers as keys and lists of endpoint pairs as values
\State $g: ((u_1, u_2), v) \rightarrow$ Insert $(u_1, (u_2, v))$ and $(u_2, (u_1, v))$ in $W_e$, and $(v, (u_1, u_2))$ in $W_c$
\State \algname{get-wedges}($G'$, $g$) \Comment{Store all wedges in $W_e$ and $W_c$}
\State Let $K$ be a bucketing structure mapping $U$ to buckets based on \# of butterflies
\State $f \leftarrow 0$
\While{$f < |U|$}
\State $A \leftarrow$ all vertices in next bucket (to be peeled)
\State $f \leftarrow f + |A|$
\State $B\leftarrow $\algname{wupdate-v}$(B,A,W_e, W_c)$ \Comment{Update \# butterflies}
\State Update the buckets of changed vertices in $B$
\EndWhile
\State \Return $K$
  \EndProcedure
 \end{algorithmic}
 %\caption{}
 \label{alg-peel-store}
\end{algorithm}

\subsubsection{Per vertex storing all wedges}
We now give an alternate parallel vertex peeling algorithm that stores all wedges. 
The algorithm is given in \algname{wpeel-v} (Algorithm~\ref{alg-peel-store}). As in Section \ref{sec-peel-per-vert}, 
we peel vertices considering only the bipartition of the graph that 
produces the fewest number of
wedges (considering the vertices in that bipartition as endpoints), which we take to be 
$U$, without loss of generality.

The overall framework of \algname{wpeel-v} is similar to that of \algname{peel-v}; we take as input the per-vertex 
butterfly counts from the \framework counting framework, and use a bucketing structure to maintain butterfly counts
and peel vertices (Lines 18--24). The main difference between \algname{wpeel-v} and \algname{peel-v} is in computing updated 
butterfly counts after peeling vertices (Line 23).

\algname{wpeel-v} stores the wedges 
upfront, keyed by each endpoint and by each center, in $W_e$ and $W_c$ respectively (Lines 14--17). Then, in the main subroutine 
\algname{wupdate-v} (Lines 2--11), instead of iterating through the two-hop neighborhood of each vertex $x$ 
to obtain the requisite wedges, we retrieve all wedges containing $x$ by looking $x$ up in $W_e$ and $W_c$. 
Then, to compute updated butterfly counts, we use the equations in Lemma~\ref{lem-count} as before. More precisely, 
we have two scenarios: wedges in which $x$ is a center and wedges in which $x$ is an endpoint. In the first case, we retrieve 
such wedges from $W_c$ (Lines 4--6), and $x$ contributes precisely one butterfly to other wedges that share the same endpoints. 
In the second case, we retrieve such wedges from $W_e$ (Lines 7--9), and Equation~(\ref{eq-vert}) dictates the number of butterflies that $x$ contributes to the 
second endpoint of the wedge.

Note that we do not need to update butterfly counts on any vertex that is not in the same bipartition as $x$. For example, we do not update butterfly counts on $y_1$ and $y_2$ from Line 4 because they are not in the same bipartition as $x$. 

Again, we avoid double-counting butterflies by using a separate array to mark previously peeled vertices and other vertices in the process of being peeled, the details of which are omitted in the pseudocode. We filter these vertices out in Lines 5 and 7. Note that there is no need to check $y_1$ and $y_2$ in Line 4, and the vertices in $C$ in Line 7, because these vertices are not in the same bipartition as $x$ and by construction are not peeled.

The work of \algname{wpeel-v} is dominated by the total work spent in
the \algname{wupdate-v} subroutine. The work of \algname{wupdate-v} over all calls is given by the total number of butterflies $b$. Essentially, iterating over wedges that share the same endpoints in Line 5 amounts to iterating over butterflies, and although we do not remove previously peeled vertices from $W_c$ and $W_e$, this does not affect the work bound because each butterfly can be found at most a constant number of times.

Additionally, the space complexity is given by the space needed to store all wedges in $W_e$ and $W_c$, or $\BigO{\alpha m}$.

Finally, the span of our algorithm follows from the span of \algname{peel-v}. Note that as for \algname{peel-v}, we have two scenarios to consider. In the first scenario, we use no additional space (to the $\BigO{\alpha m}$ space already accounted for), and we use our Fibonacci heap as the bucketing structure, adding $\BigO{\rho_v \log m}$ to our work complexity. In the second scenario, we use the original bucketing structure of Dhulipala \emph{et al.}~\cite{DhBlSh18} (note that there is no need to start with the Fibonacci heap and switch to Dhulipala \emph{et al.}'s bucketing structure here, because of the $\BigO{b}$ in our work complexity). Then, the span improves to $\BigO{\rho_v \log n}$ w.h.p. and we add $\text{max-b}_v$ to the space complexity.

The overall complexity of butterfly vertex peeling is as follows.
\begin{theorem}
Butterfly vertex peeling can be performed in $\BigO{\rho_v \log m + b}$ expected work, $\BigO{\rho_v\log^2 n}$ span
w.h.p., and $\BigO{\alpha m}$ space, where $b$ is the total number of butterflies and $\rho_v$ is the vertex peeling complexity. Alternatively, butterfly vertex peeling can be performed in $\BigO{b}$ expected work, $\BigO{\rho_v \log n}$ span w.h.p., and $\BigO{\alpha m + \allowbreak {\normalfont \text{max-b}}_v}$ space, where ${\normalfont \text{max-b}}_v$ is the maximum number of per-vertex butterflies.
\end{theorem}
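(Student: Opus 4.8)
The plan is to bound the space, work, and span of \algname{wpeel-v} (Algorithm~\ref{alg-peel-store}) separately, reusing the counting analysis and the \algname{peel-v} analysis wherever possible. For the space bound, I would first note that the call to \algname{get-wedges} on the preprocessed graph (Line~17) inserts each of the $\BigO{\alpha m}$ ranked wedges into $W_e$ and $W_c$ a constant number of times; since every entry is of constant size, the two tables occupy $\BigO{\alpha m}$ space, which dominates the bucketing structure in the first variant. This establishes the $\BigO{\alpha m}$ space bound.

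The crux of the argument is the work bound, which I would obtain by showing that the total work across all invocations of \algname{wupdate-v} is $\BigO{b}$. Each vertex $x \in U$ is peeled in exactly one round, so its entries in $W_e$ and $W_c$ are scanned once. I would charge the work done when peeling $x$ to the butterflies containing $x$: the center case (Lines~4--6) iterates over endpoint pairs in $W_c(x)$ and their co-centers in $W_e$, enumerating precisely the butterflies in which $x$ is a center, while the endpoint case (Lines~7--9) spends $\BigO{1}$ work per shared-endpoint pair via the ${|C| \choose 2}$ term. The main obstacle is that peeled vertices are never deleted from $W_e$ or $W_c$, so I must argue that these stale entries do not inflate the work: the filter conditions in Lines~5 and~7 reject a previously peeled vertex in $\BigO{1}$ time, and by breaking ties by rank among vertices peeled in the same round, each butterfly is enumerated at most a constant number of times over the entire execution. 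Summing the per-butterfly charges then gives $\BigO{b}$ total work for \algname{wupdate-v}.

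Finally, for the span and the two stated tradeoffs, I would adapt the analysis of \algname{peel-v}. Across the $\rho_v$ peeling rounds, each round performs a batched delete of the minimum bucket followed by a batch of bucket updates whose span is $\BigO{\log^2 n}$ w.h.p.\ from the aggregation primitives, giving $\BigO{\rho_v \log^2 n}$ total span. For the first variant, the parallel Fibonacci heap supports the delete-min operations in $\BigO{\rho_v \log m}$ amortized expected work using no additional space, yielding $\BigO{\rho_v \log m + b}$ work and $\BigO{\alpha m}$ space. For the second variant, I would replace the heap with the bucketing structure of Dhulipala \emph{et al.}; because the $\BigO{b}$ term already dominates the delete-min cost, the work simplifies to $\BigO{b}$ and the span improves to $\BigO{\rho_v \log n}$ w.h.p., at the cost of the $\BigO{\text{max-b}_v}$ additional space needed to materialize all buckets. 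Unlike in \algname{peel-v}, no guess-and-restart is needed here, since the $\BigO{b}$ enumeration work subsumes the bucketing overhead in both cases.
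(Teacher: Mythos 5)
Your proposal is correct and follows essentially the same route as the paper's own analysis: it bounds the space by the $\BigO{\alpha m}$ wedges stored in $W_e$ and $W_c$, charges the total \algname{wupdate-v} work to butterflies (with the same observation that stale, never-deleted entries cost only a constant factor since each butterfly is found $\BigO{1}$ times), and derives the two stated tradeoffs by swapping the parallel Fibonacci heap for the bucketing structure of Dhulipala \emph{et al.}, correctly noting that the $\BigO{b}$ term makes the guess-and-restart scheme of \algname{peel-v} unnecessary. No meaningful differences from the paper's proof.
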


\begin{algorithm}[t]
  \footnotesize
\caption{Parallel edge peeling (wing decomposition)}
 \begin{algorithmic}[1]
\Procedure{WUPDATE-E}{$G=(U,V,E), B,A,W_e, W$}
\State Initialize an additive parallel hash table $B'$ to store updated butterfly counts
\ParFor{$(x,y)$ in $A$}
  \ParFor{$z$ in $W(x, y)$ where $(y,z)$ has not been previously peeled} \Comment{$z$ is the second endpoint}
       \ParFor{$w$ in $(W_e(x))(z)$ where $w\neq y$, and $(x,w)$ and $(w,z)$ have not been previously peeled} 
       \State \Comment{$w$ is the center}
       \State Insert $((x,w), 1)$, $((w,z),1)$, and $((y,z),1)$ in $B'$ \Comment{Update butterfly counts per edge}
       \EndParFor
  \EndParFor
  \ParFor{$z$ in $W(y, x)$ where $(x,z)$ has not been previously peeled} \Comment{$z$ is the second endpoint}
       \ParFor{$w$ in $(W_e(y))(z)$ where $w\neq x$, and $(y,w)$ and $(w,z)$ have not been previously peeled} 
       \State \Comment{$w$ is the center}
       \State Insert $((y,w), 1)$, $((w,z),1)$, and $((x,z),1)$ in $B'$ \Comment{Update butterfly counts per edge}
       \EndParFor
  \EndParFor
\EndParFor
\State Subtract corresponding counts $B'$ from $B$
\State \Return $B$
\EndProcedure
\medskip
  \Procedure {WPEEL-E}{$G=(U,V,E), B$}
  \Comment $B$ is an array of butterfly counts per edge
  \State $G' = (X,E') \leftarrow$ \algname{preprocess}($G$)
\State Let $W_e$ be a nested parallel hash table with endpoints as keys and parallel hash tables as values, which are each keyed by the second endpoint and contains lists of centers as values
\State Let $W$ be a parallel hash table with edges as keys and lists of second endpoints as values
\State $g: ((u_1, u_2), v) \rightarrow$ Insert $(u_1, (u_2, v))$ and $(u_2, (u_1, v))$ in $W_e$, and $((u_1,v), u_2)$ and $((u_2,v), u_1)$ in $W$
\State \algname{get-wedges}($G'$, $g$) \Comment{Store all wedges in $W_e$ and $W$}
\State Let $K$ be a bucketing structure mapping $U$ to buckets based on \# of butterflies
\State $f \leftarrow 0$
\While{$f < |U|$}
\State $A \leftarrow$ all edges in next bucket (to be peeled)
\State $f \leftarrow f + |A|$
\State $B\leftarrow $\algname{wupdate-e}$(G,B,A,W_e, W)$ \Comment{Update \# butterflies}
\State Update the buckets of changed edges in $B$
\EndWhile
\State \Return $K$
  \EndProcedure
 \end{algorithmic}
 \label{alg-peel-e-store}
\end{algorithm}

\subsubsection{Per edge storing all wedges}
We now give an alternate parallel edge peeling algorithm that stores all wedges; this algorithm is based on the sequential algorithm given by Wang \textit{et al.} \cite{wang2020efficient}, which takes $\BigO{ b}$ work and $\BigO{\alpha m}$ additional space.
The algorithm is given in \algname{wpeel-e} (Algorithm~\ref{alg-peel-e-store}), and it largely follows \algname{wpeel-v}.

Again, the overall framework of \algname{wpeel-e} is similar to that of \algname{peel-e}, where we take as input the per-edge butterfly
counts from the \framework counting framework, and use a bucketing structure to maintain butterfly counts and peel edges (Lines 20--26).

Like \algname{wpeel-v}, \algname{wpeel-e} stores the wedges 
upfront, keyed by each endpoint in $W_e$ and keyed by the edges in $W$ (Lines 15--19). Then, in the main subroutine 
\algname{wupdate-e} (Lines 2--13), we retrieve all wedges containing each edge $(x,y)$  by looking up the edge in $W$. We first 
consider the case where $x$ is an endpoint and $y$ is a center (Lines 4--7), and we find each butterfly that shares that wedge (Line 5). 
Then, we consider the case where $x$ is a center and $y$ is an endpoint (Lines 8--11), and we repeat this process.

Again, we avoid double-counting butterflies by using a separate array to mark previously peeled edges and other edges in the process of being peeled, the details of which are omitted in the pseudocode. We filter these edges out in Lines 4, 5, 8, and 9.

The work of \algname{wpeel-e} is dominated by the total work spent in
the \algname{wupdate-e} subroutine. The work of \algname{wupdate-e} over all calls is given by the total number of butterflies $b$. Essentially, we find each butterfly individually in Line 7 and 11, and although we do not remove previously peeled edges from $W_e$ and $W$, this does not affect the work bound because each butterfly can be found at most a constant number of times.

Additionally, the space complexity is given by the space needed to store all wedges in $W_e$ and $W$, or $\BigO{\alpha m}$.

Finally, the span of our algorithm follows from the span of \algname{peel-e}. Note that as for \algname{peel-e}, we have two scenarios to consider. In the first scenario, we use no additional space (to the $\BigO{\alpha m}$ space already accounted for), and we use our Fibonacci heap as the bucketing structure, adding $\BigO{\rho_e \log m}$ to our work complexity. In the second scenario, we use the original bucketing structure of Dhulipala \emph{et al.}~\cite{DhBlSh18} (note that there is no need to start with the Fibonacci heap and switch to Dhulipala \emph{et al.}'s bucketing structure here, because of the $\BigO{b}$ in our work complexity). Then, the span improves to $\BigO{\rho_e \log n}$ w.h.p. and we add $\text{max-b}_e$ to the space complexity.

The overall complexity of butterfly edge peeling is as follows.
\begin{theorem}
Butterfly edge peeling can be performed in $\BigO{\rho_e \log m + b}$ expected work, $\BigO{\rho_e\log^2 n}$ span
w.h.p., and $\BigO{\alpha m}$ space, where $b$ is the total number of butterflies and $\rho_e$ is the edge peeling complexity. Alternatively, butterfly edge peeling can be performed in $\BigO{b}$ expected work, $\BigO{\rho_e \log n}$ span w.h.p., and $\BigO{\alpha m + \allowbreak {\normalfont \text{max-b}}_e}$ space, where ${\normalfont \text{max-b}}_e$ is the maximum number of per-edge butterflies.
\end{theorem}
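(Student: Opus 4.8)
The plan is to establish the work, span, and space bounds separately, transplanting the round structure and bucketing analysis already developed for \algname{peel-e} and isolating the single new ingredient: an amortized $\BigO{b}$ bound on the total work of \algname{wupdate-e} summed over all peeling rounds. First I would dispatch the space and the upfront cost. The tables $W_e$ and $W$ are built by one call to \algname{get-wedges} with the insertion function $g$, which by the wedge-retrieval lemma enumerates $\BigO{\alpha m}$ wedges in $\BigO{\alpha m}$ expected work and $\BigO{\log m}$ span w.h.p.; each wedge triggers a constant number of parallel hash-table insertions, so $W_e$ and $W$ are materialized in $\BigO{\alpha m}$ expected work and occupy $\BigO{\alpha m}$ space. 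Since \algname{wupdate-e} writes only into the additive table $B'$, whose live entries are bounded by the edges it touches, the total space is $\BigO{\alpha m}$, matching the first variant; this storage can moreover be folded into the counting phase that supplies the input counts, so it is not the dominant work term.

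The heart of the argument is bounding the total work of \algname{wupdate-e}, over all rounds, by $\BigO{b}$. I would show that each execution of an innermost body (Lines 7 and 11) is chargeable to one butterfly destroyed by peeling the current edge $(x,y)$: looping over $z \in W(x,y)$ and then $w \in (W_e(x))(z)$, together with the symmetric loop over $W(y,x)$ and $(W_e(y))(z)$, enumerates through the stored wedges exactly the four-vertex sets $\{x,y,w,z\}$ that form a butterfly containing $(x,y)$, consistent with the per-edge equation of Lemma~\ref{lem-count}, and the ``not previously peeled'' guards on Lines 4, 5, 8, and 9 restrict attention to butterflies whose other three edges are still present. The key claim is that each butterfly is enumerated only $\BigO{1}$ times over the whole execution: a butterfly has four edges and is enumerated only when one of them is peeled while the remaining three survive, and since every edge is bucketed and peeled exactly once, a fixed butterfly participates in at most four such enumerations. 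Multiplying by the $\BigO{1}$ expected cost of each insertion into $B'$ gives $\BigO{b}$ expected work in aggregate; not physically deleting peeled edges from $W_e$ and $W$ is harmless precisely because the guards already cap re-enumeration at this constant factor.

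The main obstacle is making this constant-enumeration claim airtight when several edges of the same butterfly are peeled in the \emph{same} round. The ``previously peeled'' guards exclude only edges removed in earlier rounds, so two or three co-bucketed edges of one butterfly could each enumerate it, which would over-subtract from $B$ and break correctness. I would handle this exactly as in \algname{update-e} and \algname{wupdate-v}, using an auxiliary array to mark edges currently being peeled and breaking ties among them by rank, so that each still-intact butterfly is charged to a unique peeled edge. Crucially, even under the crudest tie-breaking a butterfly is still enumerated at most four times, so the $\BigO{b}$ work bound is insensitive to how ties are broken; only correctness depends on the tie-breaking rule.

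Finally, I would assemble the span and the bucketing costs by reusing the \algname{peel-e} analysis, since \algname{wupdate-e} merely replaces neighborhood intersections with $\BigO{1}$-expected hash lookups of pre-stored wedges and leaves the round structure intact, giving $\BigO{\log n}$ span per round w.h.p. In the first variant the parallel Fibonacci heap serves as the bucketing structure, contributing $\BigO{\rho_e \log m}$ amortized expected work from delete-min over the $\rho_e$ rounds and $\BigO{\rho_e \log^2 n}$ span w.h.p.\ (the extra logarithmic factor coming from batched decrease-key), for total work $\BigO{\rho_e \log m + b}$ in $\BigO{\alpha m}$ space. In the second variant I would instead use the bucketing structure of Dhulipala \emph{et al.}~\cite{DhBlSh18}: because the $\BigO{b}$ term already dominates the $\BigO{\text{max-b}_e}$ cost of scanning its buckets, the switch-over trick needed in \algname{peel-e} is unnecessary, the work simplifies to $\BigO{b}$, the span improves to $\BigO{\rho_e \log n}$ w.h.p., and the space grows by the $\BigO{\text{max-b}_e}$ needed to materialize the structure, yielding $\BigO{\alpha m + \text{max-b}_e}$ total.
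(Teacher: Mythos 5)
Your proposal is correct and follows essentially the same route as the paper's own argument: charging the work of \algname{wupdate-e} to butterflies (each found $\BigO{1}$ times despite stale entries in $W_e$ and $W$), handling same-round double counting via marking and rank tie-breaking, bounding space by the $\BigO{\alpha m}$ wedge storage, and splitting into the Fibonacci-heap scenario versus the Dhulipala \emph{et al.}\ bucketing scenario (where $\text{max-b}_e = \BigO{b}$ makes the switch-over trick unnecessary). Your write-up is in fact more explicit than the paper's terse proof, particularly the at-most-four-enumerations charging argument and the observation that correctness, but not the work bound, depends on the tie-breaking rule.
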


\subsection{Approximate counting}\label{sec:approx-count}
Sanei-Mehri \textit{et al.}~\cite{SaSaTi18} describe for computing
approximate total butterfly counts based on sampling and graph
sparsification.  Their sparsification methods are shown to have better
performance, and so we focus on parallelizing these methods.  The
methods are based on creating a sparsified graph, running an exact
counting algorithm on the sparsified graph, and scaling up the count
returned to obtain an unbiased estimate of the total butterfly
count. 

The \defn{edge sparsification} method sparsifies the graph by keeping
each edge independently with probability $p$. The butterfly count of
the sparsified graph is divided by $p^4$ to obtain an unbiased
estimate (since each butterfly remains in the sparsified graph with
probability $p^4$). Our parallel algorithm simply applies a filter over
the adjacency lists of the graph, keeping an edge with probability
$p$. This takes $\BigO{m}$ work, $\BigO{\log m}$ span, and $\BigO{m}$ space.

The \defn{colorful sparsification} method sparsifies the graph by
assigning a random color in $[1,\ldots,\lceil 1/p \rceil]$ to each
vertex and keeping an edge if the colors of its two endpoints
match. Sanei-Mehri \textit{et al.}~\cite{SaSaTi18} show that each
butterfly is kept with probability $p^3$, and so the butterfly count
on the sparsified graph is divided by $p^3$ to obtain an unbiased
estimate.  Our parallel algorithm uses a hash function to map each
vertex to a color, and then applies a filter over the adjacency lists
of the graph, keeping an edge if its two endpoints have the same
color. This takes $\BigO{m}$ work, $\BigO{\log m}$ span, and $\BigO{m}$ space.

The variance bounds of our estimates are the same as shown by
Sanei-Mehri \textit{et al.}~\cite{SaSaTi18}, and we refer the reader
to their paper for details.  The expected number of edges in both
methods is $pm$, and by plugging this into the bounds for exact
butterfly counting, and including the cost of sparsification, we
obtain the following theorem.

\begin{theorem}
Approximate butterfly counting with sampling rate $p$ can be performed
in $\BigO{(1+\alpha' p) m}$ expected work and $\BigO{\log m}$ span
w.h.p.,
and $\BigO{\min(n^2,(1+\alpha' p)m}$ space,
where $\alpha'$ is the arboricity of the sparsified graph.
\end{theorem}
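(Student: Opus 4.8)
The plan is to treat this as a direct composition of three stages, each of which has already been analyzed: (i) building the sparsified graph, (ii) running an exact total butterfly count on it, and (iii) scaling the returned count. Since the theorem asserts only work, span, and space bounds — the unbiasedness and variance of the estimate are inherited verbatim from Sanei-Mehri \emph{et al.}~\cite{SaSaTi18} — there is no correctness argument to reprove here, and the entire task reduces to summing the costs of the three stages and simplifying the resulting space expression.

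For the work, I would first invoke the two sparsification subroutines analyzed immediately above: both edge and colorful sparsification produce the sparsified graph $G'$ in $\BigO{m}$ work, and each retains a given edge with probability at most $p$, so $G'$ has expected edge count $pm$. I would then feed $G'$ into the exact total-count algorithm, which shares the $\BigO{\alpha m}$-work profile of the per-vertex and per-edge algorithms of Sections~\ref{sec-count-per-vert}--\ref{sec-count-per-edge} (for total counts the per-wedge contributions are simply summed in parallel). Substituting the sparsified graph's arboricity $\alpha'$ and its expected edge count $pm$ into that bound gives $\BigO{\alpha' pm}$ expected work for the counting stage, and the final division by $p^4$ (edge) or $p^3$ (colorful) is $\BigO{1}$. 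Summing the $\BigO{m}$ sparsification cost with the $\BigO{\alpha' pm}$ counting cost yields the claimed $\BigO{(1+\alpha' p)m}$ expected work.

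For the span, each stage is $\BigO{\log m}$ — sparsification is a filter over the adjacency lists, and exact counting on $G'$ (which has at most $m$ edges) runs in $\BigO{\log m}$ span w.h.p. — while scaling is $\BigO{1}$, so the total span is $\BigO{\log m}$ w.h.p. For the space, sparsification uses $\BigO{m}$ and the exact count on $G'$ uses $\BigO{\min(n^2, \alpha' pm)}$ (its hash table is keyed by endpoint pairs, of which there are at most $n^2$ and at most $\BigO{\alpha' pm}$). Adding these gives $\BigO{m + \min(n^2, \alpha' pm)}$, which I would simplify to the stated $\BigO{\min(n^2, (1+\alpha' p)m)}$ using $m \le n^2$ for a simple graph: when $\alpha' pm \le n^2$ the sum is $\BigO{(1+\alpha' p)m}$, and otherwise it is $\BigO{n^2}$.

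The only genuinely delicate point is the expectation in the work bound. Both $|E'|$ and the sparsified arboricity $\alpha'$ are random and, a priori, correlated, so $\BigO{\alpha' \cdot |E'|}$ is not literally $\BigO{\alpha' \cdot \mathbb{E}[|E'|]}$ without justification. The clean resolution, matching the statement's phrasing that ``$\alpha'$ is the arboricity of the sparsified graph,'' is to state the counting bound conditioned on the realized graph $G'$ — so $\alpha'$ is a fixed parameter of $G'$ — and then take the expectation only over the edge-retention randomness governing $|E'|$, whose expectation is exactly $pm$ in both methods. I expect this to be the one step worth spelling out; everything else is a routine addition of previously established bounds.
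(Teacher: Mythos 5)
Your proposal is correct and follows essentially the same route as the paper: the paper's justification is precisely to compose the $\BigO{m}$-work, $\BigO{\log m}$-span sparsification step with the exact counting bounds instantiated on the sparsified graph (expected $pm$ edges, arboricity $\alpha'$), then add the costs. Your closing remark on conditioning on the realized graph $G'$ so that $\alpha'$ is a fixed parameter is a sensible clarification of a point the paper leaves implicit, but it does not change the argument.
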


\subsection{Approximate degree ordering}\label{sec-alg-approx}
We show now that using approximate degree ordering in our
preprocessing step also gives work-efficient bounds for butterfly
counting. The proof for this closely follows Chiba and
Nishizeki's~\cite{ChNi85} proof for degree ordering; notably, Chiba
and Nishizeki prove that $\BigO{\sum_{(u,v)\in E} \min
  (\text{deg}(u),\allowbreak \text{deg}(v))} =\allowbreak \BigO{\alpha
  m}$.
\begin{theorem}
Butterfly counting per vertex and per edge using approximate degree
ordering is work-efficient.
\end{theorem}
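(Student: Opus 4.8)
The plan is to reduce the claim to a single counting bound: the total number of wedges emitted by \algname{get-wedges} under approximate degree ordering is $\BigO{\alpha m}$. Both \algname{count-v} and \algname{count-e} are dominated by this retrieval together with the linear-work aggregation in \algname{get-freq} (and a second linear pass over the wedges), so an $\BigO{\alpha m}$ wedge count immediately yields $\BigO{\alpha m}$ expected work, matching the work-efficient bound of exact degree ordering. First I would recall the structural fact underlying the retrieval: \algname{get-wedges} emits a wedge $(x_1, x_2, y)$ exactly when $x_1$ is the minimum-rank vertex among the three, i.e.\ $\rank{x_1} < \rank{y}$ and $\rank{x_1} < \rank{x_2}$. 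In particular the minimum-rank vertex is always one of the two endpoints, never the center, so each emitted wedge has a well-defined minimum-rank endpoint.

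The only place the exact-degree analysis uses the ordering is the implication that $\rank{y} > \rank{x_1}$ forces $\text{deg}(y) \le \text{deg}(x_1)$, which lets one bound the number of wedges sharing a fixed edge $(x_1, y)$ by $\min(\text{deg}(x_1), \text{deg}(y))$. The key adaptation is to replace this by its approximate analogue. Since approximate degree ordering ranks vertices by decreasing $\lfloor \log \text{deg}(\cdot) \rfloor$, the inequality $\rank{y} > \rank{x_1}$ now only gives $\lfloor \log \text{deg}(y)\rfloor \le \lfloor \log \text{deg}(x_1)\rfloor$, and hence $\text{deg}(y) < 2\,\text{deg}(x_1)$. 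A short case analysis on whether $\text{deg}(y)$ is smaller or larger than $\text{deg}(x_1)$ then shows $\text{deg}(y) < 2\min(\text{deg}(x_1), \text{deg}(y))$ in both regimes.

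With this in hand I would run the charging argument. Assign each emitted wedge $(x_1, x_2, y)$ to the edge $(x_1, y)$ joining its minimum-rank endpoint to its center. For a fixed edge, the number of wedges charged to it is at most the number of admissible far endpoints $x_2 \in N(y)$, which is at most $\text{deg}(y) < 2\min(\text{deg}(x_1),\text{deg}(y))$ by the previous step. Because $x_1$ is forced to be the lower-rank endpoint of the edge $(x_1, y)$ and ranks are totally ordered, each edge receives charge in only one orientation; summing over all edges therefore bounds the total wedge count by $2\sum_{(u,v)\in E}\min(\text{deg}(u),\text{deg}(v)) = \BigO{\alpha m}$, using the fact of Chiba and Nishizeki~\cite{ChNi85}. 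Feeding this $\BigO{\alpha m}$ bound into the analyses of Sections~\ref{sec-count-per-vert} and~\ref{sec-count-per-edge} gives $\BigO{\alpha m}$ expected work for both per-vertex and per-edge counting, establishing work-efficiency.

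I expect the main obstacle to be the middle step: pinning down the correct factor-$2$ degree relation from the floor-of-log ordering and verifying that it still collapses to a constant multiple of $\min(\text{deg}(x_1),\text{deg}(y))$ regardless of which of the two degrees is larger. This is precisely where approximate ordering departs from exact ordering, and it is the reason the wedge count only degrades by a constant factor. The remaining bookkeeping (each wedge charged once, each edge charged in a single orientation) is routine once the degree relation is in place.
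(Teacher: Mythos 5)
Your proposal is correct and follows essentially the same route as the paper's proof: the paper also reduces the work to the wedge count $\BigO{\sum_{x \in X}\sum_{y \in N_x(x)} \text{deg}_x(y)}$, uses the same factor-$2$ consequence of the log-degree ordering ($\text{deg}_x(y) \le \text{deg}(y) \le 2\,\text{deg}(x)$, since otherwise $y$ would precede $x$), notes that each edge appears exactly once in the summation, and concludes with $\BigO{\sum_{(x,y)\in E} 2\min(\text{deg}(x),\text{deg}(y))} = \BigO{\alpha m}$ via Chiba and Nishizeki. Your per-edge charging formulation and explicit case analysis on the $\min$ are just a more spelled-out presentation of the identical argument.
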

\begin{proof}
The total work of our counting algorithms, as discussed in
Sections~\ref{sec-count-per-vert} and~\ref{sec-count-per-edge}, is
given precisely by the number of wedges that we must process, or for a
preprocessed graph $G' = (X, E')$, $\BigO{\sum_{x \in X} \sum_{y \in
    N_x(x)} \text{deg}_x(y)}$.

We must have that $\text{deg}_x(y) \leq \text{deg}(y) \leq 2 \cdot
\text{deg}(x)$; otherwise, $y$ would appear before $x$ in approximate
degree order. Moreover, in our double summation, each edge appears
precisely once by virtue of our ordering; thus, our bound becomes
$\BigO{\sum_{(x,y)\in E} (2 \cdot  \min (\text{deg}(x),\allowbreak
  \text{deg}(y)))} =\allowbreak \BigO{\alpha m}$, as desired.
\end{proof}

\subsection{Complement degeneracy ordering}\label{sec-alg-approx2}
We also show that using complement degeneracy ordering and approximate
complement degeneracy ordering in our preprocessing step similarly
gives work-efficient bounds for butterfly counting. As before, the
proof for this closely follows from Chiba and
Nishizeki's~\cite{ChNi85} proof for degree ordering.

\begin{theorem}\label{thm-codegen}
Butterfly counting per vertex and per edge using complement degeneracy 
ordering is work-efficient.
\end{theorem}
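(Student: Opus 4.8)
The plan is to mirror the approximate-degree-ordering proof (the theorem in Section~\ref{sec-alg-approx}): the total work of both the per-vertex and per-edge counting algorithms equals the number of processed wedges, $W = \sum_{x \in X}\sum_{y \in N_x(x)}\text{deg}_x(y)$, plus an $\BigO{\alpha m}$ extraction cost that is independent of the ordering. So it suffices to show $W = \BigO{\alpha m}$. As in that proof, the double sum ranges over each edge exactly once, oriented from its lower-ranked endpoint $x$ to its higher-ranked endpoint $y$, and I would bound $\text{deg}_x(y) \le \min(\text{deg}(x), \text{deg}(y))$ so that Chiba and Nishizeki's identity $\sum_{(u,v)\in E}\min(\text{deg}(u),\text{deg}(v)) = \BigO{\alpha m}$~\cite{ChNi85} applies directly.

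First I would fix the rank convention so that complement degeneracy order mirrors decreasing-degree order: a vertex peeled earlier (of larger current degree) receives a smaller rank. One of the two bounds, $\text{deg}_x(y) \le \text{deg}(y)$, is immediate for any ordering. The crux is the complementary bound $\text{deg}_x(y) \le \text{deg}(x)$, which is exactly where the complement degeneracy property is used. Let $G_x$ denote the residual subgraph at the start of the round in which $x$ is peeled, i.e.\ the subgraph induced on all vertices of rank at least $\rank{x}$. By definition of complement degeneracy order, $x$ is a maximum-degree vertex of $G_x$, so every neighbor $y$ of $x$ with $\rank{y} > \rank{x}$ (which lies in $G_x$) satisfies $\text{deg}_{G_x}(y) \le \text{deg}_{G_x}(x) \le \text{deg}(x)$. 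Since the neighbors of $y$ counted by $\text{deg}_x(y)$ all have rank greater than $\rank{x}$ and hence lie in $G_x$, we obtain $\text{deg}_x(y) \le \text{deg}_{G_x}(y) \le \text{deg}(x)$.

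Combining the two bounds gives $\text{deg}_x(y) \le \min(\text{deg}(x), \text{deg}(y))$, and summing over all edges (each appearing once, with its lower-ranked endpoint as $x$) yields $W \le \sum_{(u,v)\in E}\min(\text{deg}(u),\text{deg}(v)) = \BigO{\alpha m}$. Thus both counting algorithms run in $\BigO{\alpha m}$ expected work, matching the best sequential bound, and are work-efficient.

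I expect the main obstacle to be the bound $\text{deg}_x(y) \le \text{deg}(x)$, since, unlike degree ordering, complement degeneracy order does not directly compare the static degrees of adjacent vertices; the argument must instead go through the dynamic maximum-degree property of the residual graph $G_x$. Defining $G_x$ by the peeling round (rather than by a single peeled vertex) conveniently covers the parallel case, where an entire round of maximum-degree vertices is removed at once: any same-round neighbor $y$ still attains $\text{deg}_{G_x}(y) \le \text{deg}_{G_x}(x)$ since both equal the round maximum. Finally I would note why the $\min$ is essential rather than either bound alone: using $\text{deg}_x(y) \le \text{deg}(x)$ by itself sums to $\sum_{\text{edges}}\text{deg}(\text{lower-ranked endpoint})$, and since lower rank tends to correspond to larger degree here, this can be as large as $\sum_v \text{deg}(v)^2$; it is the symmetric $\min$ that lets us invoke the Chiba--Nishizeki identity.
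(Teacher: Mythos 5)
Your proposal is correct and takes essentially the same route as the paper's proof: the paper likewise reduces the work to $\sum_{x \in X}\sum_{y \in N_x(x)} \text{deg}_x(y)$, proves $\text{deg}_x(y) \leq \text{deg}(x)$ via the residual degree $\text{deg}^r(y)$ in the unpeeled subgraph at the round where $x$ is removed (your $G_x$), and concludes with $\sum_{(x,y)\in E} \min(\text{deg}(x),\text{deg}(y)) = \BigO{\alpha m}$. Your explicit handling of same-round ties is a minor refinement the paper leaves implicit (indeed, the paper's inequality as printed contains a typo, comparing $\text{deg}^r(y)$ to itself rather than to $\text{deg}^r(x)$), but it does not change the argument.
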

\begin{proof}
The total work of our counting algorithms, as discussed in
Sections~\ref{sec-count-per-vert} and~\ref{sec-count-per-edge}, is
given precisely by the number of wedges that we must process, or for a
preprocessed graph $G' = (X, E')$, $\BigO{\sum_{x \in X} \sum_{y \in
    N_x(x)} \text{deg}_x(y)}$.

It is clear that $\text{deg}_x(y) \leq \text{deg}(y)$ by
construction. We would like to show that $\text{deg}_x(y) \leq
\text{deg}(x)$ as well.

Consider the sequential complement $k$-core algorithm. For every round
$r$, let $\text{deg}^r(x)$ denote the degree of $x$ considering only
the induced subgraph on unpeeled vertices. When we peel a vertex $x$
in round $r$, we have for all neighbors $y$ of $x$, $\text{deg}^r(y)
\leq \text{deg}^r(y)$. By our ordering construction, we have
$\text{deg}_x(y) = \text{deg}^r(y)$, and trivially, $\text{deg}^r(x)
\leq \text{deg}(x)$. Thus, $\text{deg}_x(y) \leq \text{deg}(x)$, as
desired.

Thus, the number of wedges that we must process is bounded by
$\BigO{\sum_{x \in X}\allowbreak \sum_{y \in N_x(x)}
  \min(\text{deg}(x),\allowbreak \text{deg}(y))}$. Each edge appears
precisely once in this double summation by virtue of our
ordering. Therefore, our bound becomes $\BigO{\sum_{(x,y)\in E}
  \allowbreak \min (\text{deg}(x),\allowbreak \text{deg}(y))}
=\allowbreak \BigO{\alpha m}$, as desired.
\end{proof}

\begin{theorem}
Butterfly counting per vertex and per edge using approximate
complement degeneracy ordering is work-efficient.
\end{theorem}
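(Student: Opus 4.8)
The plan is to combine the two preceding arguments: the factor-of-two log-degree bound used for approximate degree ordering, and the peeling-round analysis used for complement degeneracy ordering in Theorem~\ref{thm-codegen}. As in both earlier proofs, the total work of the counting algorithms equals the number of processed wedges, which for the preprocessed graph $G' = (X, E')$ is $\BigO{\sum_{x \in X} \sum_{y \in N_x(x)} \text{deg}_x(y)}$. I would show that each term $\text{deg}_x(y)$ is bounded by $2\min(\text{deg}(x), \text{deg}(y))$, so that summing over the edges (each of which appears exactly once by virtue of the ordering) yields $\BigO{\sum_{(x,y) \in E} \min(\text{deg}(x), \text{deg}(y))} = \BigO{\alpha m}$ via Chiba and Nishizeki.

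First I would set up the approximate complement $k$-core peeling process, letting $\text{deg}^r(x)$ denote the degree of $x$ in the subgraph induced by the unpeeled vertices at round $r$, exactly as in Theorem~\ref{thm-codegen}. The key difference is the peeling criterion: when $x$ is peeled in round $r$, it has the maximum log-degree among the unpeeled vertices, so every unpeeled neighbor $y$ of $x$ satisfies $\lfloor \log \text{deg}^r(y) \rfloor \leq \lfloor \log \text{deg}^r(x) \rfloor$. I would then invoke the same log-rounding observation used in the approximate degree proof, namely that this inequality between floors of logarithms forces $\text{deg}^r(y) < 2\,\text{deg}^r(x)$.

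As in the complement degeneracy proof, the ordering construction gives $\text{deg}_x(y) = \text{deg}^r(y)$ and trivially $\text{deg}^r(x) \leq \text{deg}(x)$, so chaining the inequalities yields $\text{deg}_x(y) < 2\,\text{deg}(x)$. Since we also have $\text{deg}_x(y) \leq \text{deg}(y)$ by construction (it counts only a subset of $y$'s neighbors), combining the two bounds gives $\text{deg}_x(y) \leq 2\min(\text{deg}(x), \text{deg}(y))$, which completes the per-wedge estimate and hence the work bound.

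The main obstacle, or rather the one nontrivial observation, is the same as in the approximate degree case: verifying that $\lfloor \log a \rfloor \leq \lfloor \log b \rfloor$ implies $a < 2b$, which is precisely what allows the log-based peeling criterion to weaken the exact degree comparison by only a constant factor. Everything else is a direct transcription of the two preceding proofs, so I do not anticipate genuine difficulty; the substance lies entirely in recognizing that approximate complement degeneracy ordering inherits the factor-$2$ slack of approximate degree ordering while retaining the $\min$-structure that complement degeneracy ordering provides.
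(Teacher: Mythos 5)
Your proposal is correct and follows essentially the same route as the paper: the paper's proof likewise reuses the complement degeneracy peeling argument of Theorem~\ref{thm-codegen}, replacing the exact comparison $\text{deg}^r(y) \leq \text{deg}^r(x)$ with the factor-two bound $\text{deg}^r(y) \leq 2\,\text{deg}^r(x)$ arising from the log-degree peeling criterion, and concludes $\text{deg}_x(y) \leq 2\,\text{deg}(x)$ and hence the $\BigO{\alpha m}$ wedge bound. Your write-up merely makes explicit the log-rounding observation (that equal or smaller floored logarithm forces a factor-two degree bound) that the paper leaves implicit.
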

\begin{proof}
This follows from the proof of Theorem \ref{thm-codegen}, except that
in each round $r$, when we peel a vertex $u$, we have for all
neighbors $y$ of $x$, $\text{deg}^r(y) \leq 2 \cdot
\text{deg}^r(x)$. Thus, $\text{deg}_x(y) \leq 2 \cdot \text{deg}(x)$,
and so the number of wedges that we must process is bounded by
$\BigO{\sum_{(x,y)\in E} (2 \cdot \min (\text{deg}(x),\allowbreak
  \text{deg}(y)))} =\allowbreak \BigO{\alpha m}$. 
\end{proof}

\section{Parallel Fibonacci heap}\label{sec-fib}
Fibonacci heaps were first introduced by Fredman and
Tarjan~\cite{FrTa84}. 
In this section, we show that we can parallelize batches of insertion
and decrease-key operations work-efficiently, with logarithmic
span. Also, we show that a single work-efficient parallel delete-min can
be performed with logarithmic span, which is sufficient for our purposes.

Previous work has also explored parallelism in Fibonacci heaps.
Driscoll et al.~\cite{DrGaShTa88} present relaxed heaps, which achieve
the same bounds as Fibonacci heaps, but can be used to obtain a
parallel implementation of Dijkstra's algorithm; however their data
structures do not support batch-parallel insertions or decrease-key
operations.  Huang and Weihl~\cite{Huang1991} and
Bhattarai~\cite{Bhattarai2018} present implementations of parallel
Fibonacci heaps by relaxing the semantics of delete-min, although no
theoretical bounds are given.

A \defn{Fibonacci heap} $H$ consists of heap-ordered trees (maintained
using a root list, which is a doubly linked list), with certain nodes marked and a pointer to the
minimum element. Each node in $H$ is a key-value pair. Let $n$ denote the number of elements in our
Fibonacci heap. The \defn{rank} of a node $x$ is the number of
children that $x$ contains, and the \defn{rank} of a heap is the
maximum rank of any node in the heap. Note that the rank of a
Fibonacci heap is bounded by $\BigO{\log n}$. We also define $t(H)$ to
be the number of trees in $H$, and we define $m(H)$ to be the number
of marked nodes in $H$.

We begin by giving a brief overview of the sequential Fibonacci heap operations:
\begin{itemize}
\item \textbf{Insert} ($\BigO{1}$ work): To insert node $x$, we add $x$ to the root list as a new singleton tree and update the minimum pointer if needed.
\item  \textbf{Delete-min} ($\BigO{\log n}$ amortized work): We delete the minimum node as given by the minimum pointer, and add all of its children to the root list. We update the minimum pointer if needed. We then merge trees until no two trees have the same rank; a merge occurs by taking two trees of the same rank, and assigning the larger root as a child of the smaller root.
\item \textbf{Decrease-key} ($\BigO{1}$ amortized work): To decrease the key of a node $x$, we first check if decreasing the key would violate heap order. If not, we simply decrease the key. Otherwise, we cut the node $x$ and its subtree from its parent, and add it to the root list. If the parent of $x$ was unmarked, we mark the parent. Otherwise, we cut the parent, add it to the root list, and unmark it; we recurse in the same manner on its parent. Finally, we update the minimum pointer if needed
\end{itemize}

The potential function for the amortized analysis is $\Phi(H) = t(H) + 2 \cdot m(H)$.

In our parallel Fibonacci heap, instead of keeping marks on nodes as
boolean values, each node stores an integer number of marks that it
accumulates. Furthermore, instead of maintaining the root list as a
doubly-linked list, we use a parallel hash table~\cite{Gil91a} so that
we can retrieve all roots efficiently in parallel.  Our parallel
Fibonacci heap requires linear space to store, just as in the
sequential version.

\subsection{Batch insertion}\label{sec-batch-insert}
For parallel batch insertion, let $K$ denote the set of key-value
pairs that we are adding to our heap.  Let $k = |K|$, and $n$ be the
size of our heap before insertion. For each key in $K$, we create a
singleton tree.  We resize our parallel hash table if necessary to
make space for the new singleton trees, and then add all new singleton
trees to the root list. Finally, we update the minimum pointer.

Creating new singleton trees takes $\BigO{k}$ work and constant
span. Resizing the hash table takes $\BigO{k}$ amortized expected work
and $\BigO{\log (n+k)}$ span w.h.p.  To update the minimum pointer, we
use a prefix sum between the newly added nodes and the previous
minimum of the heap, which takes $\BigO{k}$ work and $\BigO{\log k}$
span.

The total complexity of batch insertion is given as follows.
\begin{lemma}
Parallel batch insertion of $k$ elements into a Fibonacci heap with $n$ elements takes
$\BigO{k}$ amortized expected work and $\BigO{\log (n+k)}$ span w.h.p.
\end{lemma}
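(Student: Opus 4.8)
The plan is to bound the three components of batch insertion separately --- creating the $k$ singleton trees, inserting them into the hash-table-based root list (with a possible resize), and updating the minimum pointer --- and then to combine them, taking the maximum over spans and summing the work, where the work bound is established amortized via the potential function $\Phi(H) = t(H) + 2 \cdot m(H)$ introduced earlier.

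First I would dispatch the pieces whose analysis is immediate. Creating a singleton tree for each of the $k$ new key--value pairs is embarrassingly parallel, costing $\BigO{k}$ work and $\BigO{1}$ span. Inserting these $k$ singletons into the root list is a batch insertion into the parallel hash table of Gil~\cite{Gil91a}, which together with a possible doubling resize to accommodate the $n+k$ roots costs $\BigO{k}$ amortized expected work and $\BigO{\log(n+k)}$ span w.h.p. Updating the minimum pointer reduces to computing the minimum key among the $k$ new nodes and comparing it against the old minimum, which is a parallel reduction (phrased as a prefix sum in the sketch) taking $\BigO{k}$ work and $\BigO{\log k}$ span. Taking the maximum of the three spans yields $\BigO{\log(n+k)}$ w.h.p., already matching the claim, so the remaining content is entirely in the work bound.

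The crux of the work bound is the amortized analysis, and the key observation is that batch insertion performs \emph{no} merging: it never touches the structure of the existing trees and never changes any marks. Hence $m(H)$ is unaffected and $t(H)$ increases by exactly $k$, so $\Delta\Phi = k$. Since the actual (amortized expected) work contributed by the three steps is $\BigO{k}$, the amortized cost, namely actual cost plus $\Delta\Phi$, is $\BigO{k} + k = \BigO{k}$, as desired. Intuitively, the positive potential deposited here is exactly the credit that a later delete-min will consume when it consolidates these freshly inserted singleton trees.

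The main obstacle I anticipate is composing two distinct amortization arguments cleanly: the hash-table resize is already amortized via doubling and carries its own implicit credit scheme, whereas the Fibonacci heap uses the explicit potential $\Phi$. I would argue these compose because $\Phi$ counts only trees and marks, not hash-table capacity, so the two accounting schemes are orthogonal; formally one maintains the combined nonnegative potential $\Phi(H)$ plus the standard doubling potential on unused table capacity, and each batch raises each of them by only $\BigO{k}$. A secondary subtlety is that the concurrent hash-table insertions and the parallel minimum computation must not race, but since the paper's primitives are given as correct parallel operations with the stated bounds, this amounts to invoking them as black boxes.
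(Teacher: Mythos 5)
Your proposal is correct and follows essentially the same route as the paper: decompose the batch insertion into creating the $k$ singleton trees ($\BigO{k}$ work, $\BigO{1}$ span), inserting them into the hash-table root list with a possible resize ($\BigO{k}$ amortized expected work, $\BigO{\log(n+k)}$ span w.h.p.), and updating the minimum pointer via a prefix sum ($\BigO{k}$ work, $\BigO{\log k}$ span), then combining the bounds. Your explicit accounting of $\Delta\Phi = k$ and the orthogonality of the hash-table doubling credit is a harmless elaboration the paper leaves implicit, not a different argument.
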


\subsection{Delete-min}

The amortized work of delete-min is $\BigO{\log n}$, but we describe how to parallelize delete-min in order to get a high probability bound for the span.
The parallel delete-min algorithm is given in
Algorithm~\ref{alg-delete}. We first delete the minimum node and add
all of its children to the root list in parallel (Line 2). Then, the
main component of our parallel delete-min operation involves consolidating
trees such that no two trees share the same rank (Lines 3--10).
We place each tree into a group based on its rank
(Line 3), and then merge pairs of trees with the same rank in every
round (Lines 7--9) until there are no longer trees with the same rank.
The final step in our algorithm is updating the minimum pointer using
a prefix sum (Line 11).

\begin{algorithm}[!t]
\footnotesize
  \caption{Parallel delete-min}
 \begin{algorithmic}[1]
\Procedure{PAR-DELETE-MIN}{$H$}
\State Delete the minimum node and add all children to root list
\State Initialize $C$ such that $C[i]$ contains all roots with rank $i$
\While{$\exists$ a group in $C$ with $>1$ root}
  \State Initialize $C'$ to hold updated trees
  \ParFor{$i \leftarrow 0$ to $|C|$}
      \State Partition the roots in $C[i]$ into pairs
      \State If a root is leftover, insert it into $C'[i]$
      \State Merge the trees in every pair and insert the new roots into $C'[i+1]$
  \EndParFor
\State $C \leftarrow C'$
\EndWhile
\State Use prefix sum among the root nodes to update the minimum pointer
\State \Return $H$
\EndProcedure
 \end{algorithmic}
 %\caption{}
 \label{alg-delete}
\end{algorithm}

After $\BigO{\log n}$ rounds, each group will
necessarily contain at most one tree, which can be shown inductively.
If we assume that after round $i$, all groups $\leq i$ each contain
at most one tree, we see that when we process round $i+1$, no merged
tree can be added to group $\leq i+1$ (since the rank of merged
trees can only increase). Moreover, group $i+1$ contains at most one
leftover root, and all other roots have been merged and inserted into
group $i+2$. Thus, the number of rounds needed to complete our
consolidation step is bounded above by the rank of $H$, which is
$\BigO{\log n}$. 

We use dynamic arrays to represent each group, which allow the insertion
of $x$ elements in $\BigO{x}$ amortized work and $\BigO{1}$ span.
The amortized work of our parallel delete-min operation is asymptotically equal to
the amortized work of the sequential delete-min operation. 
Their actual costs are the same, except for the hash table and dynamic array operations in the parallel version. 
Our actual cost
for merging trees and dynamic array operations 
is
$\BigO{t(H)}$. 
If we let $H'$
represent our heap after performing \algname{par-delete-min}, the
change in potential is $\Delta \Phi \leq \allowbreak t(H') - t(H) \leq
\allowbreak \text{rank}(H') +1 - t(H) = \allowbreak \BigO{\log n -
  t(H)}$, because no two trees have the same rank after performing our
consolidations. Thus, the amortized work for merging trees and dynamic array operations is $\BigO{\log n}$.
The amortized expected work for 
adding the children of the minimum node to the
root list is $\BigO{\log n}$ due to hash table insertions.
Updating the minimum pointer also takes $\BigO{\log n}$ work, and thus the total amortized expected work is $\BigO{\log n}$, as desired.

\begin{algorithm}[!t]
\caption{Parallel batch decrease-key}
\footnotesize
\begin{algorithmic}[1]
\Procedure{BATCH-DECREASE-KEY}{$H$, $K$}
\State \Comment $K$ is an array of triples, holding the key-value pair to be decreased and the updated key
\State Let $M$ be an empty array (to later store marked nodes)
\ParFor{$(k,\_, k') \in K$}
  \If{changing the key to $k'$ violates heap order}
    \State Cut $k$ and add to root list with key $k'$
    \State Add a mark to the original parent of $k$ and add the parent to $M$
  \Else
    \State Change the key to $k'$
  \EndIf
\EndParFor
 \State $M\leftarrow $ nodes in $M$ with $>1$ marks
\While{$M$ is nonempty}
  \State Let $M'$ be an empty array (to later store marked nodes)
  \ParFor{$p \in M$}
    \State Cut $p$ and add to root list
    \State Set $\#$ marks on $p$ to $0$ if $\#$ marks on $p$ is even, and $1$ otherwise
    \State Add a mark to $p$'s original parent and add the parent to $M'$
  \EndParFor
 \State $M\leftarrow$ nodes in $M'$ with $>1$ marks
\EndWhile
\State \Return $H$
\EndProcedure
 \end{algorithmic}
 %\caption{}
 \label{alg-decrease}
\end{algorithm}

The span of our algorithm is dominated by the span of the while
loop. In particular, note that every iteration of our while loop has
$\BigO{1}$ span, because we can perform the pairwise merges fully in
parallel. As we previously discussed, we have at most $\BigO{\log n}$
iterations of our while loop.
Inserting the children of the minimum node to the
hash table representing the root list takes  $\BigO{\log n}$ span w.h.p.
Therefore, the span of parallel delete-min is
$\BigO{\log n}$ w.h.p. The total complexity of parallel delete-min is as
follows.

\begin{lemma}
Parallel delete-min for a Fibonacci heap takes $\BigO{\log n}$
amortized expected work and $\BigO{\log n}$ span w.h.p.
\end{lemma}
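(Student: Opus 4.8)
The plan is to prove the two claimed bounds separately, reusing the amortized framework for sequential Fibonacci heaps with potential $\Phi(H) = t(H) + 2\cdot m(H)$ and then charging the extra cost introduced by the parallel data structures, namely the hash-table root list and the dynamic-array rank groups. The central observation I would exploit is that \algname{par-delete-min} produces exactly the same final forest as a sequential consolidation (no two trees of equal rank remain), so the potential-based accounting transfers essentially unchanged; only the \emph{actual} cost needs to be re-examined in the parallel setting.

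For the work bound, I would decompose \algname{par-delete-min} into three phases and bound each amortized. First, deleting the minimum node and splicing its children into the root list: since the rank of the heap is $\BigO{\log n}$, the minimum has $\BigO{\log n}$ children, and inserting them into the hash-table root list costs $\BigO{\log n}$ amortized expected work. Second, the consolidation loop (Lines 3--10): its actual cost is $\BigO{t(H)}$ for the pairwise merges and dynamic-array insertions, and I would bound its amortized cost by adding the change in potential $\Delta\Phi \le t(H') - t(H) \le \text{rank}(H') + 1 - t(H) = \BigO{\log n - t(H)}$, where $H'$ is the post-consolidation heap; the $t(H)$ terms cancel, leaving $\BigO{\log n}$. (Marks only decrease here, so the $m(H)$ term of $\Phi$ can only help.) Third, the prefix sum to reset the minimum pointer (Line 11) costs $\BigO{\log n}$ work. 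Summing the three phases gives $\BigO{\log n}$ amortized expected work.

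For the span bound, the dominant contribution is the consolidation while-loop, so I would first establish that it terminates in $\BigO{\log n}$ rounds. This follows by induction on the round index: assuming that after round $i$ every rank group $\le i$ holds at most one tree, in round $i+1$ no merge can deposit a tree into a group $\le i+1$ (merging two rank-$r$ trees yields rank $r+1$), group $i+1$ retains at most one leftover root, and all remaining roots are merged upward into group $i+2$; hence after at most $\text{rank}(H) = \BigO{\log n}$ rounds each group is a singleton. Each round performs its pairwise merges fully in parallel in $\BigO{1}$ span, so the loop contributes $\BigO{\log n}$ span. Inserting the children of the minimum into the hash-table root list takes $\BigO{\log n}$ span w.h.p., and the final prefix sum takes $\BigO{\log n}$ span, yielding $\BigO{\log n}$ span w.h.p.\ overall.

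I expect the main obstacle to be justifying that the parallel consolidation preserves the standard Fibonacci-heap rank invariant $\text{rank}(H) = \BigO{\log n}$, since this invariant is what simultaneously bounds (i) the number of children of the minimum node, (ii) the post-consolidation tree count $t(H') = \BigO{\log n}$ used in the potential argument, and (iii) the number of while-loop rounds. The subtlety is that pairwise merges in rounds may build trees of different intermediate shapes than a sequential left-to-right consolidation; I would argue that because a merge of two equal-rank trees always produces a tree of rank exactly one higher (identical to the sequential merge rule), the degree bound underpinning the $\BigO{\log n}$ rank guarantee is maintained, so the amortized and w.h.p.\ bounds carry over unchanged.
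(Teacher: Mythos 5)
Your proposal is correct and follows essentially the same approach as the paper's proof: the identical potential argument $\Delta\Phi \le t(H') - t(H) \le \text{rank}(H') + 1 - t(H) = \BigO{\log n - t(H)}$ for the consolidation cost, the same inductive argument bounding the number of merge rounds by $\text{rank}(H) = \BigO{\log n}$, and the same accounting for the hash-table insertions of the minimum's children and the final prefix sum. Your closing remark that pairwise merging of equal-rank trees preserves the rank invariant is a nice bit of extra care that the paper leaves implicit, but it does not change the structure of the argument.
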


\subsection{Batch decrease-key}
The parallel batch decrease-key operation is given in
\algname{batch-decrease-key} (Algorithm~\ref{alg-decrease}). 

For each decrease key, we check if these decreases
violate heap order (Line 5). If not, we can directly decrease the key (Line 9). Otherwise, we cut these nodes from their
trees and mark their parents (Lines 6--7). Then, we recursively cut
all parents that have been marked more than once, mark their parents,
and repeat (Lines 10--17).

We can maintain the arrays $M$ and $M'$ using a parallel filter in work proportional to the size of our batch and the total number of cuts and $\BigO{\log n}$ span per iteration of the while-loop on Line 11.

On Lines 7 and 16, we record in an array when we would like to mark a
parent. Then, we can semisort the array and use prefix sum to obtain
the number of marks to be added to each parent. This maintains our
work bounds, and has $\BigO{\log n}$ w.h.p. per iteration of the
while-loop on Line 11.

We now focus on the amortized work analysis. Let $k$ denote the
number of keys in $K$ and let $c$ be the total number of cuts that we
perform in this algorithm. Note that decreasing our keys takes
$\BigO{k}$ total work, and the rest of the work is given by the total
number of cuts, or $\BigO{c}$.

Recall that our potential function is $\Phi(H) = t(H) + 2 \cdot
m(H)$. Let $H'$ represent our heap after performing
\algname{batch-decrease-key}. The change in the number of trees is
given by $t(H') - t(H) = c$, since every new cut produces a new tree.

The change in the number of marks is $m(H') - m(H) \leq k-(c-k) =
2k - c$. The argument for this is similar to the sequential
argument.

For each parent node $p$ that is cut, we arbitrarily set a key in $K$
as having \emph{propagated} the cut as follows. Let $c(p)$ denote the
key that propagated the cut to parent $p$, and let $M(p)$ denote the
set of all nodes that marked $p$ in the round immediately before $p$
was cut. Then, we set $c(x) = x$ for each key $x$ in $K$, and we set
$c(p)$ to be an arbitrary key in $C(p) = \{ c(p') \mid p' \in M(p)
\}$; in other words, $c(p)$ is one of the keys that propagated a node
that marked $p$ in the round immediately before $p$ was cut.

Each key $x$ then has a well-defined \emph{propagation path}, which is
the maximal path of nodes that $x$ has propagated. The last node
$\ell$ of the propagation path must either be a root node or a node
whose parent $x$ has not propagated the cut to. Note that $x$ may mark
$\ell$'s parent without cutting this parent from its tree; we call
this mark an \emph{allowance}. In this sense, each key $x$ in $K$ has
one mark in its allowance. We have $k$ marks in the total allowance of
our heap.

It remains, then, to count the change in the number of marks on each propagation path. We claim that if we have already counted the $k$ allowances in the change in the number of marks $(m(H') - m(H) )$, we can now subtract a mark for each cut parent on a propagation path. There are two cases.

If a cut parent $p$ at the start of our algorithm already contained a
mark, then the node that propagated the cut added a mark, canceling
out the previous mark. Thus, $c(p)$ has effectively subtracted a mark
from $p$.

If a cut parent $p$ had no marks at the start of our algorithm, a
child node $p'$ such that $c(p') \neq c(p)$ must have marked $p$
within our algorithm. Necessarily, $c(p')$ must have ended its
propagation path at $p'$, so it charged its mark allowance to
$p$. Then, when the node that propagated the cut, $c(p)$, added a
mark, this cancels out the mark that $p'$ made. Since we have already
counted the mark that $p'$ made in its allowance, we can subtract a
mark to account for the cancellation.

In total, we see that we can subtract a mark for each cut parent on a
propagation path. The number of cut parents is at least $c-k$, so we
have $m(H') - m(H) \leq k - (c-k) = 2k-c$, as desired.

As such, the change in potential is $\Phi(H')-\Phi(H) \leq c + 2(2k-c)
= 4k - c$.  The actual work of \algname{batch-decrease-key} is
$\BigO{k+c}$, and the amortized work is $\BigO{k+c} +
4k-c=\BigO{k}$ by scaling up the units of potential appropriately.

The span of our algorithm is again dominated by the span of the while
loop. We have at most $\BigO{\log n}$ iterations of the while loop,
since it is bounded by the maximum tree height of our heap, which is
bounded by the rank of the heap. Each iteration of the while loop has span $\BigO{\log n}$ w.h.p. Thus, the span of our algorithm is $\BigO{\log^2 n}$ w.h.p.

The overall complexity of parallel
batch decrease-key is as follows.

\begin{lemma}
Parallel batch decrease-key of $k$ elements in a Fibonacci heap with $n$ elements
takes $\BigO{k}$ amortized work and $\BigO{\log^2 n}$ span w.h.p.
\end{lemma}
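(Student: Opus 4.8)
The plan is to bound the actual work directly and then charge the amortized cost using the standard Fibonacci-heap potential $\Phi(H) = t(H) + 2 \cdot m(H)$, adapted to the batched setting. First I would observe that decreasing the $k$ keys and performing the initial round of cuts costs $\BigO{k}$, while every subsequent cascading cut can be charged $\BigO{1}$, so the actual work is $\BigO{k + c}$, where $c$ is the total number of cuts performed across all iterations of the while loop. The subtlety here is that in the parallel version each node stores an integer mark count rather than a boolean, and marks are aggregated in bulk (via a semisort plus prefix sum on Lines 7 and 16) with the arrays $M$ and $M'$ maintained by a parallel filter; I would confirm that these aggregation steps keep the per-iteration work linear in the batch size plus the number of cuts, so that the total stays $\BigO{k + c}$.

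Next I would analyze the change in potential. The number of trees increases by exactly $c$, since each cut spawns a new root, giving $t(H') - t(H) = c$. The harder part is bounding $m(H') - m(H)$, and I expect this to be the main obstacle, since it is the parallel analog of the sequential cascading-cut amortization and must correctly handle many simultaneous cuts. The key idea is to assign to each cut parent a unique key in $K$ that \emph{propagated} the cut, which gives every key a well-defined propagation path; the last node of each path contributes at most one uncanceled mark, its \emph{allowance}, accounting for $k$ marks in total, while every cut parent lying strictly inside a path has a mark that is canceled by the propagating cut. A careful case analysis (distinguishing whether the cut parent was already marked at the start of the batch, or was marked by a child whose own propagation path ended there) shows that each of the at least $c - k$ cut parents removes one net mark, yielding $m(H') - m(H) \leq k - (c - k) = 2k - c$.

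Combining these, the change in potential is $\Phi(H') - \Phi(H) \leq c + 2(2k - c) = 4k - c$, so the amortized work is $\BigO{k + c} + (4k - c) = \BigO{k}$ after scaling the units of potential appropriately. Finally, for the span I would bound the number of while-loop iterations by the maximum tree height, which is $\BigO{\log n}$ since the rank of a Fibonacci heap is logarithmic. Each iteration performs a filter to build $M'$, a semisort and prefix sum to aggregate mark counts onto parents, and a batch of root-list insertions into the parallel hash table, each of which runs in $\BigO{\log n}$ span w.h.p.; multiplying the number of iterations by the per-iteration span gives the claimed $\BigO{\log^2 n}$ span w.h.p.
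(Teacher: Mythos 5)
Your proposal is correct and follows essentially the same route as the paper's proof: the same potential function $\Phi(H) = t(H) + 2\cdot m(H)$, the same propagation-path and allowance argument with the identical two-case analysis yielding $m(H') - m(H) \leq 2k - c$, and the same span argument bounding the while loop by the heap's rank with $\BigO{\log n}$ span per iteration from the filter, semisort, and prefix-sum steps. There is nothing substantive to add; the paper's proof matches yours step for step.
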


\subsection{Application to bucketing}
We now discuss more specifically how to apply our batch-parallel
Fibonacci heap to bucketing in butterfly peeling. Each bucket is
represented as a node in the Fibonacci heap, where the key is the
number of butterflies and the value is a parallel hash
table~\cite{Gil91a} containing vertices/edges that contain exactly the
given number of butterflies.  Updates to the bucketing structure
trigger certain operations on the Fibonacci heap.

Throughout the rest of this subsection, we also consider key-value
pairs in the context of bucketing operations.  Bucketing operations
involve processing key-value pairs, where we take the key to be a
given number of butterflies and the value to be a single
vertex/edge. Notably, the value differs from that in the context of a
Fibonacci heap; bucketing operations update single vertices/edges, and
we use the heap to move sets of these vertices/edges to their correct
buckets. To distinguish the key-value pairs in bucketing operations
from the key-value pairs that represent nodes in the Fibonacci heap,
we refer to the latter as \emph{heap key-value pairs}.

Moreover, in order to ensure that all vertices/edges with the same key
are aggregated into a single bucket, we also need a supplemental
parallel hash table~\cite{Gil91a} that stores pointers to buckets,
keyed by the corresponding number of butterflies. The combined
parallel hash table $T$ and batch-parallel Fibonacci heap $H$ form our
bucketing structure $B = (T, H)$. This following analysis assumes $n$
is the number of vertices or edges in the graph, which upper bounds
the size of the Fibonacci heap at any time.  The work bounds for our
operations are amortized, but we can remove the amortization when
summing across all rounds of our peeling algorithms.

\subsubsection{Retrieving the minimum bucket}
To retrieve the minimum bucket, we perform a delete-min operation on
the Fibonacci heap, and given the heap key of the minimum, we remove
the corresponding key in the supplemental hash table $T$. The
delete-min operation on the Fibonacci heap dominates the complexity,
and so retrieving the minimum bucket takes $\BigO{\log
  n}$ amortized expected work and $\BigO{\log n}$ span w.h.p.

\subsubsection{Updating the bucketing structure} \label{sec-batch-insert-bucket}

Updating the bucketing structure involves moving elements to new
buckets based on their updated butterfly counts, which can only
decrease. This involves moving elements between the hash tables of the
buckets in the Fibonacci heap, decreasing the key of some buckets in
the Fibonacci heap, and inserting new buckets into the Fibonacci heap.
The algorithm is shown in Algorithm~\ref{alg-decrease-bucket}.

\begin{algorithm}[!t]
\caption{Bucketing Update Algorithm}
\footnotesize
\begin{algorithmic}[1]
\Procedure{BUCKETING-UPDATE}{$B = (T, H)$, $K$}
\State \Comment $K$ is an array of triples that hold key-value pairs to be decreased and their updated key
\State Let $n_k$ be $\#$ times $k$ appears in a key-value pair in $K$
\State $I = \{\}$ \Comment{Array of key-value pairs to re-insert}
\State $K' = \{\}$ \Comment{Updated $K$}
\ParFor{$(k,v,k') \in K$}
  \If{$n_k=$ size of $k$'s bucket and $v$ is the first element in the bucket}
  \State Add $(k,\{v\},k')$ to $K'$
      \Else
         \State Add $(k',v)$ to $I$ and remove $v$ from bucket $k$
     \EndIf
\EndParFor
\ParFor{$(k,\_,k') \in K'$}
  \State Remove $k$ from $T$ and insert $k'$ into $T$
\EndParFor

\State \algname{batch-decrease-key}($H$, $K'$)
\State  $I' = \{\}$ \Comment{Array of heap key-value pairs to insert into $H$}
\ParFor{each unique $k'$ where $(k',v) \in I$}
\If{$k' \in T$}
\State Add $\{v \mid (k',v) \in I\}$ to the hash table of bucket $k'$
\Else
\State Add $(k', V)$ to $I'$ where $V$ contains all $v$ where $(k',v)\in I$
\EndIf
\EndParFor

\ParFor{$(k',V) \in I'$}
\State Add $k'$ to $T$
\EndParFor
\State \algname{batch-insert}($H$, $I'$)
\State \Return $B$
\EndProcedure
 \end{algorithmic}
 \label{alg-decrease-bucket}
\end{algorithm}

We must first check whether a key-value pair triggers a decrease-key
operation in the Fibonacci heap. If not all values in the bucket need
to be updated, then those values can simply be removed from the bucket and re-inserted
with the updated heap key; the bucket holding the rest of the values can
remain with the original heap key. Otherwise, if all values in the bucket
need to be updated, we keep only the first element in the bucket and
decrease the heap key for that element (Lines 7--8 and 13), and we keep the other values
with their updated keys in an array $I$ to be re-inserted (Lines 9--10). We also also update the supplemental hash table $T$ for the buckets that we decrease the key for (Line 11--12).
We can determine if all values in a bucket need to be
updated by using a semisort to aggregate counts on
the number of times each key appears in $K$. This takes $\BigO{k}$
expected work and $\BigO{\log n}$ span w.h.p.\ where $k=|K|$.

For all key-value pairs that must be reinserted, we first check for
each distinct key whether it appears in our supplemental hash table
$T$. For the heap keys that appear in $T$ (Lines 16--17), we simply add the corresponding
set of values to their existing bucket in the heap as a batch (which
is stored as a hash table, so this consists of performing insertions
to the hash table corresponding to the bucket).  For heap keys that do not
appear in the supplemental hash table $T$, we keep them along with their heap values (a hash table containing the set of
vertices/edges associated with that heap key) in an array $I'$ (Lines 18--19). We also add these new heap keys to $T$ (Lines 20--21).
Then we
perform a batch
insertion in the Fibonacci heap with the set of heap key-value pairs in $I'$ (Line 22).
We can perform the hash
table operations and insertions into the Fibonacci heap in $\BigO{k}$
amortized expected work and $\BigO{\log n}$ span w.h.p.

The batch decrease-key operation on the Fibonacci heap dominates the
complexity, and so updating the bucketing structure for $k$ elements
takes $\BigO{k}$ amortized expected work and $\BigO{\log^2 n}$ span w.h.p.

\section{Experiments}\label{sec-imp}

\subsection{Environment}
We run our experiments on an m5d.24xlarge AWS EC2 instance, which
consists of 48 cores (with two-way hyper-threading), with 3.1 GHz
Intel Xeon Platinum 8175 processors and 384 GiB of main memory. We use
Cilk Plus's work-stealing scheduler~\cite{BlLe99,Leiserson} and we
compile our programs with g++ (version 7.3.1) using the \texttt{-O3}
flag.
We test our algorithms on a variety of real-world bipartite graphs
from the Koblenz Network Collection (KONECT)~\cite{Kunegis13}.
We
remove self-loops and duplicate
edges from the graph. Table~\ref{table-graphs} describes the properties of these
graphs, including sizes, number of butterflies, and peeling
complexities.

We compare our algorithms against Sanei-Mehri \textit{et al.}'s~\cite{SaSaTi18} and Sariy\"{u}ce and Pinar's~\cite{SaPi18} work, which 
are the state-of-the-art sequential butterfly counting and peeling implementations, respectively.

Notationally, when discussing wedge and butterfly aggregation methods,
we use the prefix ``A'' to refer to using atomic adds for
butterfly aggregation, and we take a lack of prefix to mean that the
wedge aggregation method was used for butterfly
aggregation. ``BatchS'' is the simple version of batching and ``BatchWA'' is the wedge-aware version of batching that
dynamically assigns tasks to workers so they have a roughly equal number
of wedges to process.

\subsection{Results}

\begin{sidewaystable}
  \captionof{table}{These are relevant statistics for the KONECT~\mbox{\cite{Kunegis13}} graphs that we experimented on. Note that we only tested peeling algorithms on graphs for which Sariy\"{u}ce and Pinar's~\mbox{\cite{SaPi18}} serial peeling algorithms completed in less than 5.5 hours. As such, there are certain graphs for which we have no available $\rho_v$ and $\rho_e$ data, and these entries are represented by a dash.}\label{table-graphs}
  \centering
    \small
\begin{tabular}{@{}llllllll@{}}
\toprule
Dataset              & Abbreviation & $|U|$        & $|V|$       & $|E|$         & $\#$ butterflies & $\rho_v$ & $\rho_e$ \\ \midrule
DBLP                 & dblp & 4,000,150  & 1,425,813 & 8,649,016   & 21,040,464   & 4,806 & 1,853 \\
Github               & github &120,867    & 56,519    & 440,237     & 50,894,505      & 3,541 & 14,061\\
Wikipedia edits (it)      &itwiki & 2,225,180  & 137,693   & 12,644,802  & 298,492,670,057      &---&--- \\
Discogs label-style  &discogs& 270,771    & 1,754,823   & 5,302,276  & 3,261,758,502  & 10,676 &   123,859  \\ 
Discogs artist-style & discogs\_style & 383 & 1,617,943 & 5,740,842   & 77,383,418,076  & 374 &   602,142  \\ 
LiveJournal          & livejournal &7,489,073 & 3,201,203 & 112,307,385 & 3,297,158,439,527    &---&--- \\
Wikipedia edits (en)       & enwiki &21,416,395 & 3,819,691 & 122,075,170 & 2,036,443,879,822     &---&---  \\
Delicious user-item   & delicious &33,778,221 & 833,081   & 101,798,957 & 56,892,252,403  &165,850 & ---   \\
Orkut                & orkut&8,730,857 & 2,783,196 & 327,037,487 & 22,131,701,213,295     &---&--- \\
Web trackers & web &27,665,730 & 12,756,244 & 140,613,762 &20,067,567,209,850 & --- & --- 
\end{tabular}

\caption{These are best runtimes in seconds for parallel and sequential butterfly counting from \framework (\textsc{PB}), as well as runtimes from previous work. Note that PGD~\cite{AhmedNRDW17} is parallel, while the rest of the implementations are serial. Also, for the runtimes from our framework, we have noted the ranking used; ${}^*$ refers to side ranking, ${}^\#$ refers to approximate complement degeneracy ranking, and ${}^\circ$ refers to approximate degree ranking. The wedge aggregation method used for the parallel runtimes was simple batching, except the cases labeled with ${}^\lozenge$, which used wedge-aware batching.}\label{table-serial}
\small
      \setlength{\tabcolsep}{1.5pt}

\centering
\begin{tabular}{@{}llllllllllll@{}}
\toprule
 &
\multicolumn{5}{c}{Total Counts}    &
\multicolumn{3}{c}{Per-Vertex Counts}  & \multicolumn{3}{c}{Per-Edge Counts}   \\ 
\cmidrule(lr){2-6}
\cmidrule(lr){7-9}
\cmidrule(lr){10-12}

Dataset              & \shortstack{\textsc{PB} \\ \vspace{\baselineskip}$T_{48\text{h}}$} & \shortstack{\textsc{PB} \\ \vspace{\baselineskip}$T_1$} & \shortstack{Sanei-Mehri \\ et al.~\cite{SaSaTi18} \\ $T_1$}        & \shortstack{PGD~\cite{AhmedNRDW17}  \\ \vspace{\baselineskip} $T_{48\text{h}}$  }  & \shortstack{ESCAPE \\ \cite{Pinar2017} \\ \vspace{\baselineskip}$T_1$}   &\shortstack{\textsc{PB} \\ \vspace{\baselineskip}$T_{48\text{h}}$} & \shortstack{\textsc{PB}\\ \vspace{\baselineskip} $T_1$} & \shortstack{Sariy\"{u}ce and\\ Pinar~\cite{SaPi18} \\ $T_1$ } & \shortstack{\textsc{PB} \\ \vspace{\baselineskip}$T_{48\text{h}}$} &\shortstack{ \textsc{PB} \\ \vspace{\baselineskip} $T_1$}& \shortstack{Sariy\"{u}ce and \\Pinar~\cite{SaPi18} \\ $T_1$}  \\ \midrule
itwiki                 & $0.10^*$ &$1.38^*$ & 1.63&	1798.43&	4.97	&$0.13^*$&$1.43^*$&	6.06&	$0.37^*$&$3.24^\circ$ &	19314.87 \\
discogs              &$0.90^{\# \lozenge}$ &$1.36^\circ$ &4.12 & 234.48 & 2.08 & $0.93^{\# \lozenge}$ &$1.53^\circ$ & 96.09 & $0.59^\circ$&$5.01^*$ & 1089.04\\
livejournal      & $3.83^*$&$35.41^*$ & 37.80 & $>$ 5.5 hrs & 139.06 &$5.65^*$& $36.22^*$ & 158.79 & $10.26^\circ$&$105.65^*$ & $>$ 5.5 hrs\\
enwiki  & $8.29^\circ$&$68.73^*$& 69.10 & $>$ 5.5 hrs & 151.63 &$11.75^\circ$& $75.10^*$ & 608.53 & $16.73^\circ$&$167.69^*$ & $>$ 5.5 hrs \\ 
delicious &$13.52^\circ$ & $165.03^*$& 162.00 & $>$ 5.5 hrs & 286.86 & $18.36^\circ$ &$182.00^*$ & 1027.12 & $23.58^\circ$&$321.02^\circ$ & $>$ 5.5 hrs\\ 
orkut         &$35.07^{*}$&$423.02^*$ &403.46 & $>$ 5.5 hrs & 1321.20 & $66.19^{*}$&$439.02^*$ & 2841.27 & $131.07^{* \lozenge}$&$1256.83^*$ & $>$ 5.5 hrs\\
web       & $12.18^\circ$& $115.53^\circ$&4340 & $>$ 5.5 hrs & 172.77 & $15.89^\circ$&$195.43^\circ$ & $>$ 5.5 hrs & $17.40^\#$&$218.15^\circ$ & $>$ 5.5 hrs
\end{tabular}
\end{sidewaystable}

\begin{figure*}[t]
  \centering
\includegraphics[width=\textwidth, page=1]{images/fig1.pdf}
  \caption{These are the parallel runtimes for butterfly counting per vertex, considering different wedge aggregation and butterfly aggregation methods. We consider the ranking that produces the fastest runtime for each graph; ${}^*$ refers to side ranking, ${}^\#$ refers to approximate complement degeneracy ranking, and ${}^\circ$ refers to approximate degree ranking. All times are scaled by the fastest parallel time, as indicated in parentheses.}\label{fig-count-vert}
\end{figure*}

\subsubsection{Butterfly counting}
Figures~\ref{fig-count-vert},~\ref{fig-count-edge},
and~\ref{fig-count-tot} show the runtimes over different aggregation
methods for counting per vertex, per edge, and in total, respectively,
for the seven datasets in Table~\ref{table-graphs} with sequential
counting times exceeding 1 second.  The times are normalized to the
fastest combination of aggregation and ranking methods for each
dataset.  We find that simple batching and wedge-aware batching give
the best runtimes for butterfly counting in general. Among the
work-efficient aggregation methods, hashing and histogramming with
atomic adds are often faster than sorting, particularly for larger
graphs due to increased parallelism and locality, respectively.  Our
fastest parallel runtimes for each dataset for total, per-vertex, and
per-edge counts are shown in Table~\ref{table-serial}.

We also implemented sequential algorithms for butterfly counting in
\framework that do not incur any of the parallelism overheads.
Table~\ref{table-serial} includes the runtimes for our sequential
counting implementations, as well as runtimes for implementations from
previous works, all of which we tested on the same machine.  The code
from Sanei-Mehri \emph{et al.} and Sariy\"{u}ce and
Pinar~\cite{SaPi18} are serial implementations for global and local
butterfly counting, respectively.  PGD~\cite{AhmedNRDW17} is a
parallel framework for counting subgraphs of up to size 4 and ESCAPE
is a serial framework for counting subgraphs of up to size 5. We timed
only the portion of the codes that counted butterflies.  Our
configurations achieve parallel speedups between 6.3--13.6x over the
best sequential implementations for large enough graphs.\footnote{By
  ``large enough,'' we mean graphs for which the sequential counting
  algorithms take more than 2 seconds to complete.} We also improve
upon PGD by 349.6--5169x due to
having a work-efficient algorithm.

Figures \ref{fig-self-vert} and \ref{fig-self-edge} show our
self-relative speedups on livejournal for per-vertex and per-edge counting,
respectively. Across all rankings, on
livejournal, we achieve self-relative speedups between 10.4--30.9x for
per-vertex counting, between 9.2--38.5x for per-edge counting, and
between 7.1--38.4x for in total counting.

\begin{figure*}[t] \centering
  \includegraphics[width=\textwidth, page=2]{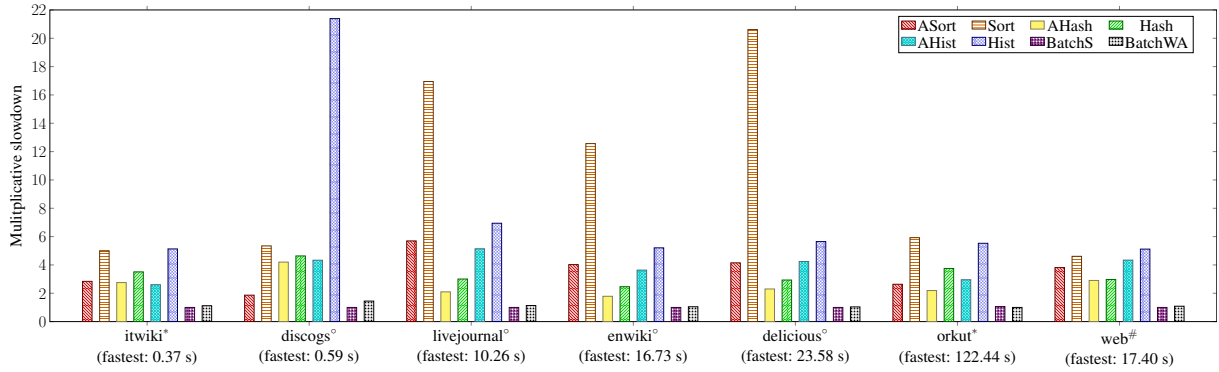}
  \caption{These are the parallel runtimes for butterfly counting per edge, considering different wedge aggregation and butterfly aggregation methods. We consider the ranking that produces the fastest runtime for each graph; ${}^*$ refers to side ranking, ${}^\#$ refers to approximate complement degeneracy ranking, and ${}^\circ$ refers to approximate degree ranking. All times are scaled by the fastest parallel time, as indicated in parentheses.}\label{fig-count-edge}
\end{figure*}
\begin{figure*}[t]
  \centering
\includegraphics[width=\textwidth, page=10]{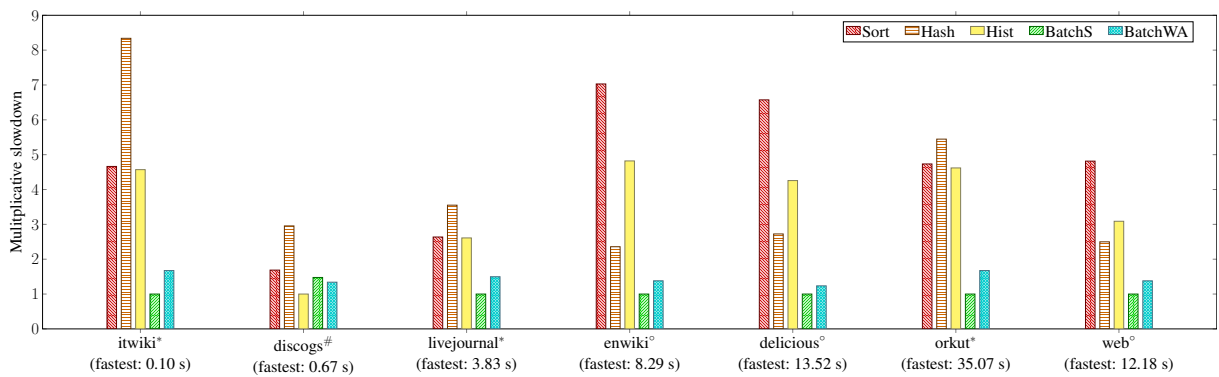}
  \caption{These are the parallel runtimes for butterfly counting in total, considering different wedge aggregation methods (butterfly aggregation does not apply). We consider the ranking that produces the fastest runtime for each graph; ${}^*$ refers to side ranking, ${}^\#$ refers to approximate complement degeneracy ranking, and ${}^\circ$ refers to approximate degree ranking. All times are scaled by the fastest parallel time, as parentheses.}\label{fig-count-tot}
\end{figure*}

\begin{figure*}[!t]
 \begin{minipage}{.44\textwidth}
 \vspace{\baselineskip}
\centering
\includegraphics[width=\textwidth, page=8]{images/fig1.pdf}
  \caption{These are the runtimes for butterfly counting per vertex on livejournal using side ranking, over different numbers of threads. The self-relative speedups are between 13.7--28.3x.}\label{fig-self-vert}
   \end{minipage}\hspace{1cm}
 \begin{minipage}{.44\textwidth}
   \vspace{\baselineskip}
  \centering
\includegraphics[width=\textwidth, page=7]{images/fig1.pdf}
  \caption{These are the runtimes for butterfly counting per edge on livejournal using approximate degree ranking, over different numbers of threads. The self-relative speedups are between 15.9--38.0x.}\label{fig-self-edge}
   \end{minipage}
\end{figure*}

\begin{figure*}[!t]
  \centering
\includegraphics[width=\textwidth, page=6]{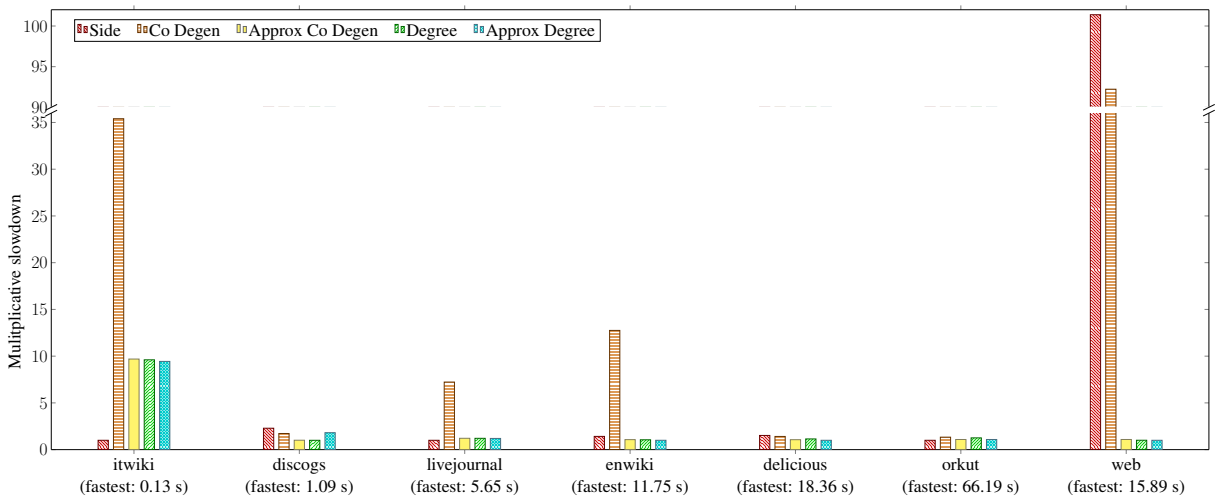}
  \caption{These are the runtimes for butterfly counting per vertex, considering different rankings. We use simple batching as our wedge aggregation method. All times are scaled by the fastest runtime, as indicated in parentheses. Moreover, the time taken to rank each graph is included in the runtimes.}\label{fig-rank}
\end{figure*}

\subsubsection{Ranking}
Figure \ref{fig-rank} shows the runtimes for butterfly counting per
vertex for different rankings using the simple batching method. The
times are normalized to the time for the fastest ranking for each dataset.  Side
ordering outperforms the other rankings for itwiki, livejournal, and
orkut, while approximate complement degeneracy, approximate degree,
and degree orderings outperform side ordering for discogs, enwiki,
delicious, and web.

Note that different rankings change the number of wedges that we must
process; in particular, we found that complement degeneracy and approximate complement degeneracy minimizes
the number of wedges that we process across all of the real-world
graphs considered. However, complement degeneracy is not a feasible
ordering in practice, since the time for ranking often exceeds the time for the actual
counting. Moreover, side ordering often outperforms the other rankings
due to better locality, especially if the number of wedges processed
by the other rankings does not greatly exceed the number of wedges
given by side ordering. We found that the approximate complement degeneracy, degree, and approximate degree orders perform similarly, and these orderings are all efficient to compute.

As such, we devise a metric $f$ that in general determines
whether side ordering outperforms other rankings. If we let $w_s$ be
the number of wedges processed by using side ordering and $w_r$ be the
number of wedges processed by using another ranking, our metric is
$\slfrac{(w_s - w_r)}{w_s}$. If this metric is below $0.1$, then side
ordering will outperform or perform just as well as other
rankings. Table~\ref{table-ranks} shows this metric
across all of the rankings  in \framework. Note that this
metric is fairly similar across these other rankings, and is
particularly high for web, explaining the significant speedup obtained
by using approximate degree ordering over side ordering for web.
The $f$ metric can be computed at runtime, before actually ranking the graph,
to decide which ordering to use.

\subsubsection{Approximate counting}
Figure \ref{fig-approx} shows runtimes for both colorful sparsification and edge sparsification on orkut, as well as the corresponding single-threaded times. We see that over a variety of probabilities $p$ we achieve self-relative speedups between 4.9--21.4x.

\begin{table}[!t]
\caption{These are the fractional values $f = \slfrac{(w_s - w_r)}{w_s}$, where $w_s$ is the number of wedges that must be processed using side ordering and $w_r$ is the number of wedges that must be processed using the labeled ordering. Note that for itwiki, the number of wedges produced by degree ordering is precisely equal to the number of wedges produced by side ordering.}\label{table-ranks}
  \small
\centering
\begin{tabular}{@{}lcccc@{}}
\toprule

Dataset   & Complement Degeneracy & \shortstack{Approx\\Complement Degeneracy}      & Degree     & \shortstack{Approx\\Degree}     \\ \midrule
itwiki                 &  0.021 & 0.020 & 0 & -0.00033 \\
discogs              &0.97 & 0.97 & 0.97 & 0.96\\
livejournal      & 0.035 & 0.033 & 0.011 & -0.019 \\
enwiki  & 0.47 & 0.47 & 0.45 & 0.46\\ 
delicious &  0.60 & 0.60 & 0.59 & 0.59\\ 
orkut         & 0.070 & 0.064 & 0.042 & 0.029\\
web       &  0.95 & 0.95 & 0.95 & 0.95
\end{tabular}
\end{table}

\begin{figure*}[!t]
  \centering
\includegraphics[width=.4\textwidth, page=11]{images/fig1.pdf}
  \caption{These are the runtimes for colorful sparsification and edge sparsification over different probabilities $p$. We considered both the runtimes on 48 cores hyperthreaded and on a single thread. We ran these algorithms on orkut, using simple batch aggregation and side ranking. }\label{fig-approx}
\end{figure*}

\begin{figure*}[!t]
  \centering
\includegraphics[width=\textwidth, page=5]{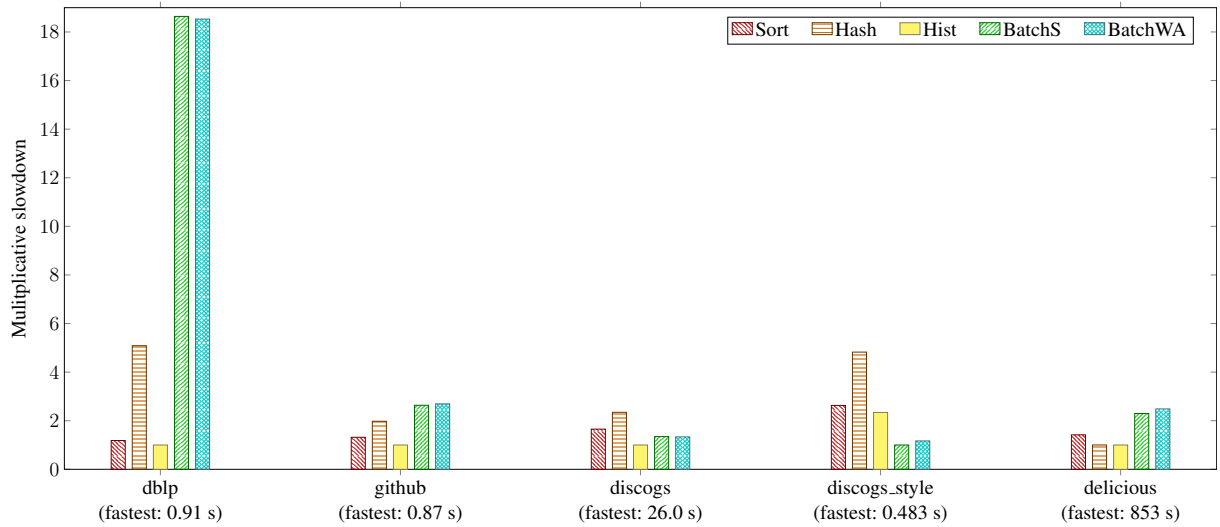}
  \caption{These are the parallel runtimes for butterfly vertex peeling with different wedge aggregation methods (these runtimes do not include the time taken to count butterflies). All times are scaled by the fastest parallel time, as indicated in parentheses. Also, note that the runtimes for discogs\_style represent single-threaded runtimes; this is because we did not see any parallel speedups for discogs\_style, due to the small number of vertices that were peeled.}\label{fig-peel-vert}
\end{figure*}
\begin{figure*}[!t]
  \centering
\includegraphics[width=.8\textwidth, page=9]{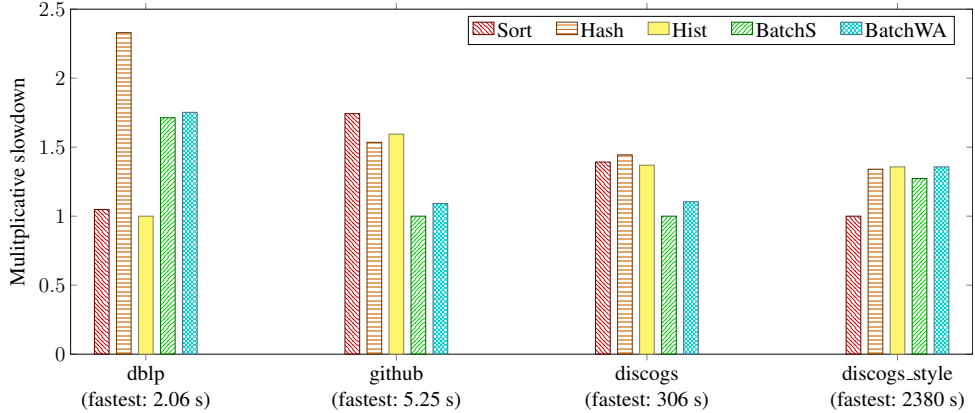}
  \caption{These are the parallel runtimes for butterfly edge peeling with different wedge aggregation methods (these runtimes do not include the time taken to count butterflies). All times are scaled by the fastest parallel time, as indicated in parentheses.}\label{fig-peel-edge}
\end{figure*}

\begin{table}[!t]
  \caption{These are runtimes in seconds for parallel and single-threaded butterfly peeling from \framework (\textsc{PB}) and serial butterfly peeling from Sariy\"{u}ce and Pinar~\cite{SaPi18}. Note that these runtimes do not include the time taken to count butterflies. For the runtimes from \framework, we have noted the aggregation method used; ${}^*$ refers to simple batching, ${}^\#$ refers to sorting and ${}^\circ$ refers to histogramming.}\label{table-peel-serial}
\small
\centering
\begin{tabular}{@{}lllllll@{}}
\toprule
&
\multicolumn{3}{c}{Vertex Peeling}  & \multicolumn{3}{c}{Edge Peeling}   \\ 
\cmidrule(lr){2-4}
\cmidrule(lr){5-7}
Dataset    & \shortstack{\textsc{PB} \\ \vspace{\baselineskip}$T_{48\text{h}}$} & \shortstack{\textsc{PB} \\ \vspace{\baselineskip}$T_{1}$} & \shortstack{Sariy\"{u}ce and\\ Pinar~\cite{SaPi18} \\ $T_1$} & \shortstack{\textsc{PB} \\ \vspace{\baselineskip}$T_{48\text{h}}$} & \shortstack{\textsc{PB} \\ \vspace{\baselineskip}$T_{1}$}  & \shortstack{Sariy\"{u}ce and\\ Pinar~\cite{SaPi18}  \\ $T_1$} \\ \midrule
dblp        &   $0.91^\circ$&$2.40^{\circ}$ &2.06 & $2.06^\circ$&$16.90^{\circ}$ &6.93 \\
github       &  $0.87^\circ$& $1.03^{\circ}$ & 1.15 & $5.25^*$ &$18.00^*$ &18.82 \\
discogs    & $26^\circ$&$53.10^*$ &157.14 & $306^*$&$2160^*$ &2149.54 \\
discogs\_style &$0.48^*$& $0.48^*$ &14826.16 & $2380^\#$& $15600^*$ &16449.56 \\ 
delicious &  $853^\circ$&$1900^*$ & 2184.27 &--- &--- &---
\end{tabular}
\end{table}

\subsubsection{Butterfly peeling}
Figures~\ref{fig-peel-vert} and~\ref{fig-peel-edge} show the runtimes
over different wedge aggregation methods for vertex peeling and edge
peeling, respectively (the runtimes do not include the time for counting butterflies). We only report times for the datasets for which
finished within 5.5 hours.  We find that for vertex peeling,
aggregation by histogramming largely gives the best runtimes, while
for edge peeling, all of our aggregation methods give similar results.

We compare our parallel peeling times to our single-threaded peeling times and serial peeling times from
Sariy\"{u}ce and Pinar's~\cite{SaPi18} implementation, which we ran in
our environment and which are shown in Table~\ref{table-peel-serial}.
Compared to Sariy\"{u}ce and Pinar~\cite{SaPi18}, we achieve speedups
between 1.3--30696x for vertex peeling and between 3.4--7.0x for edge
peeling. Our speedups are highly variable because they depend heavily
on the peeling complexities and the number of empty buckets processed.
Our largest speedup of 30696x occurs for vertex peeling on
discogs\_style where we are able to efficiently skip over many empty buckets,
while the implementation of Sariy\"{u}ce and Pinar sequentially iterates over the empty buckets.

Moreover, comparing our parallel peeling times to their corresponding
single-threaded times, we achieve speedups between 1.0--10.7x for
vertex peeling and between 2.3--10.4x for edge peeling.
We did not
see self-relative parallel speedups for vertex peeling on
discogs\_style, because the total number of vertices peeled (383) was too
small.

\begin{figure*}[!t]
  \centering
\includegraphics[width=\textwidth, page=1]{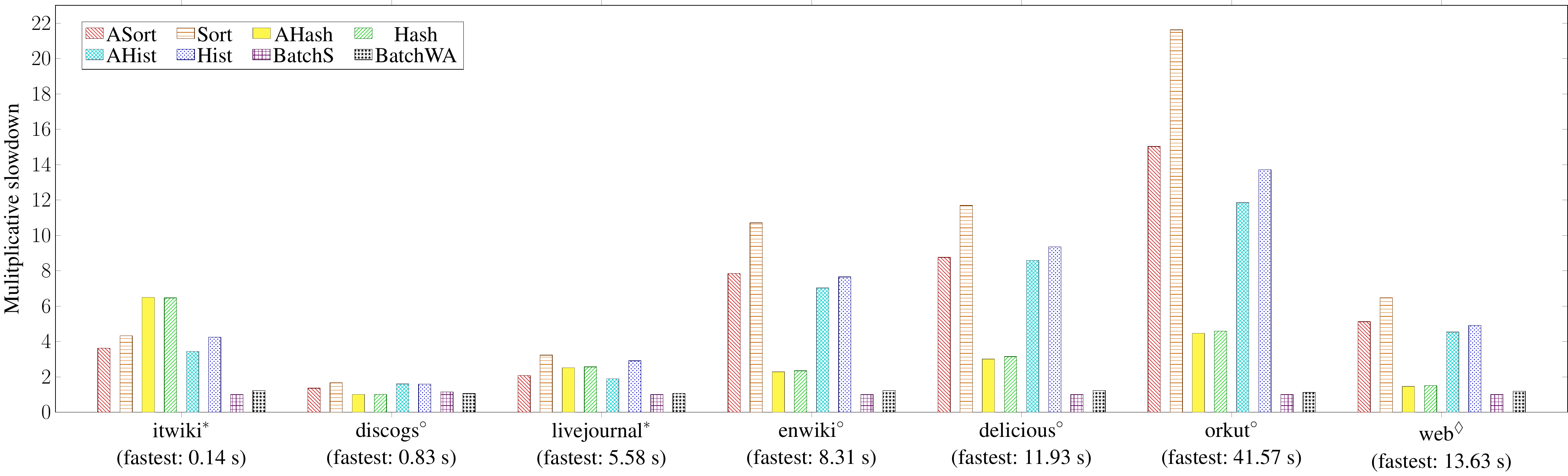}
  \caption{These are the parallel runtimes for butterfly counting per vertex (using the cache optimization), considering different wedge aggregation and butterfly aggregation methods. We consider the ranking that produces the fastest runtime for each graph; ${}^*$ refers to side ranking, ${}^\lozenge$ refers to degree ranking, and ${}^\circ$ refers to approximate degree ranking. All times are scaled by the fastest parallel time, as indicated in parentheses.}\label{inv-fig-count-vert}
\end{figure*}

\subsection{Cache optimization for butterfly counting}

We find that using Wang \emph{et al.}'s~\cite{Wang2019} cache optimization for total, per-vertex, and per-edge parallel butterfly counting gives speedups of up to 1.7x of our parallel butterfly counting algorithms without the cache optimization, considering the best aggregation and ranking methods for each case. 
We present the butterfly counting experiments with the cache optimization enabled in this section.
The trends are similar to the results without
the cache optimization enabled.
Note that the cache optimization does not always improve the performance of 
butterfly counting; for certain graphs, the best butterfly counting time is obtained 
without using the optimization.

\subsection{Butterfly counting}
\begin{sidewaystable}[ph!]
  \caption{These are runtimes in seconds for sequential butterfly counting from \framework (using the cache optimization), as well as runtimes from previous work. Note that PGD~\cite{AhmedNRDW17} is parallel, while the rest of the implementations are serial. Also, for the runtimes from our framework, we have noted the ranking used; ${}^*$ refers to side ranking, ${}^\#$ refers to approximate complement degeneracy ranking, ${}^\lozenge$ refers to degree ranking, and ${}^\circ$ refers to approximate degree ranking. 
We have also noted the wedge aggregation and butterfly aggregation methods used for the parallel runtimes; ${}^\star$ refers to hashing with atomic adds, ${}^\blacklozenge$ refers to histogramming for both aggregation methods, and ${}^\square$ refers to wedge-aware batching. The rest of the parallel runtimes were obtained using simple batching.}\label{inv-table-serial}
  \small

\centering
\begin{tabular}{@{}llllllllllll@{}}
\toprule
 &
\multicolumn{5}{c}{Total Counts}    &
\multicolumn{3}{c}{Per-Vertex Counts}  & \multicolumn{3}{c}{Per-Edge Counts}   \\ 
\cmidrule(lr){2-6}
\cmidrule(lr){7-9}
\cmidrule(lr){10-12}

Dataset              & \shortstack{\textsc{PB} \\ \vspace{\baselineskip}$T_{48\text{h}}$} & \shortstack{\textsc{PB} \\ \vspace{\baselineskip}$T_1$}  & \shortstack{Sanei-Mehri \\ et al.~\cite{SaSaTi18} \\ $T_1$}        & \shortstack{PGD~\cite{AhmedNRDW17}  \\ \vspace{\baselineskip} $T_{48\text{h}}$  }  & \shortstack{ESCAPE~\cite{Pinar2017} \\ \vspace{\baselineskip}$T_1$}   &\shortstack{\textsc{PB} \\ \vspace{\baselineskip}$T_{48\text{h}}$} & \shortstack{\textsc{PB} \\ \vspace{\baselineskip}$T_1$} & \shortstack{Sariy\"{u}ce and\\ Pinar~\cite{SaPi18} \\ $T_1$ } & \shortstack{\textsc{PB} \\ \vspace{\baselineskip}$T_{48\text{h}}$} & \shortstack{\textsc{PB} \\ \vspace{\baselineskip}$T_1$} & \shortstack{Sariy\"{u}ce and \\Pinar~\cite{SaPi18} \\ $T_1$}  \\ \midrule
itwiki                 & $0.10^{*}$& $1.22^\circ$  & 1.63&	1798.43&	4.97	&$0.14^*$&$1.46^*$&	6.06&	$0.41^{*\square}$&$4.77^*$ &	19314.87 \\
discogs              & $0.81^{\# \blacklozenge}$&$1.20^\lozenge$ &4.12 & 234.48 & 2.08 &$0.83^{\circ \star}$& $0.90^\#$ & 96.09 & $0.52^\circ$&$2.97^\circ$ & 1089.04\\
livejournal      &$3.83^{*}$& $35.12^*$ & 37.80 & $>$ 5.5 hrs & 139.06 &$5.58^*$& $38.91^*$ & 158.79 &$8.91^\lozenge$& $106.07^*$ & $>$ 5.5 hrs\\
enwiki  & $5.59^\circ$&$59.18^\circ$& 69.10 & $>$ 5.5 hrs & 151.63 & $8.31^\circ$&$54.71^\circ$ & 608.53 & $11.58^\circ$&$152.92^\circ$ & $>$ 5.5 hrs \\ 
delicious &$8.05^\circ$&  $82.43^\circ$& 162.00 & $>$ 5.5 hrs & 286.86 &$11.93^\circ$& $96.18^\circ$& 1027.12 & $15.98^\circ$&$210.70^\circ$  & $>$ 5.5 hrs\\ 
orkut        &$22.47^\#$ &$414.02^*$ &403.46 & $>$ 5.5 hrs & 1321.20 &$41.57^\circ$& $557.10^*$ & 2841.27 & $89.52^\circ$&$1141.05^*$ & $>$ 5.5 hrs\\
web       & $8.08^\circ$& $61.68^\circ$&4340 & $>$ 5.5 hrs & 172.77 & $13.63^\lozenge$&$77.82^\circ$ & $>$ 5.5 hrs &$14.08^\lozenge$ &$154.35^\circ$ & $>$ 5.5 hrs
\end{tabular}
\end{sidewaystable}
Figures~\ref{inv-fig-count-vert},~\ref{inv-fig-count-edge},
and~\ref{inv-fig-count-tot} show the runtimes over different
aggregation methods for counting per vertex, per edge, and in total,
respectively, for the seven datasets in Figure~\ref{table-graphs} with
sequential counting times exceeding 1 second.  The times are
normalized to the fastest combination of aggregation and ranking
methods for each dataset.  Considering different wedge and butterfly
aggregation methods, we again find that simple batching and
wedge-aware batching give the best runtimes for butterfly counting in
general. Of the work-efficient aggregation methods, hashing with
atomic adds is often the fastest, particularly for larger graphs due
to increased parallelism.

\begin{figure*}[!t] \centering
  \includegraphics[width=\textwidth, page=2]{images_inverse/fig1.pdf}
  \caption{These are the parallel runtimes for butterfly counting per edge (using the cache optimization), considering different wedge aggregation and butterfly aggregation methods. We consider the ranking that produces the fastest runtime for each graph; ${}^*$ refers to side ranking, ${}^\lozenge$ refers to degree ranking, and ${}^\circ$ refers to approximate degree ranking. All times are scaled by the fastest parallel time, as indicated in parentheses.}\label{inv-fig-count-edge}
\end{figure*}
\begin{figure*}[!t]
  \centering
\includegraphics[width=\textwidth, page=10]{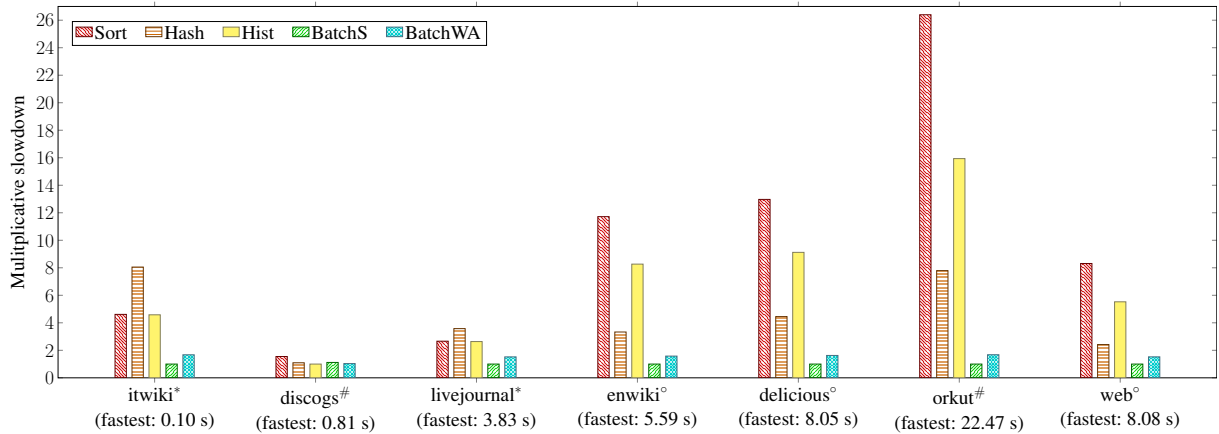}
  \caption{These are the parallel runtimes for butterfly counting in total (using the cache optimization), considering different wedge aggregation methods (butterfly aggregation does not apply). We consider the ranking that produces the fastest runtime for each graph; ${}^*$ refers to side ranking, ${}^\#$ refers to approximate complement degeneracy ranking, and ${}^\circ$ refers to approximate degree ranking. All times are scaled by the fastest parallel time, as parentheses.}\label{inv-fig-count-tot}
\end{figure*}

Also, Figure~\ref{inv-table-serial} shows the runtimes for the
sequential counting implementations using the cache optimization, as
well as runtimes for implementations from previous works, all of which
were tested on the same machine.  Our configurations achieve parallel
speedups of between 5.7--18.0x over the best sequential implementations
for large enough graphs.\footnote{By ``large enough,'' we mean graphs
  for which the sequential counting algorithms take more than 2
  seconds to complete.} We improve upon PGD~\cite{AhmedNRDW17} by 289.5--5170x.

\begin{figure*}[!t]
 \begin{minipage}{.44\textwidth}
 \vspace{\baselineskip}
\centering
\includegraphics[width=\textwidth, page=8]{images_inverse/fig1.pdf}
  \caption{These are the runtimes for butterfly counting per vertex on livejournal using side ranking and using the cache optimization, over different numbers of threads. The self-relative speedups are between 14.1--37.4x.}\label{inv-fig-self-vert}
   \end{minipage}\hspace{1cm}
 \begin{minipage}{.44\textwidth}
   \vspace{\baselineskip}
  \centering
\includegraphics[width=\textwidth, page=7]{images_inverse/fig1.pdf}
  \caption{These are the runtimes for butterfly counting per edge on livejournal using approximate degree ranking and using the cache optimization, over different numbers of threads. The self-relative speedups are between 9.8--38.9x.}\label{inv-fig-self-edge}
   \end{minipage}
\end{figure*}

Figures \ref{inv-fig-self-vert} and \ref{inv-fig-self-edge} show our
self-relative speedups on livejournal for per-vertex and per-edge counting,
respectively, using the cache optimization. Across all rankings, on
livejournal, we achieve self-relative speedups between 14.1--37.4x for
per-vertex counting and between 9.8--38.9x for per-edge counting.

\subsubsection{Ranking}
Figure \ref{inv-fig-rank} shows the runtimes for butterfly counting
per vertex for different rankings using the simple batching method
with the cache optimization. The times are normalized to the time for
the fastest ranking for each dataset.  Side ordering outperforms the
other rankings for itwiki and livejournal, while approximate
complement degeneracy, approximate degree, and degree orderings
outperform side ordering for discogs, enwiki, delicious, orkut, and
web.

\begin{figure*}[!t]
  \centering
\includegraphics[width=\textwidth, page=6]{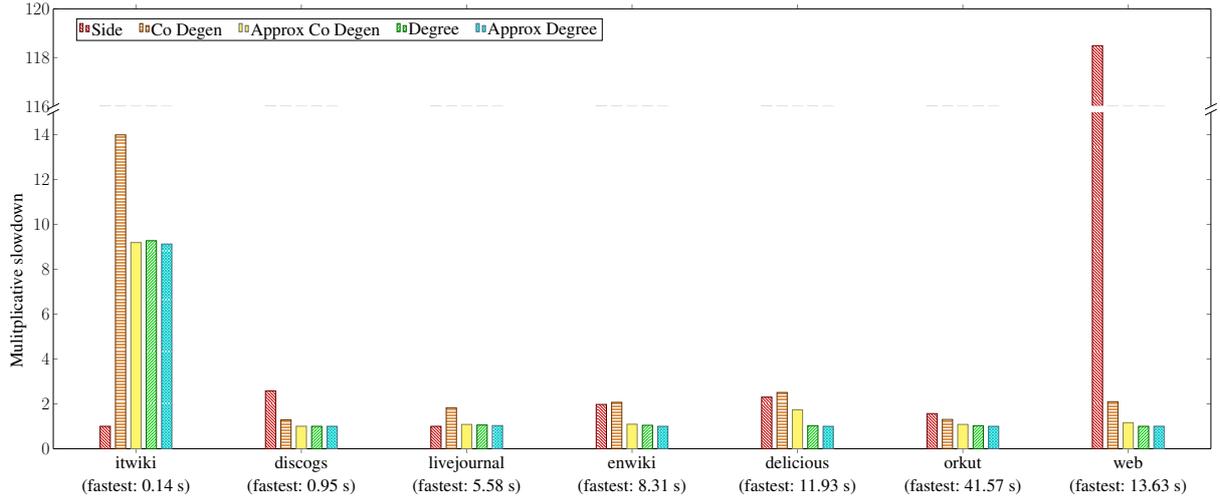}
  \caption{These are the parallel runtimes for butterfly counting per vertex (using the cache optimization), considering different rankings. We use simple batching as our wedge aggregation method. All times are scaled by the fastest parallel time, as indicated in parentheses. Moreover, the time taken to rank each graph is included in the runtimes.}\label{inv-fig-rank}
\end{figure*}

\subsubsection{Approximate counting}
Figure \ref{fig-approx} shows runtimes with the cache optimization for
both colorful sparsification and edge sparsification on orkut, as well
as the corresponding single-threaded times. We see that over a variety
of probabilities $p$ we achieve self-relative speedups between
5.4--18.9x.

\begin{figure*}[!t]
  \centering
\includegraphics[width=.4\textwidth, page=11]{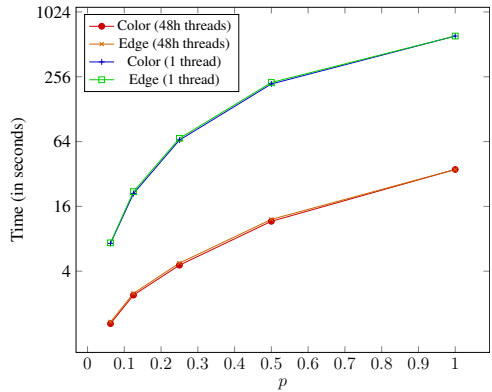}
  \caption{These are the runtimes for colorful sparsification and edge sparsification over different probabilities $p$ (using the cache optimization). We considered both the runtimes on 48 cores hyperthreaded and on a single thread. We ran these algorithms on orkut, using simple batch aggregation and side ranking. }\label{inv-fig-approx}
\end{figure*}

\section{Related Work}

There have been several sequential algorithms designed for butterfly
counting and peeling. Wang \textit{et al.}~\cite{WaFuCh14} propose the
first algorithm for butterfly counting over vertices in $\BigO{\sum_{v
    \in V} \text{deg}(v)^2}$ work, and Sanei-Mehri \textit{et
  al.}~\cite{SaSaTi18} introduce a practical speedup by choosing the
vertex partition with fewer wedges to iterate over. Sanei-Mehri
\textit{et al.}~\cite{SaSaTi18} also introduce approximate counting
algorithms based on sampling and graph sparsification. Later, Zhu
\emph{et al.}~\cite{Zhu2018} present a sequential algorithm for
counting over vertices based on ordering vertices (although they do
not specify which order) in $\BigO{\sum_{v \in V} \text{deg}(v)^2}$
work. They extend their algorithm to the external-memory setting and
also design sampling algorithms.
Chiba and Nishizeki's~\cite{ChNi85} original work on
counting $4$-cycles in general graphs 
applies directly to butterfly counting in bipartite graphs and has a
better work complexity. Chiba and Nishizeki~\cite{ChNi85} use a
ranking algorithm that counts the total number of $4$-cycles in a
graph in $\BigO{\alpha m}$ work, where $\alpha$ is the arboricity of
the graph. While they only give a total count in their work, their
algorithm can easily be extended to obtain counts per-vertex and
per-edge in the same time complexity.  Butterfly counting using degree
ordering was also described by Xia~\cite{Xia2016}.  Sariy\"{u}ce and
Pinar~\cite{SaPi18} introduce algorithms for butterfly counting over
edges, which similarly takes $\BigO{\sum_{v \in V} \text{deg}(v)^2}$
work. Zou \cite{Zou16} develop the first algorithm for butterfly
peeling per edge, with $\BigO{m^2 + m \cdot \text{max-b}_e}$ work. 
Sariy\"{u}ce and Pinar~\cite{SaPi18} give algorithms for butterfly
peeling over vertices and over edges, which take $\BigO{\text{max-b}_v
  +\sum_{u \in U} \text{deg}(u)^2}$ work and $\BigO{ \text{max-b}_e
  + \sum_{u \in U} \sum_{v_1, v_2 \in N(u)} \allowbreak
  \text{max}(\text{deg}(v_1), \text{deg}(v_2))}$ work, respectively.
Very recently, Wang \textit{et al.}~\cite{wang2020efficient} present a sequential algorithm for butterfly edge peeling that improves over the algorithm by Sariy\"{u}ce and Pinar~\cite{SaPi18} in practice, and uses an index that takes $\BigO{\alpha m}$ space.

In terms of prior work on parallelizing these algorithms, Wang
\textit{et al.}~\cite{WaFuCh14} implement a distributed algorithm
using MPI that partitions the vertices across processors, and each
processor sequentially counts the number of butterflies for vertices
in its partition. They also implement a MapReduce algorithm, but show
that it is less efficient than their MPI-based algorithm. The largest
graph they report parallel times for is the \emph{deli} graph with 140
million edges and $1.8 \times 10^{10}$ butterflies; this is the
delicious tag-item graph in KONECT~\cite{Kunegis13}. On this graph,
they take 110 seconds on 16 nodes, whereas on the same graph we take
5.17 seconds on 16 cores.

Very recently, and independently of our work, Wang \emph{et
  al.}~\cite{Wang2019} describe an algorithm for butterfly counting
using degree ordering, as done in Chiba and Nishizeki~\cite{ChNi85},
and also propose a cache optimization for wedge retrieval.  Their
parallel algorithm is our parallel algorithm with simple batching for
wedge aggregation, except they manually schedule the threads, while we
use the Cilk scheduler. They use their algorithm to speed up
approximate butterfly counting, and also propose an external-memory
variant.

There has been recent work on algorithms for finding subgraphs of size
4 or 5~\cite{Hocevar2014,Elenberg2016,Pinar2017,AhmedNRDW17,Dave2017},
which can be used for butterfly counting as a special case.  Marcus
and Shavitt~\cite{Marcus2010} design a sequential algorithm for
finding subgraphs of up to size 4. Hocevar and
Demsar~\cite{Hocevar2014} present a sequential algorithm for counting
subgraphs of up to size 5.  Pinar \emph{et al.}~\cite{Pinar2017} also
present an algorithm for counting subgraphs of up to size 5 based on
degree ordering as done in Chiba and Nishizeki~\cite{ChNi85}.
Elenberg \emph{et al.}~\cite{Elenberg2016}
present a distributed algorithm for counting subgraphs of size
4. Ahmed \emph{et al.}~\cite{AhmedNRDW17} present the PGD
shared-memory framework for counting subgraphs of up to size 4.
The work of their algorithm for counting $4$-cycles is
$\BigO{\sum_{(u,v)\in E}(\text{deg}(v)+\sum_{u' \in
    N(v)}\text{deg}(u'))}$, which is higher than that of our
algorithms.  Aberger et al.~\cite{AbergerLTNOR17} design the
EmptyHeaded framework for parallel subgraph finding based on
worst-case optimal join algorithms~\cite{Ngo2018}.  For butterfly
counting, their approach takes quadratic work. We were unable to
obtain runtimes for EmptyHeaded because it ran out of memory in our
environment.  Dave \emph{et al.}~\cite{Dave2017} present a parallel
method for counting subgraphs of up to size 5 local to each edge. For
counting $4$-cycles, their algorithm is the same as PGD, which we
compare with.  There have also been various methods for approximating
subgraph counts via
sampling~\cite{Jha2015,AhmedWR16,AhmedWR16b,Wang2018,Jain2017,BressanCKLP18,Rossi0A19,Mawhirter2018}.
Finally, there has also been significant work for the past decade on
parallel triangle counting algorithms
(e.g.,~\cite{ShunT2015,Azad2015,Suri2011,Arifuzzaman2013,Park2013,Park14,Cohen2009,Pagh2012,Tangwongsan2013,Tsourakakis2009,Makkar17,Kolda14,Green2015,Green2014,Hu2018,DhBlSh18,Han2018,Zhang2019}
and papers from the annual GraphChallenge~\cite{GraphChallenge}, among
many others).

\section{Conclusion}
We have designed a framework \framework that provides efficient
parallel algorithms for butterfly counting (global, per-vertex, and
per-edge) and peeling (by vertex and by edge).
We have also shown strong theoretical bounds in terms of work and span
for these algorithms. The \framework framework is
built with modular components that can be combined for
practical efficiency. \framework outperforms the best existing
parallel butterfly counting implementations, and we outperform the
fastest sequential baseline by up to 13.6x for butterfly counting and
by up to several orders of magnitude for butterfly peeling.

\section*{Acknowledgements}
We thank Laxman Dhulipala for helpful discussions about bucketing.
This research was supported by NSF Graduate Research Fellowship
\#1122374, DOE Early Career Award \#DE-SC0018947, NSF CAREER Award
\#CCF-1845763, DARPA SDH Award \#HR0011-18-3-0007, and Applications
Driving Architectures (ADA) Research Center, a JUMP Center
co-sponsored by SRC and DARPA.

\bibliographystyle{plain}
\bibliography{references}

\end{document}